\title{Geometric Network Creation Games\\{\small (full version)}}
\author{Davide Bil{\`{o}}\thanks{Department of Humanities and Social Sciences, University of Sassari, Italy} \and Tobias Friedrich\thanks{Hasso Plattner Institute, University of Potsdam, Germany, \texttt{firstname.lastname@hpi.de}} \and Pascal Lenzner\footnotemark[2] \and Anna Melnichenko\footnotemark[2] }
\renewcommand{\arraystretch}{1.3} %More space between rows in tables
\pgfplotsset{compat=1.10}
\newcommand{\GNCG}{GNCG\xspace} %general NCG with arbitrary weights
\newcommand{\MNCG}{M--GNCG\xspace} %metric NCG
\newcommand{\RDNCG}{$\mathbf{R}^d$--GNCG\xspace} %points in the plane
\newcommand{\TNCG}{T--GNCG\xspace} %tree-metric NCG
\newcommand{\OTNCG}{1-2--GNCG\xspace} %1-2-graph NCG
\newcommand{\OPT}{OPT\xspace} %social optimum
\newcommand{\sprofile}{\mathbf{s}} %strategy profile s
\newcommand{\AoE}{AE} %Add-only Equilibrium
\newcommand{\PoA}{PoA\xspace}
\newtheorem{theorem}{Theorem}
\newtheorem{lemma}{Lemma}
\newtheorem{corollary}{Corollary}
\newtheorem{conjecture}{Conjecture}
\begin{document}

\maketitle
\begin{abstract}
Network Creation Games are a well-known approach for explaining and analyzing the structure, quality and dynamics of real-world networks like the Internet and other infrastructure networks which evolved via the interaction of selfish agents without a central authority. In these games selfish agents which correspond to nodes in a network strategically buy incident edges to improve their centrality. However, past research on these games has only considered the creation of networks with unit-weight edges. In practice, e.g. when constructing a fiber-optic network, the choice of which nodes to connect and also the induced price for a link crucially depends on the distance between the involved nodes and such settings can be modeled via edge-weighted graphs. 
We incorporate arbitrary edge weights by generalizing the well-known model by Fabrikant et al.~[PODC'03] to edge-weighted host graphs and focus on the geometric setting where the weights are induced by the distances in some metric space.  
In stark contrast to the state-of-the-art for the unit-weight version, where the Price of Anarchy is conjectured to be constant and where resolving this is a major open problem, we prove a tight non-constant bound on the Price of Anarchy for the metric version and a slightly weaker upper bound for the non-metric case. Moreover, we analyze the existence of equilibria, the computational hardness and the game dynamics for several natural metrics. The model we propose can be seen as the game-theoretic analogue of a variant of the classical Network Design Problem. Thus, low-cost equilibria of our game correspond to decentralized and stable approximations of the optimum network design. 
\end{abstract}

\section{Introduction}
Designing efficient networks is a core topic in Computer Science and Operations Research and the study of classical combinatorial optimization problems like the Minimum Spanning Tree Problem~\cite{GP85}, the Steiner Tree Problem~\cite{HRW92} and the Network Design Problem~\cite{JLK78,GJ02,MW84} has a significant impact on these research fields. However, all these problems assume that there is a central authority designing the respective network. In practice, many important infrastructure networks like the physical Internet, the road network and the electricity network are the outcome of a distributed and decentralized design process by many interacting agents. This observation kindled the study of game-theoretic models for network formation by selfish agents. In these models the constructed network is determined by the agents' strategies and the focus is on equilibrium networks, where no agent wants to locally change the network~\cite{Pap01}. The core research question for such models is to quantify the loss of social welfare due to the lack of a central designer and due to the agents' selfishness, i.e., comparing the social cost of the worst possible equilibrium network with the social optimum network~\cite{KP99}. Moreover, also the study of the computational hardness of finding the best possible strategy of an agent and the analysis of the convergence properties of the induced sequential processes are key questions in the field. 

Currently there are two classes of network formation games: variants of the Network Creation Game (NCG)~\cite{Fab03} and Network Design Games (NDG), e.g., \cite{ADKTWR, ADTW08}. In the former games, the selfish agents build incident edges to a subset of other agents to be as central as possible in the constructed connected network. Hence, agents face a trade-off between costs for building and maintaining edges and the service cost for using the network. Here centrality is measured in the created unweighted network which consists of all edges built by the agents and distances are measured as hop-distances. Moreover, every edge has the same fixed price $\alpha > 0$ which is paid by the building agent. Hence, NCGs can be understood as games where a complete unweighted network is given as \emph{host graph} and agents, corresponding to nodes in the host graph, strategically select incident edges in the host graph for the price of $\alpha$ per edge. The constructed network is the sub-network of the host graph which only contains selected edges.  

In contrast, in NDGs a given network with weighted edges serves as the host graph and every agent has a pair of terminal nodes in the host graph she wants to connect. For this, agents select a connecting path in the host graph and pay a cost proportional to the length of the path for its usage. If edges are used by several agents then the cost of the edge is split among these agents.     

Thus in NCGs the distances between all pairs of nodes are important, whereas in NDGs the focus is on simply connecting the terminal pairs. Moreover, the former assume a complete unweighted host graph, whereas the latter assume a weighted not necessarily complete host graph. Hence, NCGs are suitable to model the formation of social networks or the AS-level graph of the Internet, where using the hop-distance is more natural and where agents want to be central, i.e., close to all other agents. But, since NCGs crucially rely on an unweighted host graph, these models cannot be used to investigate the creation of physical communication networks, where edges, e.g., fiber-optic cables, have lengths. 
NDGs are well-equipped to model the creation of physical communication networks between given terminal pairs, e.g., a network connecting many clients to a server or access point, where only connectivity matters. However, NDGs are not suited for studying settings where the agents are interested in communicating with all other agents and where agents are restricted to buy only incident edges.   

To overcome these shortcomings of NCGs and NDGs, we propose and investigate a model which is a generalization of NCGs but which also shares some aspects with NDGs and therefore allows to model the creation of physical communication networks where the goal is to achieve an efficient communication between all pairs of nodes at low cost. That is, we are interested in the decentralized creation of \emph{edge-weighted} networks which minimize the pairwise distances between agents and the total cost of all built edges. This can be seen as the game-theoretic analogue of the well-known Network Design Problem~\cite{JLK78}, where a weighted network and budgets for buying edges and the total routing cost between all pairs are given and the goal is to select a sub-network which respects both budgets. For this, we consider a variant of the NCG, where the given host graph is an arbitrary weighted graph and the prices for buying and using an edge are proportional to its weight. For example, with this we can model the realistic geometric setting where agents have a position in some metric space and the given weighted host graph uses the distance between the positions of the involved agents as edge weights. To the best of our knowledge, this is the first variant of a NCG with weighted edges. 

\subsection{Model and Notation} 
We consider a generalization of the well-known Network Creation Game by Fabrikant et al.~\cite{Fab03}. In our game, called the \emph{Generalized Network Creation Game (\GNCG)}, we consider a given host graph $H = (V,E(H))$, which is a complete undirected weighted graph on $n$ nodes $v_1,\dots,v_n$ with arbitrary non-negative edge weights $w: E(H) \rightarrow \mathbb{R}^+$. Since edges are undirected, we will denote any edge $\{u,v\} \in E(H)$ as $(u,v)$ and we assume $(u,v) = (v,u) = \{u,v\}$. 

Every node of $H$ corresponds to a selfish agent who wants to participate in the network formation. To achieve this, agents strategically decide which subset of incident edges to buy, i.e. a strategy $S_u$ of an agent $u$ is any node subset of $V\setminus\{u\}$ towards which agent $u$ wants to create edges. If $v \in S_u$, then we call agent $u$ the \emph{owner} of the undirected edge $(u,v)$ and $u$ has to pay the full edge price\footnote{If $v \in S_u$ and $u \in S_v$ then both agents have to pay the full edge price. However, in this case one of the agents could improve on her current situation in the network by not buying the edge $(u,v)$, which implies that in any equilibrium or in the social optimum network every edge has exactly one owner.}. 

We assume that the edge price of any edge $(u,v)$ is proportional to its weight $w(u,v)$. In particular, we assume that the edge price for any edge $(u,v)$ is $\alpha \cdot w(u,v)$, where $\alpha > 0$ is a fixed parameter of the game which allows to model different trade-offs between the cost for buying and for using edges. 
 
Let $\sprofile = (S_{v_1},\dots,S_{v_n})$ be any \emph{strategy profile}, which is any vector of strategies of all $n$ agents. The strategy profile $\sprofile$ uniquely determines a subgraph $G(\sprofile) = (V,E(\sprofile))$ of the host graph $H = (V,E(H))$, where $E(\sprofile)=\{(u,v) \mid u \in V, v \in S_u\}$. 

Let $d_G(u,v)$ be the distance between two nodes $u$ and $v$ in the network $G$, which is equal to the total weight of the shortest path between $u$ and $v$, or $+\infty$ if such a path does not exist. To simplify the notation we will use $d_G(u,V):=\sum_{v\in V}{d_G(u,v)}$ and $w(u,S_u):=\sum_{v\in S_u}{w(u,v)}$, where $d_G(u,V)$ is the \emph{distance cost} and  $\alpha\cdot w(u,S_u)$ is the \emph{edge cost} of the agent $u$. By $G-(u,v)$ (or $G+(u,v)$) we denote a network $G$ where edge $(u,v)$ is removed (is added, respectively). 

Given any strategy profile $\sprofile$ and its corresponding network $G(\sprofile)$, then the \emph{cost} of agent $u$ in $G(\mathbf{s})$ is defined as 
$$cost(u,G(\sprofile))=\alpha \cdot w(u,S_u) +d_{G(\sprofile)}(u,V).$$ 
The \emph{social cost} of network $G(\sprofile)$, denoted $cost(G(\sprofile))$ is defined as the sum of the cost of all agents, i.e., $cost(G(\sprofile)) = \sum_{u \in V} cost(u,G(\sprofile))$. 

For any host graph $H$, we say that the \emph{social optimum subgraph} $\OPT$ of $H$ is the network $G(\mathbf{s^*}) = (V,E(\mathbf{s^*}))$ which minimizes $cost(G(\mathbf{s^*}))$ among all possible strategy profiles. Thus, $OPT$ minimizes  
$\alpha \cdot \sum_{(u,v) \in E(\mathbf{s^*})} w(u,v) + \sum_{u\in V}d_{G(\mathbf{s^*})}(u,V)$.

We say that a strategy change from $S_u$ to $S_u'$ is an \emph{improving move} for agent $u$, if $cost(u,G(\sprofile))$ $< cost(u,G(\mathbf{s'}))$, where $\mathbf{s'}$ is identical to $\sprofile$ except for agent $u$'s strategy, which is $S_u'$ instead of $S_u$.  
If there is no improving move for agent $u$ with strategy $S_u$ in $\sprofile$, then we say that $S_u$ is agent $u$'s \emph{best response}. Any strategy change towards a best response strategy is called a \emph{best response move}. A sequence of best response moves which starts and ends with the same strategy vector is called a \textit{best response cycle}. If any sequence of improving moves is finite, then the game has the \emph{finite improvement property (FIP)} which is equivalent to the game being a potential game~\cite{MS96}. The existence of a best response cycle therefore proves that the FIP is violated and thus that the game is not a potential game.  

We say that a network $G(\sprofile)$ is in \emph{pure Nash Equilibrium (NE)}, if no agent in $G(\sprofile)$ has an improving move.
A network $G(\sprofile)$ is in \emph{Greedy Equilibrium (GE)}~\cite{L12} if no agent can improve by buying, swapping or deleting a single edge, where a swap is the combination of deleting an incident edge and buying another one. Moreover, $G(\sprofile)$ is in \emph{Add-only Equilibrium (\AoE)}, if no agent can improve by buying a single incident edge. It directly follows that any network in NE is also in GE and any network in GE is also in \AoE. Additionally, we say that $G(\sprofile)$ is in \emph{$\beta$-approximate NE ($\beta$-NE)} if no agent $u$ can change her strategy to decrease her cost to less than $1/\beta \cdot cost(u,G(\sprofile))$. A \emph{$\beta$-approximate GE ($\beta$-GE)} is defined analogously. 

We measure the impact of selfishness on the quality of the created networks via the \emph{Price of Anarchy}~\cite{KP99}, which for our model is the maximum over the social cost ratios of any NE network and its corresponding social optimum network OPT.

\paragraph{Model Variants}
Besides the \GNCG, where the game is played on a complete host graph $H$ with arbitrary non-negative edge weights, we also consider several interesting special cases (See Fig.~\ref{fig:model_relation} for an overview). 
\begin{figure}[h!]
\center{\includegraphics[width=0.9\textwidth]{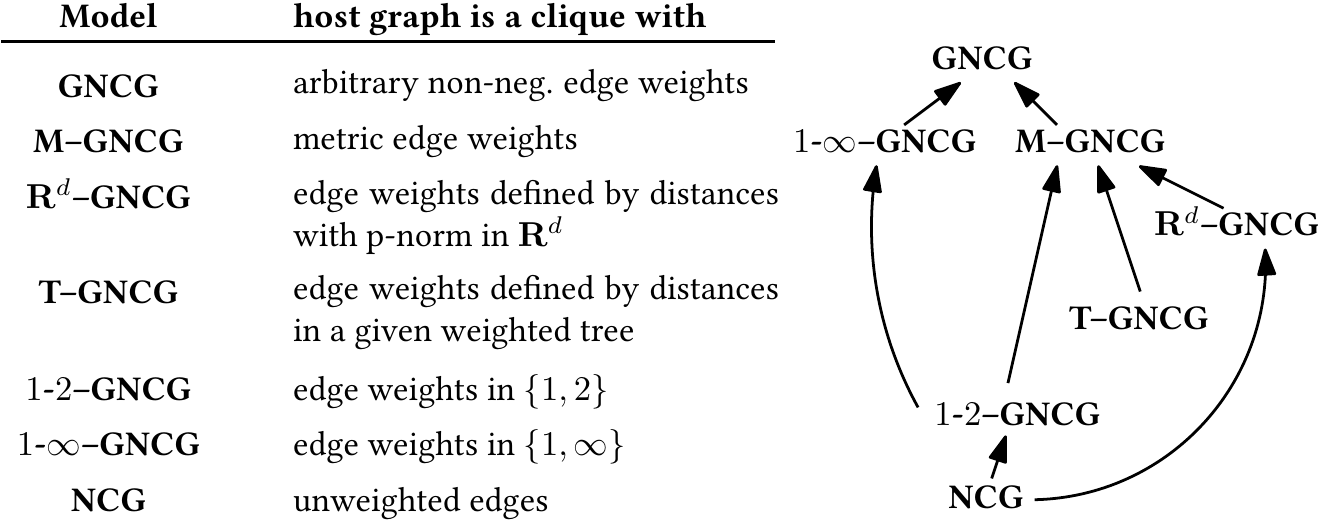}}
\caption{Left: Model overview. Right: Model relations. Arrows point from special case to more general model.}\label{fig:model_relation}
\end{figure}
In the \emph{metric \GNCG}~(\MNCG) the edge weights of $H$ satisfy the triangle inequality. Besides the general metric version, we consider three versions where the edge weights of $H$ are defined by specific metrics. In the simplest case, the \OTNCG, the edge weights of $H$ are restricted to the set $\{1,2\}$. We also consider the variant where the metric edge weights of $H$ are derived from the shortest path distances in a given weighted tree, the \TNCG. Finally, we consider the variant \RDNCG, where the agents are points in $\mathbf{R}^d$ and the edge weights of $H$ correspond to their $p$-norm distances. The original NCG~\cite{Fab03} where $H$ is an unweighted clique, is the most restricted special case of the \MNCG. In the literature a non-metric special case of the \GNCG, where the edge weights are restricted to the set $\{1,\infty\}$ was proposed~\cite{De09}. We call this variant the $1$-$\infty$--GNCG. 

\subsection{Related Work}
There is a huge body of literature both on variants of the Network Creation Game and on Network Design Games and it is impossible to give a full account. Instead, we focus on proposed models which share core features of our model and discuss how they are related to our approach. See also Table~\ref{table:overviewResults}.

The Network Creation Game (NCG) was proposed by Fabrikant et al.~\cite{Fab03} and can be seen as a simplified variant of the connection game by Jackson \& Wolinsky~\cite{JW96}. Due to its simplicity, the NCG became the basis of almost all later modes (including our model). For the NCG researchers have focused mostly on analyzing the PoA and on studying the computational hardness and dynamic properties. A long line of research, e.g.,~\cite{Fab03,De07,Al06,MS10,MMM13,AM17,BL18,AM18}, has established that the PoA of the NCG is constant for almost all $\alpha >0$ and it is widely conjectured that this holds for all $\alpha$. Fabrikant et al.~\cite{Fab03} proved a general upper bound of $\mathcal{O}(\sqrt{\alpha})$, which is still the best known bound which only depends on $\alpha$, and the best known general upper bound as a function of $n$ is $o(n^{\varepsilon})$, for any $\varepsilon >0$, and is due to Demaine et al.~\cite{De07}. It has been shown that computing a best response is NP-hard~\cite{Fab03} and that this holds for many variants of the NCG, e.g.~\cite{MS12,BGLP14,CL15,BGLP16,CLMM16,CLMM17}. However, also restricted variants with efficient best response computation exist, e.g.,~\cite{BG00,ADHL10,L12,BGLP14,Fried17}. Regarding the dynamic properties, it has been shown~\cite{L11,KL13} that many NCG variants do not have the finite improvement property~\cite{MS96}, which states that any sequence of improving strategy changes by agents must be finite. Hence, natural convergence protocols like iterated best response dynamics have no convergence guarantee.

A weighted version of the NCG has been proposed by Albers et al.~\cite{Al06}. In contrast to our model, they consider a version with a specific amount of traffic between each pair of agents but distances are still measured by counting hops. Much closer to our model is the work by Demaine et al.~\cite{De09}, where a NCG on a general unweighted host graph is introduced. This corresponds to the special case of our model where only the edge weights $1$ and $\infty$ are allowed. Edge weight $\infty$ encodes that a particular edge cannot be bought. The authors prove a general upper bound on the PoA of $\mathcal{O}(\sqrt{\alpha})$ and for $\alpha \geq n$ they show that the PoA is in $\Omega\left(\min\left\{\sqrt{\alpha/n},n^2/\alpha\right\}\right)$ and at most $\min\left\{\mathcal{O}\left(\sqrt{n}\right),n^2/\alpha\right\}$ which yields a tight non-constant PoA bound if $\alpha \in \omega(n^{1.5})$ and $\alpha \in o(n^2)$ holds. The highest tight PoA bound as a function of $\alpha$ therefore is $\Theta(\sqrt[5]{\alpha})$ and is achieved for $\alpha = n^{\frac{5}{3}}$. Unfortunately, the proof techniques in \cite{De09} crucially rely on edge weights in $\{1,\infty\}$ and can therefore not be carried over to our model. However, their lower bound construction yields a lower bound of $\Omega(\sqrt[5]{\alpha})$ for the general non-metric case of our model. Also related is the work of Bilò et al.~\cite{Bilo12} who investigated the max-version of the NCG~\cite{De07}, where agents try to minimize their maximum distance, on a general unweighted host graph.

One of the distinctive features of our model is the non-uniform edge price. A few other models with this feature have been proposed, e.g.,~\cite{MMO14, MMO15,CLMM17}, but they all use unit-weight edges. In the model by Cord-Landwehr et al.~\cite{CMadH14} agents can choose different quality levels of an edge for different prices, i.e., the paid price influences the edge length. With this, the model is incomparable to our approach.

Also related are network formation games where not centrality but some other property is the goal of each agent. There are games where agents simply want to be connected to all other agents, e.g.~\cite{BG00,Kli11,Goyal16}. Among them, the work by Eidenbenz et al.~\cite{EKZ06} is closely related to our work. In their wireline strong connectivity game agents are points in the Euclidian plane who strategically buy incident edges to create a connected network. The edge price equals the length of the edge. This is similar to our model in the Euclidian plane with $\alpha=1$ but the focus on connectivity changes the game completely. Another related geometric game was proposed by Moscibroda et al.~\cite{MSW06}. Also there the agents are points in some metric space but agents pay a fixed price for each edge and try to minimize the total stretch towards all other agents. Guly{\'a}s et al.~\cite{Gul15} considered a network formation game in the hyperbolic plane where agents strive for maximum navigability. This is also a geometric model but drastically different from our approach.  

Network Design Games have been proposed in~\cite{ADKTWR, ADTW08}. Their most important feature is that they are potential games~\cite{MS96}, which already shows the contrast to Network Creation Games. Interestingly, Hoefer \& Krysta~\cite{HK05} proposed and analyzed a geometric version.

There are many classical optimization problems related to network design, e.g. see the survey by Magnanti \& Wong~\cite{MW84}. Many of them are NP-complete, e.g. all the problems labeled ``ND'' in~\cite{GJ02}. Our model is closely related to the Network Design Problem~\cite{JLK78} and the Optimum Communication Spanning Tree Problem~(ND7 in \cite{GJ02}). In particular, finding the social optimum network corresponds to a variant of the Network Design Problem, where, instead of having separate budgets for buying edges and for the routing cost, the sum of edge costs and routing costs, i.e., the total distance between all pairs of nodes, is to be minimized. Hence we strongly suspect that computing the social optimum in all versions of our model, with the \OTNCG and the \TNCG as exceptions, is NP-hard.  
%\newpage
%\begin{turn}{90}
%\begin{minipage}{\linewidth}
\begin{landscape}
\begin{table*}[!ht]%
\caption{Overview of our results and comparison with related work.}
\label{table:overviewResults}
	\begin{center}
		\renewcommand{\arraystretch}{1.5}
		\begin{tabular}{@{}llllll@{}}\toprule
			\textbf{Model} &&\textbf{PoA}  &  \textbf{Complexity} &  \textbf{FIP} & \textbf{Equilibria} \\
			\hline
			\textbf{NCG$^\ast$} 				& &$\mathcal{O}(\sqrt{\alpha})$~\cite{Fab03}, $o(n^\epsilon)$~\cite{De07}	& BR NP-hard~\cite{Fab03}			& no~\cite{KL13}			& NE exists~\cite{Fab03} \\
			\cmidrule{1-6}
			\textbf{1-$\infty$--GNCG$^\ast$}	& & $\Theta\left(\sqrt[5]{\alpha}\right)$~\cite{De09}						& BR NP-hard~(Cor.~\ref{cor_special_cases})		& no~(Cor.~\ref{cor_special_cases})		& ? \\
			\cmidrule{1-6}
			\multirow{5}{*}{\textbf{\OTNCG}} 		& $\alpha <\frac{1}{2}$ 				& = 1 (Thm.~\ref{th:1_2_graph_PoA_is_1})										& \multirow{4}{*}{BR NP-hard (Cor.~\ref{cor_special_cases}),}  	& \multirow{5}{*}{no~(Cor.~\ref{cor_special_cases})}	& NE exists (Thm.~\ref{th:1_2_graph_PoA_is_1}) \\
												& $\frac{1}{2}\leq \alpha < 1$ 						& $=\frac{3}{\alpha+2}$ (Thm.~\ref{thm:LB_PoA_1_2_graph_alpha_1}+\ref{th:1_2_graph_UB_PoA_alpha_between_one_half_and_1}) & \multirow{4}{*}{Dec. NE NP-hard (Thm.~\ref{thm:1_2_graph_des_probl_hard})} && \multirow{2}{*}{NE exists (Thm.~\ref{thm_1-2_graph_alpha_1_NE})}\\
												& $\alpha = 1$ 			& = $\frac{3}{2}$ (Thm.~\ref{thm:LB_PoA_1_2_graph_alpha_1}+ \ref{thm:metric_PoA})	&&& \\
												& $1 < \alpha < 3$ 						& \multirow{2}{*}{$\mathcal{O}(\sqrt{\alpha})$ (Thm.~\ref{thm:1_2_graph_UB_PoA_sqrt_alpha})} &&& $3(\alpha+1)$-NE (Cor.~\ref{cor:6_approx_NE}) \\
												& $\alpha \geq 3$		&&&& NE exists (Thm.~\ref{thm:1_2_graph_star_is_NE})\\
			\cmidrule{1-6}
			\textbf{\TNCG} & &$=\frac{\alpha+2}{2}$ (Thm.~\ref{thm:LB_PoA_tree_metric}+\ref{thm:metric_PoA}) & BR NP-hard (Thm.~\ref{thm_tree_hardness}) & no (Thm.~\ref{thm_tree_no_FIP}) & NE exists (Cor.~\ref{thm:T_is_OPT}) \\
			\cmidrule{1-6}
			%\textbf{\graphNCG} & &$=\frac{\alpha+2}{2}$ (Thm.~\ref{thm:LB_PoA_tree_metric}+\ref{thm:metric_PoA}) & BR NP-hard (Thm.~\ref{thm_tree_hardness}) & no (Thm.~\ref{thm_tree_no_FIP}) & $3(\alpha+1)$-NE (Cor.~\ref{cor:6_approx_NE}) \\
			%\cmidrule{1-6}
			\multirow{4}{*}{\textbf{\RDNCG}} &\multirow{2}{*}{$p$-norm, $p\geq 2$}  &$\geq\frac{3\alpha^3 + 24\alpha^2 + 40\alpha + 24}{\alpha^3 + 10\alpha^2 + 32\alpha + 24}$ (Thm.~\ref{thm:LB_PoA_Rd_for_arb_p_norm}), &\multirow{4}{*}{BR NP-hard~(Thm.~\ref{th:points_in_he_plane_BR_hardness})} & \multirow{2}{*}{?}  & \multirow{4}{*}{$3(\alpha+1)$-NE (Cor.~\ref{cor:6_approx_NE})}\\
			 & & $\leq \frac{\alpha+2}{2}$ (Thm.~\ref{thm:metric_PoA})& & & \\
			 %\cmidrule{2-3} \cmidrule{5-5}
			 & \multirow{2}{*}{1-norm}	&$1+\frac{\alpha}{2+\alpha/(2d-1)}$(Thm.~\ref{thm:LB_PoA_Rd_for_1_norm}),	& &\multirow{2}{*}{no (Thm.~\ref{th:points_in_the_plane_best_resp_cycle})} \\
			 &							&$\leq \frac{\alpha+2}{2}$ (Thm.~\ref{thm:metric_PoA}) & & &\\
			\cmidrule{1-6}
			\multirow{2}{*}{\textbf{\MNCG}} & &\multirow{2}{*}{$=\frac{\alpha+2}{2}$ (Thm.~\ref{thm:LB_PoA_tree_metric}+\ref{thm:metric_PoA})}  & BR NP-hard~(Cor.~\ref{cor_special_cases})  & \multirow{2}{*}{no~(Cor.~\ref{cor_special_cases})} & \multirow{2}{*}{$3(\alpha+1)$-NE (Cor.~\ref{cor:6_approx_NE})}\\
			&&&Dec. NE NP-hard (Thm.~\ref{thm:1_2_graph_des_probl_hard})\\
			\cmidrule{1-6}
			\multirow{2}{*}{\textbf{\GNCG}} & &$\geq\frac{\alpha+2}{2}$ (Thm.~\ref{thm:LB_PoA_tree_metric}) & BR NP-hard~(Cor.~\ref{cor_special_cases}) & \multirow{2}{*}{no~(Cor.~\ref{cor_special_cases})} & \multirow{2}{*}{?}\\ 
			&& $\leq\left(\frac{\alpha+2}{2}\right)^2$ (Thm.~\ref{thm_general_PoA})&Dec. NE NP-hard (Thm.~\ref{thm:1_2_graph_des_probl_hard})\\
			\bottomrule
		\end{tabular}
	\end{center}
\end{table*}
%\end{minipage}
%\end{turn}
\end{landscape}
\newpage

\subsection{Our Contribution}
In this paper we investigate the classical Network Creation Game on edge-weighted host graphs. This variant allows modeling the decentralized creation of networks, like fiber-optic communication networks or many variants of overlay networks, by selfish agents, e.g., ISPs. In such settings, the nodes in a network have a physical location and the edge weights and also the cost for creating and maintaining them depend on these locations. In particular, we focus on specific natural metrics, e.g., graph and tree metrics as well as the geometric setting where the agents correspond to points in~$\mathbf{R}^d$. 

We show that computing a best response strategy is NP-hard for all variants of our model and we prove for the \OTNCG that deciding if a given strategy profile is in NE is NP-hard as well. The latter is the first result of this type in the realm of NCGs. Moreover, we prove that all our models do not have the finite improvement property. On the positive side, we give an efficient algorithm for computing a social optimum network for the \OTNCG and we show how to trivially obtain the social optimum in the \TNCG. 

Our main focus is a rigorous study of the quality of the induced equilibrium networks of our models. For this we show that NE exist in the \OTNCG and the \TNCG and that the more general \MNCG always admits a $3(\alpha+1)$-approximate NE. The main contribution of our paper is a collection of bounds on the Price of Anarchy, i.e., we bound the loss in social welfare due to selfishness and to the lack of central coordination. We prove a tight PoA bound of $(\alpha+2)/2$ for the \MNCG and the \TNCG. This bound is remarkable, since it is non-constant and much higher than the previously known upper bounds for the NCG or the inherently non-metric $1$-$\infty$--GNCG. This shows that allowing weighted edges completely changes the picture.   Moreover, in contrast, settling the PoA for the original NCG, which is a special case of all our models, is a major open problem in the field. For the model variant which is closest to the NCG, the \OTNCG, we prove a tight constant bound on the PoA for $\alpha \leq 1$ and show that the PoA is in $\mathcal{O}(\sqrt{\alpha})$ for $\alpha > 1$. Hence, this model behaves very similar to the NCG. For the variant with points in $\mathbf{R}^d$, the \RDNCG, with the 1-norm we show how to embed our lower bound construction from the \TNCG. This yields a tight PoA bound if $d$ tends to infinity. Additionally, for any $p$-norm with $p\geq 2$ we give a lower bound construction which yields PoA of at least $3$ for high $alpha$ and which generally shows that the PoA is larger than $1$. Finally, for the most general case, the \GNCG, we show that the PoA is between $(\alpha+2)/2$ and $((\alpha+2)/2)^2$. 

See Table~\ref{table:overviewResults} for an overview over the majority of our results and the most relevant results for the earlier models which are marked with the star symbol. All results on the Price of Anarchy with an equality sign are tight bounds.

\section{Preliminaries}\label{sec_prelims}
We start by clarifying the relation of the models we investigate. 
Fig.~\ref{fig:model_relation} shows which models are special cases of other models. These relationships and the facts that computing a best response strategy is NP-hard for the NCG~\cite{Fab03} and that the NCG does not have the FIP~\cite{KL13} directly yields the following corollary.
\begin{corollary}\label{cor_special_cases}
Computing a best response strategy is NP-hard for the \OTNCG, $1$-$\infty$--GNCG, the \MNCG and the \GNCG. Additionally, these models do not have the FIP.
\end{corollary}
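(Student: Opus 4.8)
The plan is to derive everything from a single observation: the original NCG of Fabrikant et al.~\cite{Fab03} is a special case of each of the four models listed, and the natural embedding preserves exactly the structure needed to transport both the best-response hardness and the failure of the FIP. So the whole argument is a ``special case'' reduction, read off from the model-relation diagram in Fig.~\ref{fig:model_relation}.

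First I would fix the embedding. An instance of the NCG is a game on a complete unweighted host graph on $n$ nodes with parameter $\alpha$. Putting $w(u,v)=1$ for every pair $(u,v)$ turns this host graph into one whose weights lie in $\{1,2\}$ (a \OTNCG instance), into one whose weights lie in $\{1,\infty\}$ (a $1$-$\infty$--GNCG instance), into one that trivially satisfies the triangle inequality (an \MNCG instance), and of course into a legal \GNCG instance. The point that must be checked is that on these unit-weight instances the cost functions of the richer models collapse to the NCG cost function: since every edge has weight $1$, the weight of a shortest path between two nodes equals its hop length, so $d_{G(\sprofile)}(u,V)$ is precisely the NCG distance cost and $\alpha\cdot w(u,S_u)=\alpha\,|S_u|$ is precisely the NCG edge cost. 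Hence for every strategy profile and every agent the set of improving moves and the best responses are literally the same in the NCG and in the embedding model.

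Given this, the NP-hardness claim is immediate: a polynomial-time algorithm computing a best response in any of the four general models would, when fed a unit-weight instance, compute a best response in the NCG, contradicting~\cite{Fab03}. For the FIP claim, \cite{KL13} exhibits an NCG instance admitting a best-response cycle; viewed under the embedding this becomes the same cyclic sequence of strategy profiles in which every move is still a best-response move, so each of the four models violates the FIP and, by~\cite{MS96}, is not a potential game. There is no genuine obstacle here; the only care needed is in verifying that the reduction is faithful — that restricting the richer weight or metric structure to all-ones weights really reproduces the NCG, and in particular that the shortest-path-weight distances of the general models degenerate to the hop distances used in the NCG.
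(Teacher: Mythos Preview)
Your proposal is correct and is exactly the argument the paper gives: the corollary is stated immediately after noting that the NCG is a special case of each of the four models (Fig.~\ref{fig:model_relation}) and that best-response hardness~\cite{Fab03} and failure of the FIP~\cite{KL13} are known for the NCG. Your write-up is more explicit than the paper about why the unit-weight embedding preserves costs, but the approach is identical.
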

Let $k \geq 1$. We say that a subgraph $G$ of $H$ is a {\em $k$-spanner} if $d_{G}(u,v) \leq k d_H(u,v)$ for every pair of vertices $u, v \in V$. Next, we show a useful property, which holds for any host graph.
\begin{lemma}\label{lem:NE_spanner}
For any host graph $H$ any \AoE~ is a $(\alpha+1)$-spanner.
\end{lemma}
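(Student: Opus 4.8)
The plan is to first prove the per-edge statement that in any \AoE\ $G$ we have $d_G(u,v) \le (\alpha+1)\,w(u,v)$ for every edge $(u,v) \in E(H)$, and then to obtain the spanner property for an arbitrary pair by decomposing a host-graph shortest path into edges and summing.

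Before anything else I would record that any \AoE\ is connected: since $H$ is complete with finite weights, an agent in a connected component of $G$ that does not span all of $V$ can buy a single host-graph edge into another component and thereby turn an infinite distance into a finite one, which is an improving move; hence in an \AoE\ all distances in $G$ are finite. For the main step, fix an edge $(u,v)\in E(H)$ and suppose for contradiction that $d_G(u,v) > (\alpha+1)\,w(u,v)$. Then $(u,v)\notin E(G)$ (otherwise $d_G(u,v)\le w(u,v)$), so $u$ may perform the add-move $S_u \mapsto S_u\cup\{v\}$, yielding $G' = G+(u,v)$. Adding an edge cannot increase any distance, so $d_{G'}(u,x)\le d_G(u,x)$ for all $x$, and in addition $d_{G'}(u,v)\le w(u,v)$. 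Summing over $x$,
\[
d_{G'}(u,V)\;\le\; d_G(u,V) - d_G(u,v) + w(u,v)\;<\; d_G(u,V) - \alpha\,w(u,v).
\]
The move increases $u$'s edge cost by exactly $\alpha\,w(u,v)$, so $cost(u,G') < cost(u,G)$, contradicting the \AoE\ property. (If $w(u,v)=0$ the same move costs nothing and still strictly shortens the distance to $v$, so this case is covered too.)

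Finally, for an arbitrary pair $x,y\in V$, let $x=z_0,z_1,\dots,z_\ell=y$ be a shortest path in $H$, so that $d_H(x,y)=\sum_{i=1}^{\ell} w(z_{i-1},z_i)$; applying the per-edge bound to each $(z_{i-1},z_i)$ and using the triangle inequality for $d_G$ gives $d_G(x,y)\le\sum_{i=1}^{\ell} d_G(z_{i-1},z_i)\le(\alpha+1)\,d_H(x,y)$, which is precisely the $(\alpha+1)$-spanner condition. I expect the only mildly delicate point to be the connectivity remark and the associated bookkeeping with infinite distances; the heart of the argument — that a badly stretched host edge can always be bought back at a net profit by one of its endpoints — is the short one-line estimate above.
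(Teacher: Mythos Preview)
Your proof is correct and follows essentially the same approach as the paper: a per-edge contradiction argument showing $d_G(u,v)\le(\alpha+1)w(u,v)$, followed by summing along a host-graph shortest path. Your version is in fact slightly cleaner, since you prove the per-edge bound for every host edge (not only those satisfying $w(u,v)=d_H(u,v)$), which lets you skip the paper's observation that subpaths of shortest paths are shortest; the added connectivity remark is also a nice piece of bookkeeping the paper leaves implicit.
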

\begin{proof}
First, we consider edges $(u,v)$ with $w(u,v) = d_H(u,v)$, that is, a shortest path between $u$ and $v$ in the host graph $H$ uses the direct edge. We claim for such pairs $u$ and $v$ that in any NE network $G$ we have $d_G(u,v) \leq (\alpha+1)d_H(u,v) = (\alpha+1)w(u,v).$ To see this, assume towards a contradiction that $d_G(u,v) > (\alpha+1)d_H(u,v)$, which implies that $(u,v) \notin E(G)$. Now consider what happens if agent $u$ buys the edge $(u,v)$: Agent $u$ additionally has to pay $\alpha \cdot w(u,v)$ for creating the edge and then her distance to $v$ is guaranteed to be $w(u,v)$. Thus her total cost for buying the edge $(u,v)$ and reaching node $v$ is $(\alpha+1)w(u,v)$. Since $d_G(u,v) > (\alpha+1)w(u,v)$, buying the edge $(u,v)$ is an improving move for agent~$u$. 

Now we consider two arbitrary agents $u$ and $v$ in $G$ and let $P_{uv} = x_1,x_2,\dots,x_k$ with $u=x_1$ and $x_k=v$ be a shortest path between $u$ and $v$ in the host graph $H$. It follows that $d_H(u,v) = w(x_1,x_2) + w(x_2,x_3) + \dots + w(x_{k-1},x_k)$. Since $P_{uv}$ is a shortest path in $H$ and since any subpath of a shortest path must be a shortest path itself it follows that for all pairs $x_i$ and $x_{i+1}$, with $1\leq i\leq k-1$, the equality $w(x_i,x_{i+1}) = d_H(x_i,x_{i+1})$ holds. Thus, in any NE $G$ on the host graph $H$ we have that $d_G(x_i,x_{i+1}) \leq (\alpha+1)w(x_i,x_{i+1})$ holds for all $1\leq i \leq k-1$. Thus, the distance between $u$ and $v$ in any NE $G$ is $d_G(u,v) \leq (\alpha+1)w(x_1,x_2) + (\alpha+1)w(x_2,x_3) + \dots + (\alpha+1)w(x_{k-1}x_k) = (\alpha+1)d_H(u,v).$
\end{proof}
\noindent With a similar technique we get an analogous statement for the social optimum network $OPT$.

\begin{lemma}\label{lemma_OPT_spanner}
The social optimum network is a $\left(\frac{\alpha}{2}+1\right)$-spanner for any connected host graph $H$. 
\end{lemma}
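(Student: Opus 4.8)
The plan is to mimic the proof of Lemma~\ref{lem:NE_spanner}, exploiting the fact that $OPT$ is a global cost minimizer rather than merely a stable point. The key local property we need is: for every edge $(u,v) \in E(H)$ with $w(u,v) = d_H(u,v)$, the social optimum satisfies $d_{OPT}(u,v) \le \left(\frac{\alpha}{2}+1\right) w(u,v)$. Once this is established for such ``shortest-path edges,'' the general statement follows verbatim as in Lemma~\ref{lem:NE_spanner}: any pair $u,v$ is joined in $H$ by a shortest path $x_1,\dots,x_k$, every consecutive pair $x_i,x_{i+1}$ satisfies $w(x_i,x_{i+1}) = d_H(x_i,x_{i+1})$ (subpaths of shortest paths are shortest paths), so $d_{OPT}(u,v) \le \sum_i d_{OPT}(x_i,x_{i+1}) \le \left(\frac{\alpha}{2}+1\right)\sum_i w(x_i,x_{i+1}) = \left(\frac{\alpha}{2}+1\right) d_H(u,v)$.

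For the local property I would argue by contradiction: suppose $d_{OPT}(u,v) > \left(\frac{\alpha}{2}+1\right) w(u,v)$ for some edge with $w(u,v) = d_H(u,v)$. Consider the network $OPT + (u,v)$ obtained by adding this edge (owned by either endpoint). The added edge cost is $\alpha \cdot w(u,v)$, counted once in the social cost. The gain in distance cost: after adding the edge, for \emph{every} ordered pair $(a,b)$ the distance can only decrease, and in particular $d_{OPT+(u,v)}(u,v) = w(u,v)$, so the ordered pair $(u,v)$ improves by more than $\left(\frac{\alpha}{2}+1\right)w(u,v) - w(u,v) = \frac{\alpha}{2} w(u,v)$, and symmetrically the ordered pair $(v,u)$ improves by the same amount. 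Summing these two contributions gives a decrease in total distance cost strictly exceeding $\alpha \cdot w(u,v)$, which outweighs the added edge cost, contradicting optimality of $OPT$. (Connectedness of $OPT$ is used implicitly to ensure $d_{OPT}(u,v)$ is finite; the paper's cost definition with $+\infty$ for disconnected pairs forces $OPT$ connected whenever $H$ is, so the hypothesis ``connected host graph'' is exactly what is needed.)

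The one subtlety, and the place I expect the main (minor) obstacle, is the factor of $2$ improvement: it comes from counting both ordered pairs $(u,v)$ and $(v,u)$ in the social cost $\sum_{a \in V} d_{OPT}(a,V)$, which is why $OPT$ gets a better spanning ratio than an \AoE{} (there, only the single agent $u$ who buys the edge reaps the distance benefit, giving the weaker factor $\alpha+1$). I should be careful to phrase the distance-cost comparison so that only the two guaranteed-improving ordered pairs are used as a lower bound on the total improvement — all other pairs contribute a non-negative (possibly zero) additional gain, so the inequality is safe. No delicate case analysis is required beyond this; the argument is a direct exchange/swap comparison on the global objective.
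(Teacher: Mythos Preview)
Your proposal is correct and follows essentially the same approach as the paper: establish the local bound $d_{OPT}(u,v)\le(\alpha/2+1)w(u,v)$ for shortest-path edges by the exchange argument exploiting both ordered pairs, then extend along a shortest path in $H$ via the subpath property. The paper's proof is identical in structure and in the key step.
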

\begin{proof}
The proof is analogous to the proof of Lemma~\ref{lem:NE_spanner}. 
Let $OPT(H)$ be the subgraph of $H$ which minimizes the social cost. We start by considering edges $(u,v)$ in $OPT(H)$ where $w(u,v) = d_H(u,v)$, that is, a shortest path between $u$ and $v$ in the host graph $H$ uses the direct edge. We claim for such pairs $u$ and $v$ that in $OPT(H)$ we have $$d_{OPT(H)}(u,v) \leq \left(\frac{\alpha}{2}+1\right)d_H(u,v) = \left(\frac{\alpha}{2}+1\right)w(u,v).$$ 
To see this, assume towards a contradiction that $d_{OPT(H)}(u,v) > \left(\frac{\alpha}{2}+1\right)\cdot w(u,v)$, which implies that $(u,v) \notin E(OPT(H))$. Now consider what happens if the edge $(u,v)$ is added to $OPT(H)$: The social cost increases by $\alpha \cdot w(u,v)$ for creating the additional edge. Moreover, the creation of the edge $(u,v)$ ensures that the distance between $u$ and $v$ is $w(u,v)$. Thus, the distance from $u$ to $v$ is decreased by more than $\left(\frac{\alpha}{2}+1\right)w(u,v) - w(u,v) = \left(\frac{\alpha}{2}\right)w(u,v)$. The same holds true for the distance from $v$ to $u$. Thus, the total distance decrease induced by the addition of the edge $uv$ to $OPT(H)$ is more than $2 \left(\frac{\alpha}{2}\right)w(u,v) = \alpha \cdot w(u,v)$. Since the total distance decrease is strictly larger than the edge cost of the edge $(u,v)$, this implies that the network $OPT(H)$ augmented by the edge $(u,v)$ has strictly less social cost than $OPT(H)$. This contradicts the assumption that $OPT(H)$ minimizes the social cost. 

Now we consider two arbitrary agents $u$ and $v$ in $OPT(H)$ and let $P_{uv} = x_1,x_2,\dots,x_k$ with $u=x_1$ and $x_k=v$ be a shortest path between $u$ and $v$ in the host graph $H$. It follows that $d_H(u,v) = w(x_1,x_2) + w(x_2,x_3) + \dots + w(x_{k-1},x_k)$. Since $P_{uv}$ is a shortest path in $H$ and since any subpath of a shortest path must be a shortest path itself it follows that for all pairs $x_i$ and $x_{i+1}$, with $1\leq i\leq k-1$, the equality $w(x_i,x_{i+1}) = d_H(x_i,x_{i+1})$ holds. Thus, in $OPT(H)$ we have that $d_{OPT(H)}(x_i,x_{i+1}) \leq \left(\frac{\alpha}{2}+1\right)w(x_i,x_{i+1})$ holds for all $1\leq i \leq k-1$. Thus, the distance between $u$ and $v$ in $OPT(H)$ is 
\begin{align*}d_{OPT(H)}(u,v) &\leq \left(\frac{\alpha}{2}+1\right)w(x_1,x_2) + \left(\frac{\alpha}{2}+1\right)w(x_2,x_3) + \dots \\ &+ \left(\frac{\alpha}{2}+1\right)w(x_{k-1}x_k)
= \left(\frac{\alpha}{2}+1\right)d_H(u,v). 
\end{align*}
\end{proof}

\section{Host Graphs with Metric Weights}\label{sec_metric}
In this section we investigate the NCG on complete host graphs with edge weights which satisfy the triangle inequality. After giving some general results, we focus on specific natural metrics.
\begin{theorem}\label{thm:metric_PoA}
The PoA in the \MNCG is at most $\frac{\alpha+2}{2}$.
\end{theorem}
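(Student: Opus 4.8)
The plan is to bound the social cost of any NE network $G$ against the social cost of $\OPT$ by comparing each agent's cost in $G$ to a suitable quantity in $\OPT$ using the spanner properties from Lemmas~\ref{lem:NE_spanner} and~\ref{lemma_OPT_spanner}. The key observation is that in a metric host graph, every edge $(u,v)$ satisfies $w(u,v) = d_H(u,v)$, because the triangle inequality forces the direct edge to be a shortest path in $H$. Hence Lemma~\ref{lem:NE_spanner} applies to \emph{every} pair of vertices directly, giving $d_G(u,v) \le (\alpha+1)\, d_H(u,v)$, and Lemma~\ref{lemma_OPT_spanner} gives $d_{\OPT}(u,v) \le (\tfrac{\alpha}{2}+1)\, d_H(u,v)$ as well. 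This is really the engine of the whole argument.

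Next I would set up the accounting. Write $\mathrm{cost}(G) = \alpha \cdot W(G) + D(G)$ where $W(G)$ is the total edge weight bought and $D(G) = \sum_{u,v} d_G(u,v)$, and similarly for $\OPT$. The distance part is easy: since $G$ is an $(\alpha+1)$-spanner of $H$ and $\OPT$ is connected (hence $d_H(u,v)$ is finite and realized), I would like to relate $D(G)$ to $D(\OPT)$. The natural route is $d_G(u,v) \le (\alpha+1) d_H(u,v) \le (\alpha+1) d_{\OPT}(u,v)$ — wait, that direction is wrong since $\OPT \subseteq H$ means $d_{\OPT} \ge d_H$, not $\le$. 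So instead I would bound both $G$ and $\OPT$ against $H$: $D(\OPT) \ge D(H)$ trivially, and use $D(G) \le (\alpha+1) D(H) \le (\alpha+1) D(\OPT)$. For the edge-cost part, every agent in $G$ pays at least her cost, and I would argue that $\alpha W(G)$ cannot be too large relative to $D(\OPT)$ — perhaps more cleanly, lower-bound $\mathrm{cost}(\OPT) \ge D(\OPT) \ge D(H)$ and also $\mathrm{cost}(\OPT) \ge$ something involving a spanning structure. The cleanest version is probably: since $\OPT$ must be connected, $\mathrm{cost}(\OPT) \ge \alpha \cdot \mathrm{MST}(H) + D(H)$ is too weak; better to compare $G$ directly, using that each agent in NE has no incentive to drop edges, so $\alpha w(u,S_u) \le d_{G-(u,v)}(u,V) - d_G(u,V)$ summed appropriately.

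The sharpest path, and the one I would actually pursue, is to relate everything to $\mathrm{cost}(\OPT)$ from below via $\mathrm{cost}(\OPT) \ge D(\OPT) \ge D(H) =: D$, and to bound $\mathrm{cost}(G)$ from above. For distances, $D(G) \le (\alpha+1) D$. For edges, I would use that in a NE, buying edge $(u,v)$ where $v \in S_u$ must be worthwhile, and combine this with a global counting argument — or, more robustly, note that $\OPT$ itself, being a valid strategy profile, satisfies $\mathrm{cost}(\OPT) = \alpha W(\OPT) + D(\OPT)$, and that $G$ being a NE means no agent would deviate to the strategy she plays in $\OPT$, which bounds $\mathrm{cost}(u, G)$ against $\alpha w_\OPT(u, S^*_u) + d_{\OPT + (\text{u's original edges})}(u, V)$; summing and using the spanner bound $d_G \le (\alpha+1) d_H \le (\alpha+1) d_\OPT$ collapses this to $\mathrm{cost}(G) \le \alpha W(\OPT) + (\alpha+1) D(\OPT)$, and then I would need $\alpha W(\OPT) + (\alpha+1) D(\OPT) \le \tfrac{\alpha+2}{2}(\alpha W(\OPT) + D(\OPT))$, i.e. $\tfrac{\alpha}{2} W(\OPT) \ge \tfrac{\alpha}{2} D(\OPT)$, which is false in general, so this particular collapse is too lossy.

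I therefore expect the main obstacle to be getting the constant exactly $\tfrac{\alpha+2}{2}$ rather than something quadratic: the distance-versus-edge trade-off must be balanced carefully. The right idea is almost certainly to use the \emph{deviation where agent $u$ buys all edges incident to $u$ in $\OPT$}, but charge only half of each $\OPT$-edge's weight to $u$ (since each edge has two endpoints), and simultaneously exploit $d_\OPT(u,v) \le (\tfrac\alpha2+1) d_H(u,v)$ together with $d_G(u,v) \le (\alpha+1) d_H(u,v)$ so that the ratio per pair is $\tfrac{\alpha+1}{\tfrac\alpha2+1} = \tfrac{2(\alpha+1)}{\alpha+2}$, and then reconcile this with the edge-cost comparison so the weighted combination gives exactly $\tfrac{\alpha+2}{2}$. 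Concretely: bound $\mathrm{cost}(G) \le \tfrac{2(\alpha+1)}{\alpha+2} D(\OPT) + \alpha W(\OPT)$ via a NE deviation argument that distributes $\OPT$'s edges among endpoints, then check $\tfrac{2(\alpha+1)}{\alpha+2} \le \tfrac{\alpha+2}{2}$ (true since $4(\alpha+1) \le (\alpha+2)^2 \iff \alpha^2 \ge 0$) and $\alpha \le \tfrac{\alpha+2}{2}\alpha$, so the whole bound is at most $\tfrac{\alpha+2}{2}(D(\OPT) + \alpha W(\OPT)) = \tfrac{\alpha+2}{2}\,\mathrm{cost}(\OPT)$. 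I would verify the NE deviation inequality for a single agent by having her mimic $\OPT$ on her incident edges and route all other pairs through the spanner, which is where the factor $\tfrac12$ on edge weights and the interplay of the two spanner bounds must be handled with care.
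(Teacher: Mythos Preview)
Your proposal does not close. There are two concrete gaps.

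First, the step where you ``exploit $d_{\OPT}(u,v) \le (\tfrac{\alpha}{2}+1)d_H(u,v)$ together with $d_G(u,v)\le(\alpha+1)d_H(u,v)$ so that the ratio per pair is $\tfrac{2(\alpha+1)}{\alpha+2}$'' uses Lemma~\ref{lemma_OPT_spanner} in the wrong direction. To bound $d_G(u,v)/d_{\OPT}(u,v)$ from above you need a \emph{lower} bound on $d_{\OPT}(u,v)$; the only one available is the trivial $d_{\OPT}(u,v)\ge d_H(u,v)$, which gives ratio $\alpha+1$, not $\tfrac{2(\alpha+1)}{\alpha+2}$. The factor $\tfrac{2(\alpha+1)}{\alpha+2}$ does appear in the correct argument, but it comes from the $\alpha\,w(u,v)$ term in the \emph{denominator} when $(u,v)\in E(\OPT)$, not from the spanner bound on $d_{\OPT}$.

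Second, your deviation-based route never controls $\alpha\,W(G)$. Having $u$ mimic her $\OPT$-incident edges gives $cost(u,G)\le \alpha\cdot(\text{OPT-edges at }u)+d_{G'}(u,V)$, but $d_{G'}(u,v)$ is a distance in a graph where only $u$ has switched; there is no reason it is close to $d_{\OPT}(u,v)$, and the ``route through the spanner'' fallback only recovers the $(\alpha+1)$ factor. You correctly diagnosed that this collapse is too lossy, and the refinement you propose does not repair it.

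The paper's proof avoids the global accounting entirely. It charges pairwise: for each unordered pair $\{u,v\}$ it bounds
\[
\sigma=\frac{\alpha\,w(u,v)\,[\, (u,v)\in E(G)\,]+2d_G(u,v)}{\alpha\,w(u,v)\,[\,(u,v)\in E(\OPT)\,]+2d_{\OPT}(u,v)}\le \frac{\alpha+2}{2}
\]
by a three-case analysis on membership of $(u,v)$ in $E(G)$ and $E(\OPT)$. The case $(u,v)\in E(G)$ uses only $d_{\OPT}(u,v)\ge w(u,v)$. The case $(u,v)\in E(\OPT)\setminus E(G)$ uses Lemma~\ref{lem:NE_spanner} and produces exactly your $\tfrac{2(\alpha+1)}{\alpha+2}$. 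The remaining case $(u,v)\notin E(G)\cup E(\OPT)$ is where the real work is and where your sketch has nothing: one picks an internal vertex $z$ on the $u$--$v$ shortest path in $\OPT$ and uses the NE condition twice (that neither $u$ nor $v$ gains by buying the direct edge to $z$), summing the two inequalities to get $(\alpha+2)d_{\OPT}(u,v)\ge 2d_G(u,v)$. This last step goes beyond the spanner lemmas and is the missing idea in your plan.
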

\begin{proof}
Let $G$ be a NE and let $u$ and $v$ be two distinct vertices. Let $x$ and $x^*$ be two Boolean variables such that $x=1$ if and only if $(u,v)$ is an edge of $G$ and $x^*=1$ if and only if $(u,v)$ is an edge of the social optimum~$OPT$. We prove the claim by showing that 
\[
\sigma:=\frac{\alpha \cdot w(u,v) \cdot x+2d_G(u,v)}{\alpha \cdot w(u,v)\cdot x^*+2d_{OPT}(u,v)}\leq \frac{\alpha+2}{2}.
\]
Essentially $\sigma$ is the ratio of the social cost contribution of every pair of nodes in the NE and in OPT. If the ratio for every pair of nodes is bounded by $(\alpha+2)/2$ then this also holds for their sum.

Now we prove the claim. If $x=1$ then $d_G(u,v) = w(u,v)$ and hence $\sigma\leq (\alpha+2)\cdot w(u,v)/(2d_{OPT}(u,v))\leq (\alpha+2)\cdot w(u,v)/(2\,w(u,v)) =(\alpha+2)/2$. If $x=0$ and $x^*=1$ then $\sigma\leq 2(\alpha+1)/(\alpha+2) \leq (\alpha+2)/2$ since, by Lemma~\ref{lem:NE_spanner}, $d_G(u,v)\leq (\alpha+1)w(u,v)$.

It remains to prove $\sigma \leq (\alpha+2)/2$ when $x=0$ and $x^* = 0$. This means that there is a vertex $z$ with $z\neq u$ and $z\neq v$ along a fixed shortest path in $OPT$ between $u$ and $v$. As $G$ is a NE, neither $u$ nor $v$ has an incentive to buy the edge towards $z$. If $u$ bought the edge $(u,z)$ at the price of $\alpha\cdot w(u,z)$, its distances towards $z$ would be at most $w(u,z)$ and, by the triangle inequality, its distance towards $v$ would be at most $w(u,z) + d_G(z,v)$. Since this is not an improvement, we have
$(\alpha+2)d_{OPT}(u,z)+d_G(z,v) = (\alpha+2)w(u,z) + d_G(z,v) \geq \alpha\cdot w(u,z)+d_{G+(u,z)}(u,z)+d_{G+(u,z)}(u,v)\geq d_G(u,z)+d_G(u,v)$ and hence \begin{equation}(\alpha+2)d_{OPT}(u,z)+d_G(z,v) \geq d_G(u,z)+d_G(u,v).\label{eq_u}\end{equation}
Analogously for agent $v$ we get 
$(\alpha+2)d_{OPT}(v,z)+d_G(z,u) \geq (\alpha+2)w(v,z)+ d_G(z,u)\geq \alpha\cdot w(v,z) + d_{G+(v,z)}(v,z)+d_{G+(v,z)}(v,u) \geq d_G(v,z)+d_G(v,u)$ which yields 
\begin{equation}
 (\alpha+2)d_{OPT}(v,z)+d_G(z,u) \geq d_G(v,z)+d_G(v,u). \label{eq_v}
\end{equation}
By summing up the inequalities (\ref{eq_u}) and (\ref{eq_v}), we obtain
\begin{align*}(\alpha+2)d_{OPT}(u,v) &= (\alpha+2)d_{OPT}(u,z) + (\alpha+2)d_{OPT}(v,z)\geq 2d_G(u,v).\end{align*}
 Therefore, also the last case yields $\sigma \leq (\alpha+2)/2$. 
 %In total this yields $\sigma\leq \max\{\frac{2\alpha+2}{\alpha+2};\frac{\alpha+2}{2}\} = \frac{\alpha+2}{2}$.
\end{proof}

\paragraph{Existence} It is an interesting open question if NE always exist for the \MNCG. Here we prove a weaker result which essentially states that for low $\alpha$ there always is an outcome of the game where no agent can improve by a high multiplicative factor. This yields that there always is a network which is approximately stable.
\begin{theorem}\label{th:2approx_NE}
Any \AoE~ network in the \MNCG is in $(\alpha+1)$-GE.     
\end{theorem}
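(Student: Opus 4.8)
The plan is to check the $(\alpha+1)$-GE condition for an arbitrary fixed agent $u$ in an arbitrary \AoE~ network $G$, i.e.\ to show that none of the three deviations allowed in a GE---buying one incident edge, deleting one owned edge, or swapping one owned edge for a new incident edge---reduces $cost(u,G)$ below $\frac{1}{\alpha+1}\,cost(u,G)$. The buy case is immediate: by the definition of \AoE~ no edge purchase strictly decreases $cost(u,G)$, so in particular it does not decrease it below $\frac{1}{\alpha+1}\,cost(u,G)$. The case $cost(u,G)=\infty$ is also vacuous, since in an \AoE~ no single buy/delete/swap move can bring $u$ to finite cost: a deletion never gains connectivity, a single purchase bringing $u$ to finite cost would be an improving move, and a swap $G-(u,x)+(u,y)$ leaving $u$ connected to all of $V$ would leave $u$ connected to all of $V$ already in $G+(u,y)$, again contradicting the \AoE~ property. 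So assume $cost(u,G)<\infty$; it remains to treat deletions and swaps.

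For a deletion of an owned edge $(u,x)$ with $x\in S_u$, the key observation is that deleting an edge never decreases a distance, so $d_{G-(u,x)}(u,V)\ge d_G(u,V)$ and therefore
\[
cost\bigl(u,G-(u,x)\bigr)\;\ge\;\alpha\bigl(w(u,S_u)-w(u,x)\bigr)+d_G(u,V).
\]
A routine rearrangement shows that the right-hand side is at least $\frac{1}{\alpha+1}\bigl(\alpha\,w(u,S_u)+d_G(u,V)\bigr)=\frac{1}{\alpha+1}\,cost(u,G)$ precisely when $\alpha\,w(u,S_u)+d_G(u,V)\ge(\alpha+1)\,w(u,x)$. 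Here $w(u,S_u)\ge w(u,x)$ is trivial, and the metric assumption supplies the rest: by the triangle inequality every $u$-$x$ path in $G$ has weight at least $w(u,x)$, hence $d_G(u,V)\ge d_G(u,x)\ge w(u,x)$. Summing the two bounds gives the desired inequality, settling the deletion case.

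For a swap---delete $(u,x)$ with $x\in S_u$ and buy $(u,y)$ with $y\notin S_u$ (the degenerate choices of $y$ reduce to a deletion)---I would factor the deviation through the intermediate network $\widetilde G:=G+(u,y)$. Since $G$ is an \AoE, $cost(u,\widetilde G)\ge cost(u,G)$, which rearranges to $d_{\widetilde G}(u,V)\ge d_G(u,V)-\alpha\,w(u,y)$. Deleting $(u,x)$ from $\widetilde G$ only increases distances, so $d_{\widetilde G-(u,x)}(u,V)\ge d_G(u,V)-\alpha\,w(u,y)$ as well; substituting this into
\[
cost\bigl(u,\widetilde G-(u,x)\bigr)=\alpha\bigl(w(u,S_u)-w(u,x)+w(u,y)\bigr)+d_{\widetilde G-(u,x)}(u,V)
\]
makes the two $w(u,y)$ contributions cancel, leaving exactly the lower bound $\alpha\bigl(w(u,S_u)-w(u,x)\bigr)+d_G(u,V)$ already obtained in the deletion case. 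The computation from that case then applies verbatim and finishes the swap case.

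The only genuine obstacle is the inequality $d_G(u,V)\ge w(u,x)$ for an owned edge $(u,x)$: it fails for general edge weights---an \AoE~ may retain a heavy but completely shortcut-able edge, making its deletion arbitrarily profitable---and it is exactly metricity that rules this out in the \MNCG. The remaining ingredients (monotonicity of distances under edge deletion, the definition of \AoE, and the ``factor through $G+(u,y)$'' reduction of swaps to deletions) are elementary.
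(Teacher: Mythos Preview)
Your proof is correct and follows essentially the same route as the paper's: both arguments bound the deletion ratio via the metric inequality $d_G(u,V)\ge d_G(u,x)\ge w(u,x)$, and both handle the swap by factoring through the intermediate network $G+(u,y)$ and combining the \AoE\ property with the deletion bound. The only differences are cosmetic---the paper writes the deletion bound as a ratio $\frac{cost(u,G)}{cost(u,G')}\le 1+\alpha$ and then applies it as a black box to $G+(u,y)$, whereas you unfold the same computation; and you treat the disconnected case explicitly while the paper leaves it implicit.
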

\begin{proof}
Consider a network $G=(V,E)$ which is in \AoE.
By the definition of a $(\alpha+1)$-GE we need to evaluate the maximal improvement of the cost function which can be made by a deletion or swap of any edge in $G$.

First, we consider a deletion. Compare the cost function value of some agent $u\in V$ before and after an improving deletion of one of her edges $e=(u,v)\in E(G)$:
\begin{align*}
\frac{cost(u,G)}{cost(u,G')}=\frac{\alpha\cdot w(u,S_u) + d_G(u,V)}{\alpha\cdot w(u,S_u') + d_{G'}(u,V)},
\end{align*}    
where $G'$ is the network obtained from $G$ by applying new strategy $S'_u=S_u\setminus \{v\}$, i.e. $E(G')=E(G)\setminus \{(u,v)\}$. In the worst case, the deletion of the edge $(u,v)$ does not change distance between the nodes $u$ and $v$, i.e., $w(u,v) = d_{G'}(u,v)$. This yields 
\begin{align}\frac{cost(u,G)}{cost(u,G')} &= \frac{\alpha\cdot w(u,S_u) + d_G(u,V)}{\alpha\cdot w(u,S_u') + d_G(u,V)}\notag\\
&= \frac{\alpha\cdot w(u,v) + \alpha \cdot w(u,S_u\setminus\{v\}) + d_G(u,V)}{\alpha \cdot w(u,S_u\setminus\{v\}) + d_G(u,V)}\notag\\
&= 1 + \frac{\alpha \cdot w(u,v)}{\alpha \cdot w(u,S_u\setminus\{v\})+d_G(u,V)}\notag\\
&\leq 1 + \frac{\alpha\cdot w(u,v)}{d_G(u,V)} \leq 1 + \frac{\alpha\cdot w(u,v)}{w(u,v)} = 1+\alpha. \label{ineq_deletion}
\end{align}
Now we consider an  improvement which can be made by one swap. Let agent $u\in V(G)$ can improve her cost by swap an edge $(u,v)$ to $(u,w)$, and let $G_{swap}$ be the new graph. Compare the cost function after the swap with the cost value after the sequential addition of the edge $(u,v)$ and the deletion of the edge $(u,w)$. Let $G_{add}$ and $G_{del}$ be the corresponding networks. Thus, $E(G_{swap}) = E(G_{del}) = E(G_{add})\setminus \{(u,w)\} =\left(E(G)\cup \{(u,v)\}\right)\setminus \{(u,w)\}$. Then, by the inequality (\ref{ineq_deletion}) and because $G$ is in \AoE, we have:
\begin{align}
cost_{G_{swap}}(u) = cost_{G_{del}}(u) \geq \frac{1}{\alpha+1}cost_{G_{add}}(u) \geq \frac{1}{\alpha+1}cost_{G}(u). \label{inq_swap}
\end{align}

\noindent Finally, by (\ref{ineq_deletion}) and (\ref{inq_swap}), we get that $G$ is in $(\alpha+1)$-GE.

\end{proof}

\noindent Now, we adapt the technique from~\cite{L12} to relate GE and $\beta$-NE.
\begin{theorem}\label{th:GE_is_3_approx_NE}
In the \MNCG every network in GE is in 3-NE.
\end{theorem}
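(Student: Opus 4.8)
The plan is to show that if $G$ is in GE, then no agent can reduce her cost below $\frac{1}{3}\operatorname{cost}(u,G)$ by an arbitrary strategy change. Fix an agent $u$ and let $S_u'$ be any alternative strategy, yielding network $G'$. Following the technique of Lenzner~\cite{L12}, I would compare $G'$ with a carefully chosen intermediate strategy obtained by modifying $G$ only through edge additions, swaps, and deletions that are individually covered by the GE condition, and then bound the cost of $u$ in $G$ against her cost in $G'$ by splitting $\operatorname{cost}(u,G') = \alpha\cdot w(u,S_u') + d_{G'}(u,V)$ into the edge-cost part and the distance-cost part and controlling each.

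First I would set up the key structural observation: in $G'$, agent $u$ owns some set of edges $S_u'$, and every other agent owns exactly the edges she owned in $G$. So $G'$ differs from $G$ only in the star of edges around $u$. The idea is to consider the strategy $S_u \cup S_u'$ for $u$ — i.e., keep all old edges and add all the new ones. Since $G$ is in GE (hence in \AoE), adding the edges of $S_u' \setminus S_u$ one by one to $G$ cannot decrease $u$'s cost; this bounds the distance cost: $d_{G'}(u,V)$ is at least $d_{G \cup \text{(new edges)}}(u,V)$ only up to the edge cost $u$ would pay, so I need the inequality $d_{G+\text{all new edges}}(u,V) \ge d_G(u,V) - \alpha\cdot w(u,S_u'\setminus S_u)$, which follows from \AoE~applied repeatedly (each single addition is covered). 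Then I relate $d_{G'}(u,V)$ to $d_{G+\text{all new edges}}(u,V)$: since $G'$ has the \emph{same} new edges but is missing the old edges $S_u \setminus S_u'$, I must argue that deleting these old edges does not blow up distances too much — and here the GE swap/deletion condition is what I would invoke, charging each lost old edge against either a retained edge or the distance savings, exactly as in the deletion bound~(\ref{ineq_deletion}) and the swap bound~(\ref{inq_swap}) already established in the previous theorem.

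Assembling the pieces: the edge cost of $u$ in $G$ is $\alpha\cdot w(u,S_u)$, and I want to show $\operatorname{cost}(u,G) \le 3\operatorname{cost}(u,G')$. The distance cost of $u$ in $G$ satisfies $d_G(u,V) \le d_{G'}(u,V) + \alpha\cdot w(u,S_u'\setminus S_u) + (\text{bounded blow-up from deleting old edges})$, where the blow-up term is at most $\alpha\cdot w(u,S_u\setminus S_u')$ by the deletion argument, plus possibly $d_{G'}(u,V)$ itself. Combining, $\operatorname{cost}(u,G) = \alpha\cdot w(u,S_u) + d_G(u,V)$ should be bounded by $\alpha\cdot w(u,S_u) + d_{G'}(u,V) + \alpha\cdot w(u,S_u') + \alpha\cdot w(u,S_u)$, and using that the GE swap condition forces $\alpha\cdot w(u,S_u)$ to be comparable to the gain it provides, this collapses to at most $3\bigl(\alpha\cdot w(u,S_u') + d_{G'}(u,V)\bigr) = 3\operatorname{cost}(u,G')$. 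I would be careful that each of the three "units" of $\operatorname{cost}(u,G')$ — the new edge cost, the new distance cost, and a third term absorbing the old edges via the GE/swap inequality — is charged exactly once.

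The main obstacle I anticipate is the deletion step: when passing from $G+\text{(new edges)}$ to $G'$ by removing the old edges $S_u\setminus S_u'$, the distances from $u$ to the rest of $V$ could in principle increase arbitrarily, and the \AoE~condition alone says nothing about this. The resolution must use the full GE condition (swaps and single deletions), but GE only controls \emph{single} deletions and \emph{single} swaps, whereas here potentially many old edges disappear at once. The delicate part is therefore to order the removals and argue inductively that at each step the relevant single-edge move is an \AoE/GE-forbidden improving move for $u$ in the \emph{current} intermediate network — exploiting the metric (triangle inequality) to reroute through a retained edge or through a newly added edge of $S_u'$ — so that the per-step loss is controlled and the telescoped total stays within the factor $3$. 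Getting the bookkeeping right so that nothing is double-counted and the metric shortcuts are always available is where the real work lies.
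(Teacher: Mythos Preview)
Your approach is fundamentally different from the paper's, and as written it has a real gap.

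The paper does not attempt a direct charging argument at all. Instead it reduces the best-response problem for agent $u$ to an instance of Uncapacitated Metric Facility Location: facilities and clients are both $V\setminus\{u\}$, the opening cost of $f$ is $0$ if some other agent already owns the edge $(f,u)$ and $\alpha\cdot w(f,u)$ otherwise, and the connection cost between facility $i$ and client $j$ is $d_{G'}(i,j)+w(i,u)$, where $G'$ is $G$ with $u$'s edges removed. One checks that $cost(u,\cdot)$ coincides with the UMFL cost under this bijection, and that a single add/delete/swap of an edge by $u$ corresponds exactly to opening/closing/swapping a single facility. The factor $3$ then comes for free from the locality-gap result of Arya et al.\ for metric UMFL. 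No inductive edge-by-edge bookkeeping is needed.

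Your direct argument breaks precisely at the point you flag as ``where the real work lies.'' The GE hypothesis says that no single add, delete, or swap improves $u$'s cost \emph{in $G$}. It says nothing about intermediate networks obtained after several modifications. So when you propose to remove the old edges $S_u\setminus S_u'$ one by one from $G+\text{(new edges)}$ and invoke GE ``in the current intermediate network,'' there is no hypothesis available: those intermediate networks need not be in GE. Moreover, even if a per-step bound held, chaining the deletion bound~(\ref{ineq_deletion}) or the swap bound~(\ref{inq_swap}) over many steps would multiply factors (each step loses a factor $\alpha+1$), not add them, so you would not land at $3$. Obtaining the constant $3$ genuinely requires a global charging argument of the Arya et al.\ type, which is exactly what the UMFL reduction imports.
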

\begin{proof}
We prove the claim by a "locality gap preserving" reduction to the {\em Uncapacitated Metric Facility Location} problem (UMFL). Roughly speaking, in UMFL we are given a set of facilities, each of which has a non-negative opening cost, a set of clients, and a distance between each client and each facility (the distances satisfy the quadrangle inequality). The task in UMFL it to open a set of facilities and assign each client to the closest opened facility in such a way that the overall cost -- i.e., the overall cost of the opened facilities plus the overall sum of client-to-assigned-facility distances -- is minimized.  Since it was shown in~\cite{arya2004local} that the locality gap of UMFL is 3, that means  that any UMFL solution that cannot be improved by a single move, i.e., by opening, closing or swapping one facility, is a 3-approximation of the optimal solution.

Consider a graph $G=(V,E)$. Let $u\in V$ be an agent in $(G,\alpha)$ and let $Z\subset V$ be the set of vertices which own an edge to $u$. Consider the subgraph $G'=(V,E')$ of $G$ which does not contain edges owned by the agent $u$. Denote $S(u)$ be the set of $u$'s pure strategies in $(G',\alpha)$. We construct an instance $I(G')$ for UMFL from the graph $G'$ as follows: let $F=C=V\setminus\{u\}$, where $F$ is the set of facilities, $C$ is the set of clients; we define for all facilities $f\in Z\cap F$ the opening cost $c(f)$ to be 0, and $c(f)=\alpha\cdot w(f,u)$ for all other facilities. We define distances for all $i\in F, j\in C$ to be $d_{ij}=d_{G'}(i,j)+w(i,u)$. If $G'$ is disconnected, then $d_{ij}=\infty$. 

Now we construct a map $\pi:S(u)\rightarrow S_{UMFL}$, where $S_{UMFL}$ is the set of solutions of the UMFL for the instance $I(G')$, as follows: for any $S\in S(u)$, define $\pi(S)=S\cup Z$ and for any $F_S\in S_{UMFL}$, $\pi^{-1}(F_S)=F_S\setminus Z$. Since for any solution $F_S$ of the UMFL  $Z\subseteq F_S$, the strategy $S'=\pi^{-1}(F_S)$ exists, and for any two strategies $S_1\neq S_2$, $\pi(S_1)\neq \pi(S_2)$. Therefore, the map $\pi$ is a bijection. To prove the statement of the theorem we need to show that if agent $u$ cannot improve her strategy by adding, deleting or swapping one edge, then the corresponding solution $F_S=\pi(S)$ for UMFL cannot be improved by opening, closing or swapping one facility. 

First, we show that the cost of agent $u$ is equal to the cost of the corresponding UMFL solution $F_S$. Indeed, 
\begin{align*}
cost(u,G(S)) &=\alpha\!\cdot\!w(u,S) + \sum\limits_{v\in V\setminus \{u\}}{\left(\min\limits_{x\in S\cup Z}{(d_{G'}(x,v)}+w(u,x))\right)}\\
&= \alpha\!\cdot\!w(u,S\setminus Z) +0\cdot w(u,Z) + \sum\limits_{v\in V\setminus \{u\}}{\left(\min\limits_{x\in S\cup Z}{d_{xv}}\right)}\\
&= \alpha\cdot\sum\limits_{f\in F_S\setminus Z}{c(f)} +\sum\limits_{f\in Z}{c(f)} + \sum\limits_{v\in C}{\left(\min\limits_{x\in F_S}{d_{xv}}\right)}\\
 &= cost(F_S).
\end{align*}

Next we show that $F_s=\pi(S)$ cannot be decreased by opening, closing or swapping one facility. By the sake of contradiction, assume that the solution $F_S$ can be improved by a single step. Denote $F'_S$ be an improved solution. Note that no facility $z\in Z$ is included in an opening, closing or swapping step. Indeed, by construction, $Z\subseteq F_S$ and if there is a facility $z\in F_S\setminus F'_S$, then there is at least one client $c\in C$ such that $d_{cz}\leq \min\limits_{f\in F_S}{d_{cf}}$, thus, closing the facility $z$ does not decrease $cost(F_S)$ and, therefore, $z\in F'_S$. Thus, and because $\pi$ is a bijection, we have that there is a strategy $S'=\pi^{-1}(F'_S)$ such that $S'\neq S$. Therefore, we have $cost(u,G(S'))=cost(F'_S)<cost(F_S)=cost(u,G(S))$. Hence, there is the better strategy $S'$ for the agent $u$, which contradicts with the assumption that there is no one step improvement of the strategy~$S$.

Finally, applying the result by Arya et al.\cite{arya2004local}, we get  $cost(u,G(S))\leq 3\,cost(u, G(S^*))$ where $S^*$ is an optimal strategy in $(G',\alpha)$.
\end{proof}

\noindent By Theorem~\ref{th:2approx_NE} and Theorem~\ref{th:GE_is_3_approx_NE}, we get the following:

\begin{corollary}\label{cor:6_approx_NE}
Every network which is in AE in the \MNCG is in $3(\alpha+1)$-NE.
\end{corollary}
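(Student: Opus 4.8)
The plan is to derive the corollary by chaining Theorem~\ref{th:2approx_NE} with a strengthened version of Theorem~\ref{th:GE_is_3_approx_NE}. Theorem~\ref{th:2approx_NE} already gives that an \AoE\ network $G$ is a $(\alpha+1)$-GE, so it suffices to show that every $\beta$-GE is a $3\beta$-NE and then substitute $\beta=\alpha+1$. As a warm-up, the distance part of the bound follows immediately from Lemma~\ref{lem:NE_spanner}: an \AoE\ network $G$ is an $(\alpha+1)$-spanner of $H$, and any best response $G(S^*)$ of an agent $u$ is a subgraph of $H$, so $d_G(u,V)\le(\alpha+1)\,d_H(u,V)\le(\alpha+1)\,d_{G(S^*)}(u,V)\le(\alpha+1)\,cost(u,G(S^*))$. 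The actual work is the edge cost, because in an \AoE\ an agent may over-buy edges, and controlling her edge cost against her optimum is exactly the role of the reduction behind Theorem~\ref{th:GE_is_3_approx_NE}.

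To obtain ``$\beta$-GE $\Rightarrow$ $3\beta$-NE'' I would re-use that reduction unchanged. Fix an agent $u$, let $G'$ be $G$ with the edges owned by $u$ deleted, and let $\pi$ be the cost-preserving bijection from the proof of Theorem~\ref{th:GE_is_3_approx_NE} between $u$'s strategies in $(G',\alpha)$ and the feasible UMFL solutions for the instance $I(G')$; the feature exploited there is that adding, deleting, or swapping a single incident edge of $u$ corresponds to opening, closing, or swapping a single facility, with the $Z$-facilities never involved. Hence, if $G$ is a $\beta$-GE, then $F_S:=\pi(S)$, with $S$ the strategy $u$ plays in $G$, is a \emph{$\beta$-approximate local optimum} of $I(G')$: no single open/close/swap step lowers its cost below $\tfrac1\beta\,cost(F_S)$. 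Invoking an approximate-local-optimum version of the locality-gap bound of~\cite{arya2004local} -- that such an $F_S$ costs at most $3\beta$ times the UMFL optimum -- and pushing the inequality back through $\pi$, which is cost preserving and under which the optimal UMFL solution corresponds to a best response $S^*$ of $u$, yields $cost(u,G)=cost(F_S)\le 3\beta\,cost(u,G(S^*))$, i.e.\ the $3\beta$-NE condition.

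The step I expect to be the real obstacle is exactly this approximate version of the UMFL locality gap. Re-running the inequalities of~\cite{arya2004local} with ``cost does not drop below $\tfrac1\beta\,cost(F_S)$'' in place of ``cost does not drop'' introduces an additive slack of order $(1-\tfrac1\beta)\,cost(F_S)$ for every test move aggregated in that argument, and since the number of test moves is $\Theta(n)$ rather than a function of $\beta$, the clean $3\beta$ factor cannot simply be quoted as a black box for arbitrary UMFL instances. I would recover it using the special structure of the instances $I(G')$ that arise here: a facility $f$ has opening cost $\alpha\,w(f,u)$ while every distance from $f$ to a client equals $d_{G'}(f,\cdot)+w(f,u)\ge w(f,u)$, so each facility is cheapest at its ``home'' client and its opening cost is within a factor $\alpha$ of the distances it induces; this ought to let one charge each per-move slack only to the portion of $cost(F_S)$ that the move actually affects, which should restore the $3\beta$ bound.
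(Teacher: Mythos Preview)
Your overall plan---derive the corollary by chaining Theorem~\ref{th:2approx_NE} with (a strengthening of) Theorem~\ref{th:GE_is_3_approx_NE}---is precisely the paper's approach: its entire proof is the single sentence ``By Theorem~\ref{th:2approx_NE} and Theorem~\ref{th:GE_is_3_approx_NE}, we get the following.'' Where you go beyond the paper is in recognising that this chain implicitly needs ``$\beta$-GE $\Rightarrow$ $3\beta$-NE'', which does \emph{not} follow formally from ``GE $\Rightarrow$ $3$-NE''; abstractly, a $\beta$-approximate local optimum need not lie within any bounded factor of an exact local optimum, so simply multiplying the two approximation ratios is illegitimate without further argument.

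Your diagnosis of the obstacle is accurate as well: re-running the Arya et al.\ locality-gap proof with the relaxed inequality $cost(F_S)-cost(F') \le (1-\tfrac{1}{\beta})\,cost(F_S)$ in place of ``$\le 0$'' accumulates a slack of order $(1-\tfrac{1}{\beta})\,cost(F_S)$ per test move, and since that proof aggregates $\Theta(|F_S|+|F^*|)$ such moves, the naive summation destroys the bound rather than yielding $3\beta$. The genuine gap in your write-up is that your proposed remedy---exploiting the special structure of $I(G')$, where $c(f)=\alpha\,w(f,u)$ is comparable to the service distances $f$ induces, so as to charge each per-move slack only against the portion of $cost(F_S)$ touched by that move---is stated only as an intention. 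To close the argument you would have to exhibit a concrete charging scheme showing that the total accumulated slack is at most a constant times $(\beta-1)\,cost(F^*)$ (or is absorbed into a term already on the right-hand side of the Arya et al.\ summation), and you have not done this. As written, therefore, your proposal identifies but does not resolve the very step that the paper's one-line proof passes over in silence.
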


\subsection{1-2-Graphs}\label{subsec_12}
Here we consider the \MNCG for the special case where for every pair of nodes $u$ and $v$ we have either $w(u,v) = 1$ or $w(u,v)=2$. We call an edge of weight 1 or 2 a \textit{1-edge} or \textit{2-edge}, respectively. We call such graphs \emph{1-2-graphs}.  

Studying 1-2-graphs is especially interesting since this class of host graphs is the simplest generalization of the unweighted host graphs from the NCG and the edge weights are guaranteed to satisfy the triangle inequality. 1-2-graphs are commonly used as the simplest non-trivial metric special case, e.g. when studying the TSP~\cite{Karp72, BK06, AMP18}, and hence they are a natural starting point. 

\noindent We start with a simple statement about 1-edges. We show that for $\alpha < 1$ any NE must contain all the 1-edges from the host graph. If $\alpha = 1$, then there always exists a NE which contains all 1-edges.
\begin{lemma}\label{lem:1_edge_property}
For $\alpha = 1$ in any NE network in the \OTNCG buying any additional 1-edge is cost neutral for the buyer. For $\alpha < 1$ buying any 1-edge is an improving move for the buyer.
\end{lemma}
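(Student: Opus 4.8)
The plan is to analyze the effect of adding a single $1$-edge $(u,v)$ to a network $G$ and to bound the resulting change in $u$'s cost. Fix a NE network $G$ in the \OTNCG and a $1$-edge $(u,v)$ that $u$ does not own. Buying $(u,v)$ costs $u$ exactly $\alpha \cdot w(u,v) = \alpha$. The key point is to understand how much $u$'s distance cost $d_G(u,V)$ can decrease. After adding the edge, the distance from $u$ to $v$ becomes at most $1$, so $u$ saves at least $d_G(u,v) - 1$ on the distance to $v$ alone; more generally, for every node $t$, the new distance $d_{G+(u,v)}(u,t)$ is at most $1 + d_G(v,t)$.

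The central claim I would establish is that in any NE, $d_G(u,v) = 1$ already holds for every $1$-edge $(u,v)$ — equivalently, $d_G(u,v) \le 1$, since $1$ is the minimum possible edge weight. This follows directly from Lemma~\ref{lem:NE_spanner} specialized to $\alpha = 1$: actually we need the sharper statement from the first paragraph of that proof, namely that for a pair $u,v$ with $w(u,v) = d_H(u,v)$ (which holds for every $1$-edge, since $1$ is the smallest weight in a $1$-$2$ graph) a NE satisfies $d_G(u,v) \le (\alpha+1) w(u,v)$. But that only gives $d_G(u,v) \le 2$, which is not tight enough; so instead I would argue directly: if $d_G(u,v) \ge 2$, then buying $(u,v)$ costs $u$ an extra $\alpha \cdot 1$ and reduces her distance to $v$ by at least $d_G(u,v) - 1 \ge 1$. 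For $\alpha < 1$ this is already a strict improvement, contradicting the NE property; for $\alpha = 1$ it is at least cost neutral, and if $d_G(u,v) \ge 3$ it would be strictly improving (impossible only if no such pair exists, but we must handle $d_G(u,v) = 2$ separately). The cleaner route for $\alpha = 1$: if $d_G(u,v) \ge 2$ for some non-owned $1$-edge, then adding it costs $1$ and saves at least $d_G(u,v) - 1 \ge 1$ on the $v$-term plus possibly more on other nodes, so it is weakly improving; this is consistent with NE. Hence in a NE we may have $d_G(u,v) = 2$ for some $1$-edges, but buying such an edge is exactly cost neutral (cost $+\alpha \cdot 1 = +1$, distance saving exactly $1$ if and only if no other node benefits — one must check that no third node's distance strictly decreases, which holds when the shortest $u$–$v$ path of length $2$ already routes optimally to everything). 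For $\alpha < 1$: the same edge gives a net strict decrease of $(d_G(u,v) - 1) - \alpha \ge 1 - \alpha > 0$, so it is an improving move; and if $d_G(u,v) = 1$ for every non-owned $1$-edge then $u$ owns or is distance-$1$ to all its $1$-neighbours and there is nothing to prove for that edge.

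Wait — I should be careful: the statement asserts that \emph{for $\alpha < 1$, buying any $1$-edge is an improving move}, which presupposes the edge is not already present. So the clean structure is: (i) show that in any NE with $\alpha \le 1$, every non-owned $1$-edge $(u,v)$ satisfies $d_G(u,v) \ge 2$ — this is because if $d_G(u,v) = 1$ via a two-hop... no, a length-$1$ path through other weighted edges is impossible in a $1$-$2$ graph, so $d_G(u,v) = 1$ would force $(u,v) \in E(G)$, contradicting non-ownership only if nobody owns it; if $v$ owns it then $u$ buying it is clearly not improving, but the lemma is about a \emph{NE} so such an edge simply doesn't arise as "additional". I would phrase it as: for any $1$-edge $(u,v)$ with $(u,v) \notin E(G)$ we have $d_G(u,v) \ge 2$. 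Then adding $(u,v)$: edge cost increases by $\alpha$, distance cost decreases by at least $d_G(u,v) - 1 \ge 1$. For $\alpha = 1$ the net change is $\le 0$, and one shows it is exactly $0$ by invoking the NE property in reverse (it cannot be negative, else not a NE), giving cost neutrality. For $\alpha < 1$ the net change is $\le -(1 - \alpha) < 0$, giving an improving move.

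\textbf{Main obstacle.} The delicate part is pinning down that, for $\alpha = 1$, the saving is \emph{exactly} $1$ and not more — i.e., that adding the $1$-edge does not strictly help $u$ reach some third node $t$. If it did, $G$ would not be a NE, so this direction is forced; but one must present it carefully so the argument does not become circular. I expect to resolve this by the following observation: in a NE with $\alpha = 1$, for every non-owned $1$-edge $(u,v)$, $u$ already has $d_G(u,v) = 2$ and moreover $d_G(u,t) \le 1 + d_G(v,t)$ for all $t$ (otherwise buying $(u,v)$ would strictly decrease the $t$-term, making the total change negative — contradiction). These two facts together say the saving from buying $(u,v)$ is exactly the $v$-term saving, which is exactly $d_G(u,v) - 1 = 1 = \alpha$, proving cost neutrality. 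The $\alpha < 1$ case is then immediate from $d_G(u,v) \ge 2$ and the $\alpha$ being strictly less than the guaranteed saving.
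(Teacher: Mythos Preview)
Your proposal is correct and follows essentially the same approach as the paper: observe that a missing $1$-edge forces $d_G(u,v)\ge 2$, so buying it costs $\alpha$ and saves at least $1$ in distance cost, giving a strict improvement when $\alpha<1$ and a non-positive net change when $\alpha=1$. Your treatment of the $\alpha=1$ case is in fact more careful than the paper's: you explicitly invoke the NE property (the net change cannot be negative) to conclude the change is \emph{exactly} zero, whereas the paper asserts cost neutrality directly after only establishing the ``at least $1$'' saving; the tangents about Lemma~\ref{lem:NE_spanner} and edge ownership are unnecessary but harmless.
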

\begin{proof}
Consider a graph $G$ which is in NE in the \OTNCG. Assume there is an edge $(u,v)$ of weight 1 which is not in $G$. Thus, $d_G(u,v)\geq 2$. Then buying the edge by one of its endpoint costs $\alpha$ while the distance cost decreases by at least $2-1$. Hence, if $\alpha < 1$, the decrease of the distance cost exceeds the increase in the edge cost, which means that this is an improving move for the buying agent.  If $\alpha = 1$, the cost for the buying agent does not change.
\end{proof}

\paragraph{Hardness} Here we discuss the hardness of deciding if a given strategy profile is in NE for the \OTNCG. Note that the NP-hardness of computing a best response strategy for some agent, which is guaranteed by Corollary~\ref{cor_special_cases}, does not directly imply the NP-hardness of the NE decision problem. 

First, we take a detour via the Vertex Cover problem. A {\em vertex cover} of an undirected graph $G$ is a subset $C$ of vertices of $G$ such that, for every edge $(u,v)$ of $G$, $u \in C$ or $v \in C$. It is well-known that computing a minimum vertex cover of a subcubic graph is NP-hard. We start with a result which may be folklore.
\begin{lemma}\label{lm:np_completeness_vertex_cover_variant}
Unless P=NP, there is no polynomial time oracle that, given a graph $G$ and a vertex cover of $G$ of size $k$, decides whether $G$ admits a vertex cover of size at most $k-1$.
\end{lemma}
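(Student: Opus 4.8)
The plan is to prove the contrapositive: assuming such a polynomial-time oracle $\mathcal{O}$ exists, I would build a polynomial-time algorithm that computes the exact minimum vertex cover size of an arbitrary input graph, contradicting the NP-hardness of \textsc{Minimum Vertex Cover} (indeed $\mathsf{NP}$-hardness already for subcubic graphs, but we only need hardness for general graphs here, so no restriction is required).

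First I would observe that $\mathcal{O}$ lets us perform a ``decrement test'': given any vertex cover $C$ of $G$ with $|C| = k$, one call to $\mathcal{O}$ on $(G, C)$ tells us whether a vertex cover of size $\le k-1$ exists. The subtlety is that $\mathcal{O}$ only answers the \emph{existence} question and does not return a smaller cover, so to iterate the decrement test we must also be able to \emph{produce} a witness cover of size $k-1$ whenever one exists. I would handle this with the standard self-reduction for vertex cover: to find an actual cover of size $\le k-1$, repeatedly pick an uncovered edge $(u,v)$, branch on including $u$ or $v$, delete the chosen vertex and its incident edges, and decrement the budget; at each step use $\mathcal{O}$ to check feasibility of the current budget on the current residual graph (a residual graph together with its current partial cover plus, say, all remaining vertices forms a trivially available ``size-$k$ certificate'' to feed $\mathcal{O}$, or one can track the budget directly). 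Since each branch decision is validated by $\mathcal{O}$ before it is committed, this extracts a cover of size $\le k-1$ using polynomially many oracle calls, or certifies none exists.

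Next I would assemble the main algorithm: start from the trivial vertex cover $C_0 = V(G)$ of size $n$, and repeatedly apply the extraction procedure of the previous paragraph to obtain covers $C_0 \supsetneq$ (in size) $C_1, C_2, \dots$ with $|C_{i+1}| \le |C_i| - 1$, stopping at the first index $i^\star$ where $\mathcal{O}$ reports that no cover of size $|C_{i^\star}| - 1$ exists. Then $|C_{i^\star}|$ is exactly the minimum vertex cover number. The whole process makes at most $n$ rounds, each round costing a polynomial number of calls to $\mathcal{O}$ and polynomial additional work, so the total running time is polynomial. This yields a polynomial-time exact algorithm for \textsc{Minimum Vertex Cover}, forcing $\mathsf{P} = \mathsf{NP}$.

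I expect the only genuine obstacle to be the gap between a \emph{decision} oracle and a \emph{search} (witness-producing) procedure --- i.e., making sure that every intermediate object handed to $\mathcal{O}$ is legitimately of the form ``a graph together with one of its vertex covers of size $k$'', so that the oracle's promise applies. The self-reduction above resolves this because at every stage we maintain an explicit valid cover of the current residual graph (the already-chosen vertices together with all still-present vertices), and we only ever ask $\mathcal{O}$ whether that size can be beaten by one. Everything else is bookkeeping on budgets and is routine.
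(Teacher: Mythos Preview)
Your overall plan matches the paper's: assume the oracle exists, bootstrap it into an exact polynomial-time algorithm for \textsc{Minimum Vertex Cover}, and derive $\mathsf{P}=\mathsf{NP}$. The gap is in the extraction step, where you turn the decision oracle into a procedure that actually produces a cover one smaller.

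The oracle's interface is restrictive: on input $(G',C')$ with $|C'|=k'$ it answers only whether $\tau(G')\le k'-1$. In your self-reduction, after committing to a partial set $P$ and passing to the residual graph $G'$, you need to know whether $\tau(G')\le b$ for the remaining budget $b=|C_i|-1-|P|$; that requires handing the oracle a vertex cover of $G'$ of size exactly $b+1$. Your proposed certificate, ``all still-present vertices,'' has size $|V(G')|=n-|P|$, which equals $b+1$ only in the very first outer round when $|C_i|=n$. From the second round on $|C_i|<n$, so the oracle call asks whether $\tau(G')\le n-|P|-1$ rather than whether $\tau(G')\le |C_i|-|P|-1$, and branch validation fails; the phrase ``track the budget directly'' does not help, since the oracle accepts a cover, not a number. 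The paper closes exactly this gap by iterating over the vertices of the \emph{current} cover $C$ rather than over edge endpoints: for each $v\in C$ it queries $(G-v,\,C\setminus\{v\})$, whose certificate automatically has the right size $|C|-1$; a ``yes'' lets one recurse with both graph and cover shrunk by one, and when every $v\in C$ yields ``no'' one argues via the complement $V\setminus C$. Your self-reduction can be repaired along these lines, but as written the size-matching between certificate and budget is the missing idea.
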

\begin{proof}
We prove the claim by showing that the existence of such an oracle would imply the existence of a polynomial time algorithm for computing a vertex cover of $G$ of size at most $k-1$, assuming it exists. Therefore, by reiterating the algorithm at most $k$ times, we might be able to compute a minimum vertex cover of $G$ in polynomial time, thus proving that P=NP. Let $C$ be a vertex cover of $G$ of size $k$. 

The algorithm works as follows. First of all, we query the oracle to understand whether $G$ admits a vertex cover of size strictly better than $k$. In case of a ``no'' answer, we know that $C$ is an optimal vertex cover and therefore $G$ does not admit a vertex cover of size $k-1$. So, we assume that the oracle answers ``yes''. This implies that there is a vertex cover of size $k-1$.
In the following we show how to compute a vertex cover of size (at most) $k-1$ in polynomial time.

Let $G-v$ be the graph obtained from $G$ without the vertex $v$ (and all the edges incident to $v$). For every vertex $v$ of $C$, we query the oracle using the graph $G-v$ and the cover $C-v$ (so we want to know whether $G-v$ has a vertex cover of size $k-2$).
If all the $k$ answers returned by the oracle are ``no'', then $V(G)\setminus C$ is a vertex cover of size strictly smaller than $k$. Indeed, the answer ``no'' for $v$ means that there is no vertex cover of size $k-1$ that contains $v$. However, since a vertex cover of size $k-1$ exists, such a vertex cover has to contain the entire neighborhood of $v$ (otherwise some edges incident in $v$ would remain uncovered).

To complete the proof, we assume that the oracle has answered ``yes'' for at least one vertex of $C$, say $v$. This means that there is a vertex cover of size (at most) $k-1$ that contains $v$. We build such a vertex cover by adding $v$ to the vertex cover of size (at most) $k-2$ that is computed recursively on $G-v$ and $C-v$. Clearly, the running time of this algorithm is polynomial in  the number of vertices of the graph.
\end{proof}

\begin{theorem}\label{thm:1_2_graph_des_probl_hard}
Unless P=NP, there is no polynomial time algorithm that decides whether a strategy profile is in NE for the \OTNCG.
\end{theorem}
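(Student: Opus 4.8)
The plan is to reduce from the vertex‑cover problem via Lemma~\ref{lm:np_completeness_vertex_cover_variant}. Since deleting a vertex never increases any degree, the reiteration argument in that lemma stays within the class of subcubic graphs, and since computing a minimum vertex cover of a subcubic graph is NP‑hard, the following problem is hard unless P=NP: given a subcubic graph $G$ together with a vertex cover $C$ of $G$ with $|C|=k$, decide whether $G$ admits a vertex cover of size at most $k-1$. Given such an instance $(G,C)$, I would build a 1-2 host graph $H$ and a strategy profile $\sprofile$ for the \OTNCG containing one distinguished agent $a_0$ such that: (i)~$a_0$'s strategy in $\sprofile$ encodes the cover $C$; (ii)~every agent other than $a_0$ has no improving move, and this is manifest from the construction; and (iii)~$a_0$ has an improving move if and only if $G$ has a vertex cover of size at most $k-1$. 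Then $G(\sprofile)$ is in NE if and only if $G$ has no vertex cover smaller than $C$, so a polynomial‑time algorithm deciding membership in NE would solve the hard problem above.

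For the construction I would let $H$ consist of the distinguished node $a_0$, a node $p_x$ for each vertex $x$ of $G$, a node $q_e$ for each edge $e$ of $G$, and a bounded number of auxiliary nodes used only to calibrate costs. The weight‑$1$ edges of $H$ are $(a_0,p_x)$ for all $x$ and $(p_x,q_e),(p_y,q_e)$ for each edge $e=(x,y)$ of $G$ — i.e.\ the incidence graph of $G$ attached to $a_0$ — while every remaining pair of $H$ gets weight $2$, so $H$ is automatically a 1-2-graph. In $\sprofile$, agent $a_0$ buys exactly the edges $(a_0,p_x)$ with $x\in C$; each edge‑gadget agent $q_e$ buys a single weight‑$1$ edge to an endpoint of $e$ that lies in $C$ (such an endpoint exists since $C$ is a vertex cover); each vertex‑gadget agent $p_x$ with $x\notin C$ buys the at most three weight‑$1$ edges to its incident $q_e$'s; and the auxiliary nodes are wired so that $G(\sprofile)$ is connected and all gadget agents are locally rigid, independently of whether $C$ is a minimum vertex cover. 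If $a_0$ instead plays the strategy that corresponds to an arbitrary vertex cover $D$ of $G$, then $a_0$ reaches each $p_x$ with $x\in D$ at distance $1$, each $q_e$ at distance $2$, and each $p_x$ with $x\notin D$ at a fixed larger distance through a covered neighbour; hence $cost(a_0,\cdot)$ is an affine function of $|D|$ whose slope, for a suitably chosen value of $\alpha$, is negative. Consequently the best responses of $a_0$ among vertex‑cover strategies are precisely the minimum vertex covers, and $a_0$ strictly improves over $C$ exactly when a vertex cover of size at most $k-1$ exists.

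The main obstacle is to guarantee that this equivalence is not destroyed by ``cheating'' deviations of $a_0$ and that the promised rigidity of the remaining agents is genuinely true (and easy to certify). One has to rule out that $a_0$ benefits from buying a weight‑$2$ edge, from buying an edge towards an edge‑gadget node $q_e$, or from buying a vertex set that fails to be a vertex cover of $G$ — in the last case some $q_e$ or some $p_x$ becomes distant and the distance penalty must dominate the edge saving. This is exactly where the auxiliary nodes and the precise choice of $\alpha$ are used: $\alpha$ must be such that the marginal distance saved by one extra hub edge has the right magnitude relative to its price, and such that leaving one edge of $G$ uncovered strictly hurts $a_0$. Symmetrically, for every gadget agent one checks, using subcubicity to keep each check of constant size, that its only cheap incident options are already bought and that no deletion, swap or purchase is improving. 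With these calibrations in place, a polynomial‑time test for membership in NE applied to $(H,\sprofile)$ would decide, for every subcubic $G$ equipped with a vertex cover of size $k$, whether a vertex cover of size $k-1$ exists, which by Lemma~\ref{lm:np_completeness_vertex_cover_variant} implies P=NP.
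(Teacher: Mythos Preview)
Your high-level plan coincides with the paper's: reduce from Vertex Cover through Lemma~\ref{lm:np_completeness_vertex_cover_variant}, build a 1-2 host graph with one distinguished agent whose current strategy encodes the given cover $C$, and arrange that this agent has an improving move iff a smaller cover exists while all remaining agents are already in best response. The reduction idea is right; the concrete gadget you sketch, however, does not deliver the promised rigidity.

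The failure is at the uncovered vertex nodes. You declare $(a_0,p_x)$ to be a weight-$1$ edge for every vertex $x$, but in $\sprofile$ this edge is present only for $x\in C$. Take any $x\notin C$. In $G(\sprofile)$ the node $p_x$ reaches $a_0$ only via some $q_e$ and the covered endpoint $p_y$ of $e$, so $d_{G(\sprofile)}(p_x,a_0)\geq 3$. If $p_x$ buys the weight-$1$ edge $(p_x,a_0)$, her edge cost rises by $\alpha$ while her distance to $a_0$ drops by at least $2$, and her distances to all $p_y$ with $y\in C$ drop as well. Hence for any $\alpha<2$ (and for a larger range once the secondary savings are counted) this is a strict improvement for $p_x$, so $\sprofile$ is \emph{never} in NE, regardless of whether $C$ is minimum. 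No ``auxiliary nodes'' can repair this without either re-inserting that $1$-edge (which erases the dependence of $a_0$'s cost on $C$) or pushing $\alpha$ so high that the intended cost analysis for $a_0$ breaks.

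The paper sidesteps precisely this tension by two design choices you did not make: the distinguished node $u$ is joined to every other node by weight-$2$ edges only, and \emph{all} weight-$1$ edges of $H$ are present in the profile, which at $\alpha=1$ is cost-neutral by Lemma~\ref{lem:1_edge_property}. Then every non-$u$ agent has eccentricity at most $3$ and no missing $1$-edge to buy, so rigidity is immediate. To keep $u$'s cost strictly increasing in the cover size despite the expensive hub edges, the paper (i) makes the vertex nodes a $1$-clique, so an uncovered vertex node is still at distance $3$ from $u$, and (ii) duplicates each edge node, which is what rules out $u$ profitably buying an edge towards an edge node instead of a vertex node.

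One minor slip: you want the affine dependence of $cost(a_0,\cdot)$ on $|D|$ to have \emph{positive} slope, not negative, so that smaller covers are cheaper.
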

\begin{proof}
The reduction is from the Vertex Cover problem and $\alpha = 1$. More precisely, we define both a 1-2-graph and a strategy profile such that every agent but one is playing her best response and computing a best response of the remaining agent is equivalent to computing a minimum vertex cover.

We define the graph $G=(V,E)$ such that there is one vertex node $a_i\in V$ for each vertex $v_i$ of the Vertex Cover instance, and two edge nodes $p_j$ and $p_j'$ in $V$ for each edge $e_j$ of the Vertex Cover instance. Finally, there is a new node $u$, that is neither a vertex node nor an edge node. There is an edge of weight 1 between vertex node $a_i$ and each edge node $p_j, p_j'$ if and only if $v_i$ is an endvertex of $e_j$. Furthermore, there is an edge of weight 1 between every pair of vertex nodes. All the other edges have weight 2. See Fig.~\ref{fig:NP_1_2_graph_metric} for the construction. 
\begin{figure}[h!]
\center{\includegraphics[width=0.4\textwidth]{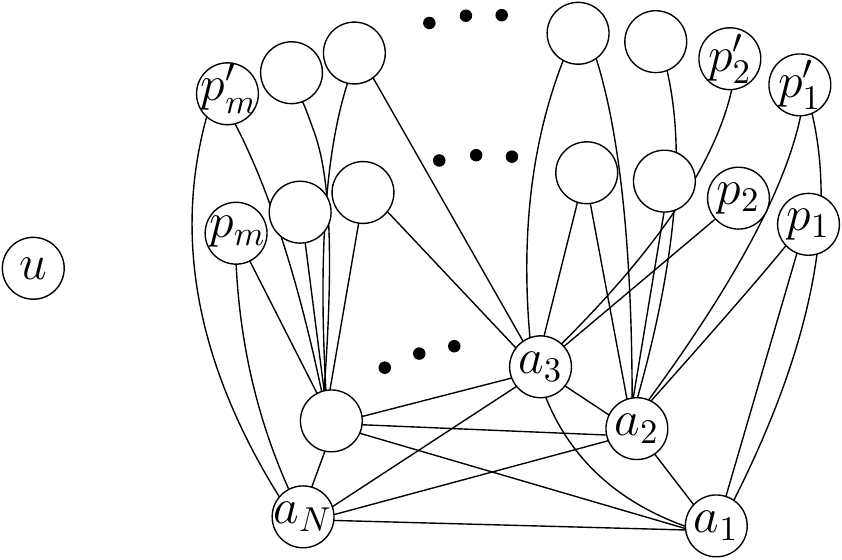}}
\caption{Illustration of the construction used in the reduction. All depicted edges have weight equal to 1; the missing edges are all of weight equal to 2.}\label{fig:NP_1_2_graph_metric}
\end{figure} 

\noindent Consider the strategy profile in which each edge of weight 1 is bought by any of the two agents that are incident to the edge, while $u$ is buying all the edges towards vertex nodes that correspond to a vertex cover of size $k$ w.r.t. the Vertex Cover instance computed using any polynomial time algorithm. Note that by Lemma~\ref{lem:1_edge_property} and since $\alpha =1$, buying a 1-edge is neutral for the incident agents.   

First of all, we observe that the eccentricity of each node is at most 3. 
Therefore, every agent other than $u$ is actually playing a best response. We claim that for any improving move of $u$, there exists another improving move in which agent $u$ buys only the edges towards the vertex nodes that correspond to a vertex cover of size at most $k-1$ w.r.t. the Vertex Cover instance. The claim then would follow from Lemma~\ref{lm:np_completeness_vertex_cover_variant}.

Consider any improving move $S_u$ for $u$. We prove the claim by first showing the existence of an alternative improving move consisting only of edges towards vertex nodes. Indeed, if $u$ bought an edge towards an edge node in $S_u$, w.l.o.g. say  $p_j$, then $u$ would not buy the edge towards any vertex node $a_i$ such that $v_i$ is an endvertex of $e_j$. This is simply because the edge $(v,p_j)$ would only affect the distance between $u$ and $p_j$. Moreover, either $p_j'$ would be at distance 4 from $u$ or $u$ would have also bought the edge towards $p_j'$. In either case, $u$ would have convenience in deleting the edge towards $p_j$ -- as well as the edge towards $p_j'$, if she has bought it -- and in buying the edge towards a vertex node $a_i$, with $v_i$ being an endvertex of $e_j$, thus, decreasing her overall cost by at least 1. 

Now we show that for any improving move $S_u$ in which $u$ buys only edges towards vertex nodes, there is another improving move in which $u$ buys only edges towards vertex nodes that correspond to a vertex cover of the Vertex Cover instance. Indeed, if this is not the case, then there exist two nodes, say $p_j$ and $p_j'$, which are at distance 4 from $u$. Let $a_i$ be a vertex node such that $v_i$ is an endvertex of $e_j$. Clearly, the distance from $u$ to $a_i$ is 3. Therefore, by buying the edge towards $a_i$ the cost of $u$ would decrease by at least 1.

As a consequence, we can restrict the strategy space for agent $u$ only to improving moves that correspond to vertex covers of the Vertex Cover instance.   
Let $k'$ be the number of edges bought by $u$ in any strategy of the restricted strategy space for $u$, and let $N$ and $m$ be the number of vertices and edges of the Vertex Cover instance, respectively, The cost of $u$ is equal to $2k'+2k'+3(N-k')+6m=3N+6m+k'$. Since $N$ and $m$ are fixed, we observe that the cost of $u$ is minimized when $k'$ is minimized. Hence, any improving move for $u$ would define a vertex cover of size of at most $k-1$.
\end{proof}

\subsubsection{1-2-Graphs for $\alpha \leq 1$}
Here we study the \OTNCG with $\alpha \leq 1$. We prove that in this case a NE network always exists. In contrast to the corresponding result for the original NCG~\cite{Fab03} we do not prove this via a generic construction. Moreover, we provide a simple algorithm which computes a social optimum network in polynomial time and we provide tight bounds on the PoA. 

\paragraph{Existence}
In the following we prove an interesting connection between existence of a NE for the \OTNCG with $\alpha \leq 1$ and $k$-spanners. The weight of a $k$-spanner is the total sum of its edge weights. The following results are inspired by Lemma~\ref{lemma_OPT_spanner}.
\begin{lemma}\label{lm:3/2-spanner_is_GE}
Let $\frac{1}{2}\leq\alpha \leq 1$ and let $G$ be a $\frac{3}{2}$-spanner of minimum weight. Then $G$ contains all the edges of $H$ of weight 1 and has a diameter of at most 3. 
\end{lemma}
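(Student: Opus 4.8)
The plan is to prove the two conclusions essentially independently, and to observe that the spanner property of a 1-2-graph already forces both of them — the minimum-weight hypothesis is not actually needed.

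First I would show that $G$ contains every weight-1 edge of $H$. Suppose not, and let $(u,v)$ be a 1-edge of $H$ with $(u,v)\notin E(G)$. Since $H$ is a complete 1-2-graph, the direct edge realizes the host distance, so $d_H(u,v)=w(u,v)=1$, and hence the $\frac{3}{2}$-spanner condition forces $d_G(u,v)\le \frac{3}{2}$. On the other hand, any $u$--$v$ path in $G$ that avoids the edge $(u,v)$ uses at least two edges, and every edge of $G\subseteq H$ has weight at least $1$; such a path therefore has length at least $2>\frac{3}{2}$, a contradiction. Hence all 1-edges of $H$ lie in $G$.

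The diameter bound then follows by a two-case split on an arbitrary pair $u,v$. If $w(u,v)=1$, then $(u,v)\in E(G)$ by the previous step, so $d_G(u,v)=1\le 3$. If $w(u,v)=2$, then $d_H(u,v)=2$ (again the direct edge is a shortest path in the complete 1-2-graph $H$), so the $\frac{3}{2}$-spanner property gives $d_G(u,v)\le \frac{3}{2}\cdot 2 = 3$. Thus $d_G(u,v)\le 3$ for every pair $u,v$, i.e.\ $G$ has diameter at most $3$.

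I do not expect a genuine obstacle here: the only points requiring any care are the routine observations that in a complete 1-2-graph $d_H(u,v)=w(u,v)$ for every pair, and that a detour in $G$ between two nodes costs at least two edges of weight at least $1$. I would also remark that the hypothesis $\frac{1}{2}\le\alpha\le 1$ is inherited from the surrounding discussion and plays no role in this particular statement, and that for the same reason ``minimum weight'' could be weakened to ``any $\frac{3}{2}$-spanner''; I would nonetheless keep the statement as written for consistency with how the lemma is later invoked.
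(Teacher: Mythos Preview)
Your proof is correct and follows essentially the same approach as the paper: both use the $\frac{3}{2}$-spanner inequality $d_G(u,v)\le \frac{3}{2}\,w(u,v)$ together with the fact that all edge weights lie in $\{1,2\}$ to conclude the diameter bound, and both observe that for a 1-edge the bound $d_G(u,v)\le \frac{3}{2}$ combined with every edge having weight at least $1$ forces the edge to be present. Your remark that neither the minimum-weight hypothesis nor the range of $\alpha$ is used is accurate and matches the paper's argument.
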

\begin{proof}
Let $(u,v)$ be any edge of $H$. Since $d_G(u,v) \leq 3/2\cdot d_H(u,v) \leq 3/2\cdot w(u,v)$ and all edge weights are in the set $\{1,2\}$, we have that $d_G(u,v) \leq 3$; furthermore, if $w(u,v)=1$, then $d_G(u,v)=1$, i.e., $(u,v)$ is contained in $G$. Therefore, $G$ contains all the edges of $H$ of weight 1 and has a diameter of at most 3.
\end{proof}

\begin{theorem}\label{thm_1-2_graph_alpha_1_NE}
Let  $\frac{1}{2}\leq\alpha \leq 1$ and let $G$ be a $\frac{3}{2}$-spanner of minimal weight. There is an edge ownership assignment in $G$ such that $G$ is in NE. 
\end{theorem}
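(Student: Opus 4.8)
The plan is to take the minimum-weight $\frac{3}{2}$-spanner $G$ guaranteed by Lemma~\ref{lm:3/2-spanner_is_GE} and exhibit a careful edge-ownership assignment that makes $G$ a NE. By that lemma, $G$ already contains every $1$-edge of $H$ and has diameter at most $3$, so every agent is within distance $3$ of every other agent. I would first assign owners to the $1$-edges: by Lemma~\ref{lem:1_edge_property}, for $\alpha\le 1$ buying a $1$-edge is at worst cost-neutral, so an agent owning a $1$-edge never has a strict incentive to delete it; moreover no agent ever wants to add a $1$-edge since all $1$-edges are already present. The only $2$-edges present in $G$ are those forced by the spanner/diameter condition (every $2$-edge $(u,v)$ of $H$ with $d_{H}(u,v)=2$ must have $d_G(u,v)\le 3$, but a $2$-edge itself may or may not be in $G$). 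I would assign each such $2$-edge in $G$ to one of its endpoints, chosen so that the minimality of $G$ blocks deletion.

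The two things to check are: (i) no agent wants to delete an edge she owns, and (ii) no agent wants to add or swap edges. For (ii): adding any edge $(u,v)$ costs $\alpha\cdot w(u,v)\ge \alpha\ge\frac12$ (for $2$-edges, $2\alpha$), while the distance gain is bounded because the current diameter is already $3$; the total distance reduction from one new edge of weight $w$ to all other nodes is small relative to $\alpha w$ when $\alpha\le 1$ — this needs the standard counting argument using that distances are already $\le 3$ and weights are in $\{1,2\}$, and I expect it to go through cleanly (buying a $2$-edge gains at most $2$ in distance to the far endpoint and little elsewhere, but costs $2\alpha\ge 1$). Swaps are handled by combining the add and delete analyses as in the proof of Theorem~\ref{th:2approx_NE}. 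For (i), deleting an owned $1$-edge is non-improving by Lemma~\ref{lem:1_edge_property}; deleting an owned $2$-edge $(u,v)$ would, by minimality of the $\frac32$-spanner, break the $\frac32$-spanner property, so $d_{G-(u,v)}(u,v)$ would have to exceed $3$ — but wait, that is a global property, and a single agent only cares about her own distance cost, so the real content is that removing $(u,v)$ raises $d_G(u,v)$ from $2$ to at least $4$ (the next possible value given weights in $\{1,2\}$ and no $3$-path existing by minimality), a gain of at least $2$ against a saving of only $2\alpha\le 2$ in edge cost — cost-neutral at worst, and we may break ties in favor of keeping the edge.

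The main obstacle is making the $2$-edge ownership assignment and the deletion argument fully rigorous: minimality of $G$ as a $\frac32$-spanner tells us that removing any single edge destroys the spanner property for \emph{some} pair, but I must argue it destroys it for an endpoint of that very edge, i.e. that $d_{G-(u,v)}(u,v)\ge 4$. This should follow because if there were a path of length $3$ between $u$ and $v$ in $G-(u,v)$, then $G-(u,v)$ would still be a $\frac32$-spanner (the only pair at risk is $(u,v)$ itself, since all other shortest-path structures are unaffected by removing a $2$-edge whose endpoints retain distance $\le 3$), contradicting minimality. Once that is pinned down, the deletion is at best cost-neutral and the assignment can be chosen to make $G$ a genuine NE; combined with the add/swap bounds this completes the proof.
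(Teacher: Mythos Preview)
Your proposal has two genuine gaps.

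\textbf{First, you only verify single-edge moves, which gives GE, not NE.} Your framework ``(i) no agent wants to delete an edge she owns, (ii) no agent wants to add or swap edges'' is precisely the Greedy Equilibrium condition. A Nash Equilibrium requires that no agent can improve by \emph{any} strategy change, which may simultaneously add several edges and delete several others; the distance effects of such moves are not additive over the individual edge operations, so bounding each add, delete, and swap separately does not suffice. The paper's proof does not decompose into single-edge moves at all: it takes an arbitrary improving strategy $S_u'$, uses a counting argument (via the diameter-$3$ structure) to show that $S_u'$ must contain strictly fewer $2$-edges than $S_u$, and then---crucially---combines improving strategies across several agents (after flipping ownership of the problematic $2$-edges) to produce a $\tfrac32$-spanner of strictly smaller weight, contradicting minimality. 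None of that machinery is present in your outline.

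\textbf{Second, your deletion argument rests on a false claim.} You assert that if $d_{G-(u,v)}(u,v)\le 3$ then $G-(u,v)$ is still a $\tfrac32$-spanner because ``the only pair at risk is $(u,v)$ itself''. This is not true. Take any $1$-neighbor $y$ of $v$ with $w(u,y)=2$: in $G$ the shortest $u$--$y$ path may be $u\to v\to y$ of length $3$, and after removing $(u,v)$ one only gets $d_{G-(u,v)}(u,y)\le d_{G-(u,v)}(u,v)+1=4$, which can violate the $\tfrac32$-spanner bound for $(u,y)$ even though $d_{G-(u,v)}(u,v)=3$. Hence minimality of $G$ does not force $d_{G-(u,v)}(u,v)\ge 4$; the violating pair after removal need not be $(u,v)$. (One can try to salvage the deletion bound by noting the violating pair must have $u$ or $v$ as an endpoint and then arguing about the owner's distance increase, but that already requires a case analysis and an ownership choice you have not made---and even then it only addresses single deletions, leaving the first gap intact.)

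In short, the approach of ``check add/delete/swap separately and appeal to minimality for the delete case'' is not enough; the paper's argument is a global one that tracks the net change in the number of $2$-edges under arbitrary deviations and exploits the freedom to reassign ownership across multiple edges simultaneously.
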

\begin{proof}
The claim is proved by contradiction. Consider any edge ownership assignment in $G$ which induces strategy profile $\mathbf{s}$ and assume there is an agent $u\in V$ who can improve on her strategy $S_u$ in $\mathbf{s}$. We will show that if there is a better strategy $S'_u$ for agent $u$, then $|S'_u|\leq |S_u|-1$ and that $S_u'$ contains strictly less 2-edges than $S_u$. Then we prove that for any edge $(u,v)$, which would be removed by agent $u$ in the strategy change from $S_u$ to $S_u'$, we can exchange the ownership of its endpoint such that the new owner $v$ cannot improve on her strategy, or we can apply a combination of the two strategies $S'_u$ and $S'_v$ to $G$ which yields a new graph which is a $3/2$-spanner with less total weight, which contradicts that $G$ is a $3/2$-spanner of minimum weight. Therefore, the edge ownership can be chosen such that graph $G$ is in NE.

First, we prove that $|S'_u|\leq |S_u|-1$ and that $S_u'$ contains less 2-edges than $S_u$. Towards this we claim that the change from $S_u$ to $S_u'$ can only consist of a change of the 2-edges which are bought by $u$ and, if $\alpha = 1$, possibly the removal of some 1-edges. This is true since by Lemma~\ref{lm:3/2-spanner_is_GE} we have that all 1-edges are contained in $G$ and by Lemma~\ref{lem:1_edge_property} removing any 1-edge is not an improving move, in particular, removing a 1-edge is a cost neutral move if $\alpha = 1$. Using the latter, we can define a new strategy $S_u''$ which is identical to $S_u'$ but still has all the 1-edges which are contained in $S_u$. Thus, $S_u''\setminus S_u'$ only consists of 1-edges which are cost neutral for agent $u$ under strategy $S_u'$. Hence, $cost(u,S_u'') = cost(u,S_u')$ and we have $|S_u''| \geq |S_u'|$. Let $S_{u+}=\{v\in V: v\in S''_u\setminus S_u\}$ be the set of nodes to which new edges have been added, $S_{u-}=\{v\in V: v\in S_u\setminus S''_u\}$ be the set of nodes to which the edges have been deleted and let $G''$ be the graph obtained from $G$ by exchanging agent $u$'s strategy $S_u$ with~$S''_u$. 
Since the diameter of $G$ is 3, then, after changing the strategy from $S_u$ to $S''_u$, only distances between $u$ and nodes at hop-distance 2 from $u$ might increase. Thus, if there is a node $v\in S_u\setminus S''_u$ such that $d_{G''}(u,v)\geq 4$ or $d_{G''}(u,x)\geq 4$, where $x$ is at distance 1 from $v$, then the deletion of $v$ from strategy $S_u$ is not an improvement for $u$. This means that for any node $v\in V$ we have $d_{G''}(u,v)\leq 3$.  Therefore, the new strategy $S''_u$ decreases agent $u$'s edge cost by $2\alpha\cdot(|S_u|-|S''_u|)$, increases her distance to all nodes in $S_{u-}$ by 1 and decreases her distance by 1 for $|S_{u+}|$ many nodes. Since we assume that $cost(u,S_u)> cost(u, S''_u)$, then $0 > -2\alpha\cdot(|S_u|-|S''_u|) + |S_{u+}| - |S_{u-}|= (2\alpha+1)(|S_{u+}|-|S_{u-}|)$, thus, $|S_{u+}|\leq |S_{u-}|-1$. Hence, $|S_u''| \leq |S_u|-1$, i.e., $S_u''$ contains strictly less 2-edges than $S_u$. Since $|S_u''| \geq |S_u'|$, we have $|S_u'| \leq |S_u|-1$.

Let $G'$ be the graph obtained from $G$ by exchanging agent $u$'s strategy $S_u$ with strategy $S_u'$. Since the number of edges in $G'$ is strictly less than the number of edges in $G$ and since $G'$ has strictly less 2-edges than $G$, it follows that if the diameter of $G'$ is 3, then $G'$ is a $3/2$-spanner of total weight less than the total weight of $G$ and we get a contradiction. But it might happen that there are at least two nodes $x, y\in V $ at distance 4 in $G'$. Note that if the distance between $x$ and $y$ increased because of removing the edge $(u,v)$ ,i.e., $d(x,y) = d(x,u) + w(u,v) + d(v,y)$, then $w(u,v)=2$. Indeed, if $(u,v)$ was a 1-edge, then the distance between $u$ and $y$ as well as the distance between $u$ and $v$ would increase by 1. Therefore, the 1-edge $(u,v)$ would not a neutral edge and its removing is not an improving move, i.e., $v\in S'_u$. Hence, any edge whose deletion influences the distance between not only its endpoints must be a 2-edge. Since for any $v\in V$ we have $d_{G''}(u, v)\leq 3$ and since $G''$ and $G'$ only differ in 1-edges bought by agent $u$ whose removal increases the distance only to the other endpoint, it follows that for any $v\in V$ we have $d_{G'}(u, v)\leq 3$.
Since the diameter of $G$ is 3 and since for any $v\in V$ we have $d_{G'}(u, v)\leq 3$, then $x$ must be a neighbor of $u$ which is connected by a 1-edge and $y\in S_u\setminus S'_u$. For each such edge $(u,y)$ we can invert the ownership and, if none of the new owners has an improving strategy which does not contain $u$ then agent $u$ has a strategy she cannot improve on. 

Now we prove that after the inversion of the edge ownership for each edge $(u,y)$, for all $y\in S_u \setminus S_u'$, no agent $y$ can have an improving strategy which does not contain $u$. Assume towards a contradiction that $U$ is the non-empty set of nodes $y$, which have an improving strategy $S'_y$ which does not contain $u$. We apply all improving strategies $S'_y$, for all $y\in U$, and $S'_u$ to $G$ and obtain a new graph $G^*$. Note that if there are two nodes $x, y$ such that there is a 2-edge $(x,y)\in E(G)$, the edge can be removed by one of the endpoints, say $x$. This move does not influence the strategy of the agent $y$, since otherwise there must be a node $v\in V$, which is at distance 1 from $x$ and $d_G(y,v)=w(y,x)+w(x,v) = 2+1$, and then we could assign the ownership of $(x,y)$ to agent $y$ and then the edge $(x,y)$ would not be removed from $G$. Therefore, all the strategies can intersect only in pairs of nodes that want to add the same edge.   

Note that for any $y\in U$ we have $S_{y+}\cap S_{u+}=\emptyset$ and $S_{y-}\cap S_{u-}=\{(u,y)\}$, and for all $v\in V$ we have $d_{G^*}(u,v)\leq 3$ and $d_{G^*}(y,v)\leq 3$. The number of edges in $G$ is $|E(G)|=\bigl|\tilde E\cup\left(\bigcup_{y\in U}{S_{y-}}\right)\cup S_{u-}\bigr|= |\tilde E| + \sum_{y\in U}{|S_{y-}|} + |S_{u-}| - |U|$, where $\tilde E \subset E(G)$ is a set of edges which are both in $G$ and in $G^*$. On the other hand,  $|E(G^*)|=\bigl|\tilde E\cup\left(\bigcup_{y\in U}{S_{y+}}\right)\cup S_{u+}\bigr|\leq |\tilde E| + \sum_{y\in U}{|S_{y+}|} + |S_{u+}|\leq |\tilde E| + \sum_{y\in U}{\left(|S_{y-}|-1\right)} + |S_{u+}| - 1 =|\tilde E| + \sum_{y\in U}{|S_{y-}|} + |S_{u+}| - |U| - 1 < |E(G)|$. Hence, since only 2-edges were modified, the new graph $G^*$ is a $3/2$-spanner with less weight than the weight of the spanner $G$, which contradicts that $G$ is a $3/2$-spanner with minimum weight. Therefore, the edge ownership can be chosen such that the graph $G$ is in NE.
\end{proof}

\paragraph{Optimal networks} Now we consider how to compute a social optimum network.

\begin{algorithm2e}[h]
 \Input A complete graph $G = K_n$;
 
 \While{there is 1-1-2 triangle in $G$}{
 Remove the edge of weight 2 from the triangle;}
 	
	\caption{ computes a social optimum for the \OTNCG in polynomial time.}\label{alg:opt_1_2_graphs}
\end{algorithm2e}

\begin{theorem}\label{th:alg_produces_opt_for_1_2_graphs}
For any $\alpha\leq 1$, algorithm \ref{alg:opt_1_2_graphs} produces an optimal network in polynomial time.
\end{theorem}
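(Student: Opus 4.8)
The plan is to identify exactly the network $G^{*}$ returned by the algorithm and then to prove $cost(G^{*}) \le cost(OPT)$ by a pair-by-pair comparison; since $OPT$ has minimum social cost by definition, this forces equality and hence optimality of $G^{*}$. Termination and the polynomial running time are immediate: each iteration deletes one edge of weight $2$, so there are at most $\binom{n}{2}$ iterations, and a $1$-$1$-$2$ triangle can be found in time $O(n^{3})$ by scanning all triples of vertices.

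First I would characterise $G^{*}$. Since the algorithm only ever deletes edges of weight $2$, every $1$-edge of $H$ survives, so $E(G^{*})$ contains all $1$-edges. I would then show that a $2$-edge $(u,v)$ lies in $E(G^{*})$ if and only if $u$ and $v$ have no common neighbour $z$ with $w(u,z)=w(z,v)=1$: if such a $z$ exists, the edges $(u,z),(z,v)$ are present throughout the run, so as long as $(u,v)$ is present the triple $\{u,z,v\}$ forms a $1$-$1$-$2$ triangle, and since the algorithm halts only when no such triangle remains, $(u,v)$ is eventually deleted; conversely $(u,v)$ can only be deleted as the weight-$2$ edge of a $1$-$1$-$2$ triangle $\{u,z,v\}$, which requires exactly such a vertex $z$, so if none exists $(u,v)$ is never deleted. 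In particular $G^{*}$ is independent of the order in which triangles are processed, and assigning each of its edges to one endpoint realises it as a strategy profile.

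Next I would use this characterisation to show that $G^{*}$ preserves all distances of $H$, i.e.\ $d_{G^{*}}(u,v)=w(u,v)=d_{H}(u,v)$ for every pair $u,v$ (the last equality holds since $H$ is metric). For a $1$-edge this is obvious. For a $2$-edge $(u,v)$: either it is in $G^{*}$, giving $d_{G^{*}}(u,v)\le 2$, or it is not and then $u,v$ have a common $1$-neighbour $z$, so the path $u,z,v$ again gives $d_{G^{*}}(u,v)\le 2$; moreover $d_{G^{*}}(u,v)\ge 2$ since no single edge of weight $1$ joins $u$ and $v$. Consequently the distance cost of $G^{*}$ is $\sum_{u}\sum_{v}w(u,v)$, which is the smallest distance cost of any subgraph of $H$ because distances can only grow under edge deletion.

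Finally I would compare $G^{*}$ with $OPT$ pair by pair. For a subgraph $G$ in which every edge is owned by one endpoint one has $cost(G)=\sum_{\{u,v\}}c_{G}(u,v)$, where $c_{G}(u,v)=2\,d_{G}(u,v)+\alpha\,w(u,v)$ if $(u,v)\in E(G)$ and $c_{G}(u,v)=2\,d_{G}(u,v)$ otherwise; so it suffices to prove $c_{G^{*}}(u,v)\le c_{OPT}(u,v)$ for every pair. If $w(u,v)=1$ then $c_{G^{*}}(u,v)=\alpha+2$, while $c_{OPT}(u,v)$ is $\alpha+2$ when $(u,v)\in E(OPT)$ and at least $2\,d_{OPT}(u,v)\ge 4\ge\alpha+2$ otherwise, using $d_{OPT}(u,v)\ge 2$ and $\alpha\le 1$. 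If $w(u,v)=2$ and $(u,v)\notin E(G^{*})$ then $c_{G^{*}}(u,v)=4$ and $c_{OPT}(u,v)\ge 2\,d_{OPT}(u,v)\ge 2\,w(u,v)=4$. The one case that needs care is $w(u,v)=2$ with $(u,v)\in E(G^{*})$, where $c_{G^{*}}(u,v)=2\alpha+4$: then $u$ and $v$ have no common $1$-neighbour, so if $(u,v)\notin E(OPT)$ every $u$-$v$ path in $OPT$ has either at least three edges or exactly two edges $u,x,v$ with $w(u,x)+w(x,v)\ge 3$ (they cannot both be $1$-edges), giving $d_{OPT}(u,v)\ge 3$ and $c_{OPT}(u,v)\ge 6\ge 2\alpha+4$. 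This last inequality is precisely where the hypothesis $\alpha\le 1$ enters, and bounding $d_{OPT}(u,v)$ from below in this case is the step I expect to be the main obstacle. Summing over all pairs gives $cost(G^{*})\le cost(OPT)$, hence $G^{*}$ is optimal.
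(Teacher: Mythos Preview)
Your proof is correct. One tiny omission: in the last case ($w(u,v)=2$, $(u,v)\in E(G^{*})$) you only treat the subcase $(u,v)\notin E(OPT)$; when $(u,v)\in E(OPT)$ you have $c_{OPT}(u,v)=2\alpha+2\,d_{OPT}(u,v)\ge 2\alpha+4=c_{G^{*}}(u,v)$ since $d_{OPT}(u,v)\ge d_H(u,v)=2$, so nothing is lost.

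Your route, however, differs from the paper's. The paper argues structurally: it first shows that one may assume $OPT$ has diameter $2$ (adding any missing edge between vertices at distance $\ge 3$ costs at most $2\alpha\le 2$ but saves at least $2$ in distance cost), then that $OPT$ must contain every $1$-edge, and finally that any diameter-$2$ network containing all $1$-edges must contain every edge of $G^{*}$ (else some $2$-edge with no common $1$-neighbour would be at distance $\ge 3$). Thus $E(G^{*})\subseteq E(OPT)$ with equal distance costs, so $cost(G^{*})\le cost(OPT)$. Your argument bypasses this normalisation of $OPT$ entirely and compares contributions pair by pair directly against an arbitrary optimum; this is more self-contained and makes the role of the hypothesis $\alpha\le 1$ completely explicit (it is the inequality $6\ge 2\alpha+4$). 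The paper's version, on the other hand, yields the extra structural information $E(G^{*})\subseteq E(OPT)$, which it reuses later in Lemma~\ref{lem:E(NE)_is_subset_of_E(OPT)}.
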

\begin{proof}
Let $G^*$ be an optimal network. We first prove that there is an optimal network of diameter 2. We assume that $G^*$ has diameter strictly greater than 2. Let $u$ and $v$ be the vertices at distance greater than or equal to 3 in $G^*$. We show that $G^*+(u,v)$ is also an optimal network. Indeed, the cost of adding the edge to the network is at most $2\alpha \leq 2$, while the sum of the all-to-all distances decreases by at least 2 as the distance between $u$ and $v$ decreases by at least 1.

Next, we show that the social optimum contains all 1-edges. Indeed, if one 1-edge, say $(u,v)$, were missing in $G^*$, then $G^*+(u,v)$ would be a network which is cheaper than $G$, because its edge cost is at most 1 plus the edge cost of $G$, while its distance cost is at most the distance cost of $G$ minus 2.

Now, observe that the network $G$ produced by the algorithm has diameter equal to 2 and contains all 1-edges. The claim follows by observing that every network of diameter 2 that contains all the 1-edges has to contain all the edges of $G$.
\end{proof}

\paragraph{Price of Anarchy}

We start with the following technical lemma observing a relation between stable networks and the corresponding optimum.

\begin{lemma}\label{lem:E(NE)_is_subset_of_E(OPT)}
Consider $0<\alpha\leq 1$. Let $G^*$ be the social optimum obtained by Algorithm~\ref{alg:opt_1_2_graphs} and let $G$ be a stable network. Then $E(G)\subseteq E(G^*)$. Moreover, $d_G(u,v)=2$ for every 1-edge $(u,v)\notin~E(G)$ and $d_G(u,v)\leq 3$ for every 2-edge $(u,v)\notin E(G^*)$.
\end{lemma}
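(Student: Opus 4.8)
The plan is to verify the three claims of Lemma~\ref{lem:E(NE)_is_subset_of_E(OPT)} in order, using the structural description of the social optimum $G^*$ from Theorem~\ref{th:alg_produces_opt_for_1_2_graphs} together with the spanner/Nash properties established earlier. Recall from the proof of Theorem~\ref{th:alg_produces_opt_for_1_2_graphs} that $G^*$ has diameter exactly $2$ and contains precisely the $1$-edges of $H$ plus exactly those $2$-edges whose endpoints have no common $1$-neighbor (i.e.\ $G^*$ is obtained by deleting every $2$-edge that lies in a $1$-$1$-$2$ triangle). First I would record this explicitly: $(u,v)\in E(G^*)$ for every $1$-edge, and for a $2$-edge $(u,v)$ we have $(u,v)\notin E(G^*)$ iff $d_{G^*}(u,v)=2$ iff $u$ and $v$ share a $1$-neighbor.

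For the containment $E(G)\subseteq E(G^*)$: a stable (NE) network $G$ never contains a $2$-edge $(u,v)$ for which $u$ and $v$ share a common $1$-neighbor $z$, because the owner of $(u,v)$ could delete it and still have $d_{G-(u,v)}(u,v)=2$ via $z$ (using that, by Lemma~\ref{lem:1_edge_property} for $\alpha\le 1$ and Lemma~\ref{lm:3/2-spanner_is_GE}-style reasoning, $G$ must contain all $1$-edges — this needs a short argument that in any NE all $1$-edges are present when $\alpha<1$, and that we may assume so when $\alpha=1$), so deleting the $2$-edge saves $\alpha\cdot 2>0$ in edge cost while the distance to $v$ is unchanged and no other distance increases by more than the savings; thus it is an improving or at worst the construction-relevant move. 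Hence every $2$-edge of $G$ must be one of the $2$-edges retained in $G^*$, and every $1$-edge of $G$ is trivially in $G^*$; therefore $E(G)\subseteq E(G^*)$.

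For the second assertion, let $(u,v)$ be a $1$-edge with $(u,v)\notin E(G)$. By the argument above (or directly by Lemma~\ref{lem:1_edge_property}), when $\alpha<1$ no NE omits a $1$-edge, so this case only arises for $\alpha=1$ or in the "stable network" produced by the construction of Theorem~\ref{thm_1-2_graph_alpha_1_NE}; in that setting Lemma~\ref{lem:1_edge_property} says buying the $1$-edge is cost-neutral, which forces $d_G(u,v)\le 2$, and since $(u,v)\notin E(G)$ we get $d_G(u,v)=2$. For the third assertion, let $(u,v)$ be a $2$-edge with $(u,v)\notin E(G^*)$; by the structure of $G^*$ this means $u$ and $v$ have a common $1$-neighbor $z$ in $H$, hence $z$ is a common $1$-neighbor in $G$ as well since $G$ contains all $1$-edges, giving $d_G(u,v)\le w(u,z)+w(z,v)=2\le 3$. (Actually this yields the stronger $d_G(u,v)=2$; stating $\le 3$ suffices for the lemma.)

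The main obstacle is pinning down exactly what "stable network" means here and handling the $1$-edge issue uniformly: for $\alpha<1$ Lemma~\ref{lem:1_edge_property} forces all $1$-edges into any NE, but for $\alpha=1$ a NE need only be cost-neutral about them, so one must either invoke the specific NE from Theorem~\ref{thm_1-2_graph_alpha_1_NE} (which is a minimum-weight $\tfrac32$-spanner and hence, by Lemma~\ref{lm:3/2-spanner_is_GE}, contains all $1$-edges and has diameter $3$) or argue that the lemma's conclusions hold for any NE by the cost-neutrality estimate. I would resolve this by stating at the outset that $G$ denotes either a NE (for $\alpha<1$) or the NE of Theorem~\ref{thm_1-2_graph_alpha_1_NE} (for $\alpha=1$), so that in all cases $G$ contains every $1$-edge and has diameter at most $3$; the rest is then the short deletion-move and common-neighbor bookkeeping sketched above.
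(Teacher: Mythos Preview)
Your argument hinges on the claim that every stable network $G$ contains all $1$-edges. This holds for $\alpha<1$ by Lemma~\ref{lem:1_edge_property}, but for $\alpha=1$ it is false in general: buying a missing $1$-edge is merely cost-neutral, so a NE may omit $1$-edges. Your proposed fix of restricting $G$ to be the particular NE of Theorem~\ref{thm_1-2_graph_alpha_1_NE} changes the hypothesis of the lemma, which is stated for an \emph{arbitrary} stable network. Without all $1$-edges present, your deletion argument for $E(G)\subseteq E(G^*)$ breaks: if $(u,v)$ is a $2$-edge of $G$ with common $1$-neighbor $z$ but, say, $(v,z)\notin E(G)$, then removing $(u,v)$ may raise $d_G(u,v)$ from $2$ to $3$ and simultaneously raise distances from $u$ to nodes reached through $v$, so the ``save $2\alpha$, lose nothing'' bookkeeping no longer goes through. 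Your proof of the third assertion (which even yields $d_G(u,v)\le 2$) collapses for the same reason.

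The paper does \emph{not} assume that all $1$-edges are present. It first establishes $d_G(u,v)=2$ for every missing $1$-edge (your cost-neutrality remark does give this), and then, for a $2$-edge $(u,v)\in E(G)\setminus E(G^*)$ with common $1$-neighbor $z$, performs a case split on which of $(u,z),(v,z)$ lie in $G$. In each case it exhibits a concrete improving move (a swap, or the purchase of a specific $1$-edge by $u$, $v$, or $x$), using the length-$2$ detours guaranteed by the first step to control the resulting distances. The bound $d_G(u,v)\le 3$ for $2$-edges outside $G^*$ likewise needs this finer analysis; the paper's bound is $3$ rather than your $2$ precisely because one of the two $1$-edges through $z$ may be absent. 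For $\alpha<1$ your shortcut is valid and pleasantly shorter, but it does not cover the lemma as stated at $\alpha=1$.
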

\begin{proof}
We observe that $G^*$ contains all the 1-edges and has diameter 2. So, every 1-edge contained in $G$ is also contained in $G^*$. Let $(u,v)$ be a 1-edge that is not contained in $G$. We have $d_G(u,v)=2$, as otherwise $u$ could buy the edge towards $v$ to improve her cost by at least 1.

Let $(u,v)$ be a 2-edge of $G$. We show by contradiction that $(u,v)$ is also contained in $G^*$. Assume that $(u,v)$ is not contained in $G^*$. Since $G^*$ has diameter 2, there exists a vertex $x$ such that $(u,x)$ and $(v,x)$ are two 1-edges. First of all, we observe that $G$ cannot contain both the edges $(u,x)$ and $(v,x)$ as otherwise the agent that is buying the 2-edge $(u,v)$ would remove such an edge without increasing any point-to-point distance in the graph and thus saving $2\alpha$ of her edge cost. We split the proof into two cases, according to whether exactly one of the two edges between $(u,x)$ and $(v,x)$ is contained in $G$, or not, and we show how to obtain a contradiction in either case.

We consider the case in which either $(u,x)$ or $(v,x)$ is an edge of $G$. W.l.o.g., we assume that $(u,x)$ is an edge of $G$. Since $d_G(v,x)=2$, there is a vertex, say $y$, such that $(x,y)$ and $(y,v)$ are two 1-edges of $G$. If the edge $(u,v)$ is bought by $v$, then $v$ can improve her cost by swapping the edge $(u,v)$ with the edge $(x,v)$. By this the edge cost decreases by $\alpha$ and no distances from $v$ towards all the other vertices increases. Therefore, the edge $(u,v)$ is bought by player $u$. Because $G$ is stable, there is a vertex $z$ such that the unique shortest path from $u$ to $z$ passes through $v$, as otherwise $u$ would never have bought the edge towards $v$. Therefore, we have $1+d_G(x,z)=d_G(u,z)+d_G(x,z)\geq d_G(u,v)+d_G(v,z)+1=d_G(v,z)+3$ which implies that $d_G(x,z) \geq d_G(v,z)+2$. But in this case, $x$ can improve on her cost by buying the 1-edge towards $v$. By this, her edge cost increases by at most 1 while both the distances towards $v$ and $z$ decrease by at least 1. Hence, $G$ could not be stable.

We consider the case in which neither $(u,x)$ nor $(v,x)$ is an edge of $G$. Since $d_G(u,x)=d_G(v,x)=2$, there are two vertices, say $y$ and $z$, such that $(u,y),(y,x),(x,z)$, and $(z,v)$ are four 1-edges in $G$. We claim that $d_G(u,z)=d_G(v,y)=2$. We prove the claim for $d_G(u,z)$ as the proof for $d_G(v,y)$ uses similar arguments. The claim is proved by contradiction. If $d_G(u,z)=1$, then the player buying the edge $(u,v)$ may remove such an edge, without increasing any point-to-point distance, and thus saving a cost of $2\alpha$. If $d_G(u,z)\geq 3$, then $u$ can improve on her cost by buying the 1-edge towards $x$. By this, the edge cost increases by at most 1 while both the distances towards $x$ and $z$ decrease by 1. As a consequence, there is a vertex $w$ such that the unique shortest path from $u$ to $w$ in $G$ passes through $v$, as otherwise $u$ would never bought the edge towards $v$. Therefore, we have $2+d_G(x,w)=d_G(u,x)+d_G(x,w)\geq d_G(u,v)+d_G(v,w)+1=d_G(v,z)+3$ which implies that $d_G(x,w) \geq d_G(v,z)+1$. But in this case $x$ can improve on her cost by buying the 1-edge towards $v$. Indeed, by this the edge cost increases by at most 1 while both the distances towards $v$ and $w$ decrease by at least 1. Hence, $G$ could not be stable.

To complete the proof, it remains to show that $d_G(u,v)\leq 3$ for every 2-edge $(u,v)$ that is not in $G^*$. Let $(u,v)$ be a 2-edge that is not in $G^*$. Since $E(G)\subseteq E(G^*)$, $(u,v)$ is not contained in $G$. We prove by contradiction that $d_G(u,v) \leq 3$. For the sake of contradiction, assume that $d_G(u,v)\geq 4$.  Since $G^*$ has diameter 2, there is a vertex $x$ such that $(u,x)$ and $(v,x)$ are two 1-edges. Since $d_G(u,x),d_G(v,x) \leq 2$, both edges $(u,x)$ and $(x,v)$ are missing from $G$. Furthermore, there are two vertices, say $y$ and $z$, such that $(u,y),(y,x),(x,z)$, and $(z,v)$ are 1-edges in $G$. Because $d_G(u,v)\geq 4$, we have that $d_G(u,z)\geq 3$. In this case, $u$ can improve her cost by buying the edge towards $x$. By this, the edge cost increases by $\alpha\leq 1$ while all the distances towards $x$, $z$, and $v$ decrease by 1.
\end{proof}

\begin{theorem}\label{th:1_2_graph_UB_PoA_alpha_between_one_half_and_1}
For $1/2 \leq \alpha < 1$, $PoA \leq 3/(\alpha + 2)$.
\end{theorem}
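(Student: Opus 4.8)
The plan is to compare an arbitrary NE network $G$ with the social optimum $G^{*}$ produced by Algorithm~\ref{alg:opt_1_2_graphs} (which is optimal by Theorem~\ref{th:alg_produces_opt_for_1_2_graphs}) in a pair-by-pair fashion, in the spirit of the proof of Theorem~\ref{thm:metric_PoA}, but squeezing out the sharper constant by exploiting the $1$-$2$ structure together with the hypothesis $\alpha<1$. For a pair $\{u,v\}$ set $c_{G}(u,v):=\alpha\, w(u,v)\,\mathbf{1}_{\{(u,v)\in E(G)\}}+2\,d_{G}(u,v)$, and define $c_{G^{*}}(u,v)$ the same way with $G^{*}$ in place of $G$. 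Since in any NE (and in $G^{*}$) every edge has a unique owner, and since $\sum_{u}d_{G}(u,V)$ counts each unordered distance twice, we get $cost(G)=\sum_{\{u,v\}}c_{G}(u,v)$ and $cost(G^{*})=\sum_{\{u,v\}}c_{G^{*}}(u,v)$. So it suffices to prove $c_{G}(u,v)\le\frac{3}{\alpha+2}\,c_{G^{*}}(u,v)$ for every pair and then sum.

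Before the case distinction I would record four facts. (i) As $\alpha<1$, Lemma~\ref{lem:1_edge_property} forces every $1$-edge of $H$ to lie in $G$ (otherwise its owner has an improving move), so $d_{G}(u,v)=1$ whenever $w(u,v)=1$. (ii) By Theorem~\ref{th:alg_produces_opt_for_1_2_graphs} and its proof, $G^{*}$ has diameter $2$ and consists of all $1$-edges together with exactly those $2$-edges whose endpoints have no common $1$-neighbour; in particular $d_{G^{*}}(u,v)=w(u,v)$ for every pair, and every $2$-edge missing from $G^{*}$ has a common $1$-neighbour. (iii) By Lemma~\ref{lem:E(NE)_is_subset_of_E(OPT)}, $E(G)\subseteq E(G^{*})$. (iv) By Lemma~\ref{lem:NE_spanner}, $G$ is an $(\alpha+1)$-spanner; since $d_{H}(u,v)=2$ for a $2$-edge, this gives $d_{G}(u,v)\le 2(\alpha+1)<4$, hence $d_{G}(u,v)\le 3$ as distances are integers.

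Now I would run the four cases for a pair $\{u,v\}$. If $w(u,v)=1$, the edge is present in both networks and $d_{G}=d_{G^{*}}=1$, so $c_{G}(u,v)=c_{G^{*}}(u,v)=\alpha+2$ and the ratio is $1\le\frac{3}{\alpha+2}$ since $\alpha\le 1$. If $w(u,v)=2$ and $(u,v)\in E(G)$, then $(u,v)\in E(G^{*})$ by (iii) and $d_{G}=d_{G^{*}}=2$, so $c_{G}(u,v)=c_{G^{*}}(u,v)=2\alpha+4$ and the ratio is $1$. If $w(u,v)=2$, $(u,v)\notin E(G)$ but $(u,v)\in E(G^{*})$, then $c_{G^{*}}(u,v)=2\alpha+4$ while $c_{G}(u,v)=2\,d_{G}(u,v)\le 6$ by (iv), so the ratio is at most $\frac{6}{2\alpha+4}=\frac{3}{\alpha+2}$; this is the \emph{tight} case. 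Finally, if $w(u,v)=2$ and $(u,v)\notin E(G^{*})$ (hence also $\notin E(G)$), then by (ii) the endpoints share a $1$-neighbour $x$, and since all $1$-edges are in $G$ by (i), $d_{G}(u,v)\le d_{G}(u,x)+d_{G}(x,v)=2$, so $c_{G}(u,v)=c_{G^{*}}(u,v)=4$ and the ratio is $1$. In every case $c_{G}(u,v)\le\frac{3}{\alpha+2}c_{G^{*}}(u,v)$; summing over all pairs gives $cost(G)\le\frac{3}{\alpha+2}cost(G^{*})$, and since $G^{*}$ realizes the social optimum, this yields the claimed \PoA bound.

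I expect the crux to be the third case, which is where the hypothesis $\alpha<1$ is used in an essential, two-sided way: one must bound $d_{G}(u,v)\le 3$ for a $2$-edge the NE declines to buy (the spanner argument, or equivalently the deviation in which an endpoint buys that edge, giving $d_{G}(u,v)\le 2\alpha+2<4$), and one must simultaneously know that the same edge survives in $G^{*}$, so that the numerator $6$ and the denominator $2\alpha+4$ match up to exactly $\frac{3}{\alpha+2}$. A secondary subtlety is the fourth case: the generic bound $d_{G}(u,v)\le 3$ would only give ratio $3/2>\frac{3}{\alpha+2}$ there, so one genuinely needs $d_{G}(u,v)=2$, which again rests on all $1$-edges being present in the NE. Everything collapses at $\alpha=1$, where $1$-edges are merely cost-neutral, consistent with the jump of the \PoA to $3/2$.
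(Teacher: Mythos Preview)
Your proof is correct and follows essentially the same pair-by-pair ratio argument as the paper's own proof, bounding the quantity $\sigma=(\alpha w(u,v)x+2d_G(u,v))/(\alpha w(u,v)x^*+2d_{\OPT}(u,v))$ by $3/(\alpha+2)$ for every unordered pair. The only cosmetic difference is that the paper organizes the case split by the distance $d_{G'}(u,v)$ in the subgraph $G'$ of all $1$-edges rather than by membership of $(u,v)$ in $E(G)$ and $E(G^{*})$, and re-derives in-line the facts you cite from Lemma~\ref{lem:E(NE)_is_subset_of_E(OPT)} and Lemma~\ref{lem:NE_spanner}.
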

\begin{proof}
First of all, we observe that both the social optimum and any NE contain all the 1-edges. 

Let $G$ be a NE and let $u$ and $v$ be two distinct vertices. 
Let $x$ and $x^*$ be two Boolean variables such that $x = 1$ iff $(u, v)$ is an
edge of $G$ and $x^* = 1$ iff $(u, v)$ is an edge of the social optimum $\OPT$.
We prove the claim by showing that
$$
\sigma:= \frac{\alpha \cdot w(u,v)x+2d_G(u,v)}{\alpha \cdot w(u,v)x^*+2d_{\OPT}(u,v)} \leq \frac{3}{\alpha+2}.
$$
First of all, we observe that if $w(u,v)=1$, then $x=x^*=1$. Furthermore, if $x=x^*=1$, then $\sigma = 1$. Therefore, we only need to prove the claim for the case in which $w(u,v)=2$ and $x$ and $x^*$ that cannot be both equal to 1. 

Let $G'$ be the graph induced by all the 1-edges. We observe that if $d_{G'}(u,v)=2$, then neither $\OPT$ nor $G$ contains the edge $(u,v)$ since $G'$ is a subgraph of both $\OPT$ and $G$. Therefore, we assume that $d_{G'}(u,v)\geq 3$. In this case, we have that $x^*=1$: indeed, the addition of edge $(u,v)$ to $\OPT$ would increase the edge cost by $2\alpha$, but would decrease the overall sum of all-to-all distances by at least 2. Similarly, if $d_{G'}(u,v)\geq 4$, then $x=1$. Since we are considering the case in which $x$ and $x^*$ cannot be both equal to 1, but $x^*=1$, it follows that $x=0$. Therefore, $d_{G'}(u,v)=3$ and thus, $d_{G}(u,v) = 3$. Hence, $\sigma \leq 6/(2\alpha+4) = 3/(\alpha+2)$. The claim follows.
\end{proof}

\noindent We proceed with a lower bound on the PoA which matches the upper bounds given in Theorem~\ref{thm:metric_PoA} and Theorem~\ref{th:1_2_graph_UB_PoA_alpha_between_one_half_and_1}.
\begin{theorem}\label{thm:LB_PoA_1_2_graph_alpha_1}
For every constant $\epsilon > 0$, 
\begin{equation*}
PoA \geq 	\begin{cases}
				3/2 - \varepsilon				&	\text{if $\alpha=1$;}\\
				3/(\alpha+2) - \varepsilon		&	\text{if $1/2 \leq \alpha < 1$}.
			\end{cases}
\end{equation*}
\end{theorem}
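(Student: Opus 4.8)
The plan is to establish both bounds by constructing, for every $\varepsilon>0$ and all large enough $n$, a $1$-$2$ host graph $H$ on $n$ nodes together with a network $G$ that is a NE on $H$ and for which $cost(G)/cost(\OPT)$ is at least $3/2-\varepsilon$ (when $\alpha=1$) or $3/(\alpha+2)-\varepsilon$ (when $1/2\le\alpha<1$). In both cases the comparison is done pairwise, exactly as in the proof of Theorem~\ref{thm:metric_PoA}: $cost(G)/cost(\OPT)$ is a weighted average of the per-pair ratios, so it suffices to arrange that $(1-o(1))\binom{n}{2}$ node pairs $(u,v)$ are ``extremal'', i.e.\ they are $2$-edges of $H$ that lie at distance $3$ in $G$ and at distance $2$ in $\OPT$. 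Such a pair contributes $2\cdot 3$ to $cost(G)$ and $2\alpha x^*+2\cdot 2$ to $cost(\OPT)$, where $x^*=1$ if the optimum is forced to buy the direct $2$-edge between $u$ and $v$ and $x^*=0$ otherwise; so the ratio over these dominating pairs, and hence the bound on $\PoA$ in the limit, is $6/(2\alpha x^*+4)$, which is $3/(\alpha+2)$ if $x^*=1$ and $3/2$ if $x^*=0$.

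\textbf{Case $1/2\le\alpha<1$.} Let the $1$-edges of $H$ form a graph $G'$ of diameter exactly $3$ in which all but $o(n^2)$ pairs are at distance $3$ (such graphs exist, e.g.\ a random $G(n,\Theta(n^{-2/3}))$ or an explicit $\Theta(n^{1/3})$-regular construction), and make every remaining pair a $2$-edge. I claim $G:=G'$ is a NE: by Lemma~\ref{lem:1_edge_property} every NE for $\alpha<1$ contains all $1$-edges, deleting or swapping a $1$-edge is non-improving when $\alpha<1$, and because $G$ has diameter $3$, buying any $t$ additional $2$-edges lowers the buyer's distance cost by at most $t$ (a new $2$-edge only shortens the distance to its own endpoint, from $3$ to $2$) at edge cost $2\alpha t\ge t$. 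By Theorem~\ref{th:alg_produces_opt_for_1_2_graphs}, $\OPT$ equals $G'$ plus a direct $2$-edge for each pair at distance $\ge 3$ in $G'$, so $x^*=1$ on the dominating pairs and the limiting ratio is $3/(\alpha+2)$.

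\textbf{Case $\alpha=1$.} The construction above only gives ratio $\to 1$ at $\alpha=1$, since the optimum then acquires the shortcut $2$-edges at zero net cost; so I use a construction that exploits the other half of Lemma~\ref{lem:1_edge_property}, that a NE for $\alpha=1$ need not contain all $1$-edges. Put $V=\{w\}\cup C\cup R$ with $C=\{c_1,\dots,c_k\}$, $k:=\lfloor n^{1/3}\rfloor$, and $R$ split into equal groups $T_1,\dots,T_k$. Make the following pairs the $1$-edges of $H$: all pairs inside $C$; all pairs $(w,x)$ (so $w$ is universal); and all pairs $(c_i,r)$ with $r\in T_i$. Every other pair --- in particular every $R$--$R$ pair and every $(c_i,r)$ with $r\notin T_i$ --- is a $2$-edge. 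Let $G$ consist of the clique on $C$, the $k$ edges $(w,c_i)$, and the edges $(c_i,r)$ for $r\in T_i$, with each $r$ owning its edge to $c_i$ and $w$ owning the edges $(w,c_i)$. Then $G$ has diameter $3$ and all but $o(n^2)$ pairs --- namely the pairs $(r,r')$ with $r\in T_i$, $r'\in T_j$, $i\ne j$ --- lie at distance $3$; each such pair is a $2$-edge of $H$ with common $1$-neighbour $w$, so by Algorithm~\ref{alg:opt_1_2_graphs} it is absent from $\OPT$ (which is just the $1$-edge graph of $H$, of diameter $2$ since $w$ is universal) and lies at distance $2$ there. Hence $x^*=0$ on the dominating pairs and the limiting ratio is $3/2$.

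\textbf{Verifying the equilibrium --- the main obstacle.} The technical heart, and what I expect to be the hardest part, is showing $G$ is a NE at $\alpha=1$; every design choice above serves this. A non-hub $r\in T_i$ is $1$-adjacent in $H$ only to $c_i$ and $w$, so the only $1$-edge it could buy is $(w,r)$, and this just trades one unit of edge cost for lowering $d_G(r,w)$ from $2$ to $1$ (cost-neutral), with no further shortcut because $w$'s only $G$-neighbours are the hubs, which $r$ already reaches in two steps. A hub $c_i$ is $1$-adjacent in $H$ precisely to its $G$-neighbourhood, so it has no $1$-edge to buy, while deleting a clique edge strictly hurts it (a whole group jumps from distance $2$ to $3$). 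The vertex $w$ is universal, but buying any $(w,r)$ is again only cost-neutral, and deleting a $(w,c_i)$ strictly hurts. Lastly, buying $2$-edges never helps, since $G$ has diameter $3$ and every $2$-edge of $H$ incident to a hub or to $w$ leads to a vertex already at distance $2$. The point of making $(c_i,r)$ a $1$-edge only for $r\in T_i$, and of routing all ``common-$1$-neighbour'' duty through a single universal $w$ whose $G$-neighbourhood is confined to $C$, is exactly to rule out the otherwise fatal move where a vertex far from an entire group buys a cheap $1$-edge to that group's hub and shortcuts the whole group at once.
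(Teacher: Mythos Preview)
Your approach is essentially the same as the paper's, and the argument is correct. Your $\alpha=1$ construction --- hubs forming a clique, one group of leaves per hub, and a single universal vertex whose NE edges go only to the hubs --- is exactly the paper's construction (the paper takes $|C|=N$ and $|T_i|=N$ so $n=N^2+N+1$, while you take $|C|=n^{1/3}$; either choice works). For $1/2\le\alpha<1$ the paper instantiates the host graph with the specific graph from the $\alpha=1$ case (dropping the universal-to-leaf $1$-edges) and upper-bounds $\OPT$ crudely by the complete graph, whereas you phrase it abstractly as ``any sparse diameter-$3$ graph of $1$-edges'' and pin down $\OPT$ exactly via Algorithm~\ref{alg:opt_1_2_graphs}; the paper's graph is an instance of your abstract family, and your use of the exact optimum is a mild sharpening. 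One small point worth making explicit in your write-up: for the NE check you only discuss single-edge moves, but a full strategy change (delete some owned $1$-edges and buy some $2$-edges) must also be excluded; this follows by first adding the $2$-edges (each costs $2\alpha\ge 1$ and, by your diameter-$3$ observation, saves at most $1$) and then deleting the $1$-edges (each saves $\alpha\le 1$ but raises the distance to its endpoint by at least $1$), so the combination is never strictly improving. The paper's own NE verification is equally terse on this point.
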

\begin{proof}
We prove the lower bound for $\alpha=1$ first. Consider the host graph $H$ contains a clique $K$ of $N$ vertices formed by 1-edges only. Each vertex $v$ of the clique is the center of star $X_v$ made of 1-edges only and whose leaves are $N$ new vertices. Finally, there is a new vertex, that we call $u$, that is connected to every other vertex by a 1-edge. Thus, the overall number of vertices of the host graph is $n=N^2+N+1$. All other edges are 2-edges.
\begin{figure}[ht]
\begin{minipage}[ht]{0.5\textwidth}
\center{\includegraphics[width=0.5\textwidth]{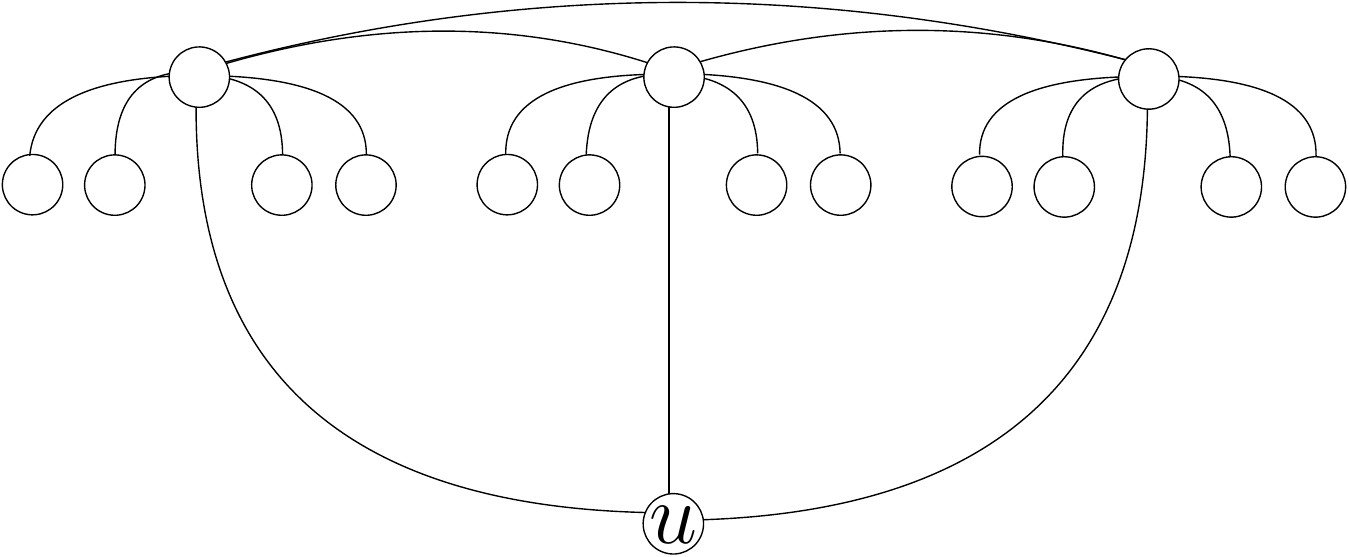}}
\end{minipage}
~
\begin{minipage}[ht]{0.5\textwidth}
\center{\includegraphics[width=0.55\textwidth]{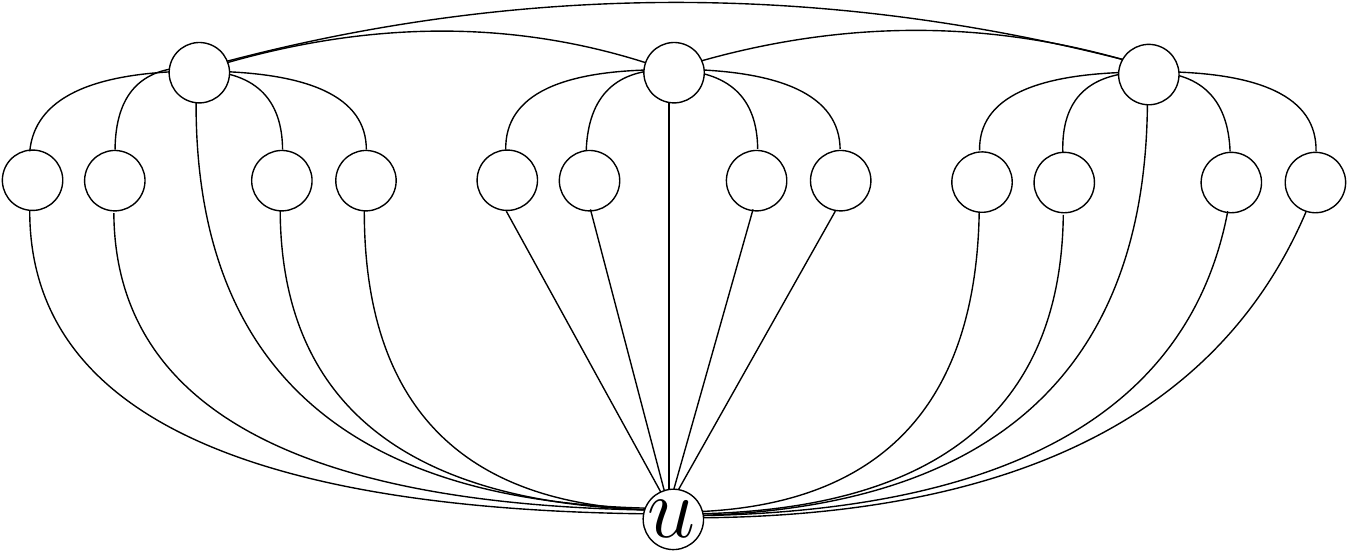}}
\end{minipage}
\caption{NE and optimal networks for $N=4$. All depicted edges are 1-edges. On the right hand side, the optimal network for $\alpha=1$ is depicted. The graph on the left hand side is a NE for every $1/2 \leq \alpha \leq 1$.}\label{fig:LB_PoA_example}
\end{figure}

We observe that the social optimum corresponds to exactly the subgraph induced by the 1-edges. Therefore, the edge cost of the social optimum is $O(N^2)$, while the distance cost is at most $2N^4+2N^2$. Therefore, the social cost of the social optimum is at most $2N^4+O(N^2)$.

We claim that the subgraph induced by all the 1-edges, except for those among $u$ and the leaves of each of the $X_v$, is a NE. Indeed, since the resulting graph has diameter 3, no player has an incentive to buy a 2-edge. Furthermore, no player has an incentive in removing a 1-edge. Finally, neither $u$ nor any leaf of any star $X_v$ has an incentive in buying the leftover 1-edge connecting them. The edge cost of the stable network is $O(N^2)$, while the distance cost is at least $3N^2(N(N-1))=3N^4-3N^3$, since any leaf of any star $X_v$ has all the $N(N-1)$ leaves of the other stars at distance 3. Therefore, the social cost of the stable graph is $3N^4-\Theta(N^3)$. The claim for $\alpha=1$ follows by choosing a sufficiently large value of $N$.

Now, we prove the claim for $1/2 \leq \alpha < 1$. Let the graph on the left hand side of Figure \ref{fig:LB_PoA_example} be the host graph (only 1-edges are depicted). A trivial upper bound on the social optimum is the cost of the entire host graph. Therefore, the edge cost of the social optimum is upper bounded by $2\alpha\frac{(N^2+N+1)(N^2+N)}{2} = \alpha N^4 + \Theta(N^3)$, while the distance cost is upper bounded by $2(N^2+N+1)(N^2+N) = 2N^4 + \Theta(N^3)$. Therefore, the cost of the social optimum is $(\alpha + 2)N^4 + \Theta(N^3)$. Once again, the subgraph induced by all the 1-edges is a NE as, for $\alpha < 1$, any NE contains all the 1-edges. Furthermore, since the resulting graph has diameter 3, no player has an incentive to buy a 2-edge as $\alpha \geq 1/2$. As already proved for the case $\alpha=1$, the social cost of the NE graph is $3N^4-\Theta(N^3)$. The claim now follows by choosing a sufficiently large value of $N$.
\end{proof}

Now we show that selfishness does not lead to a loss in social welfare if $\alpha$ is small enough.

\begin{theorem}\label{th:1_2_graph_PoA_is_1}
For any $0\leq\alpha < \frac{1}{2}$ the PoA is equal to $1$.
\end{theorem}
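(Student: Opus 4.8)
The plan is to prove the stronger fact that every NE network of the \OTNCG with $0\le\alpha<\tfrac12$ already has the social cost of an optimal network; together with the trivial bound $\PoA\ge 1$ and the existence of a NE (obtained below) this gives $\PoA=1$. Fix such an $\alpha$, let $G$ be an arbitrary NE, and let $G^\ast$ be the network returned by Algorithm~\ref{alg:opt_1_2_graphs}, which is a social optimum by Theorem~\ref{th:alg_produces_opt_for_1_2_graphs}. I will use the structure of $G^\ast$ established in that proof: $G^\ast$ contains all $1$-edges of $H$, has diameter $2$, and a $2$-edge $(u,v)$ is missing from $G^\ast$ exactly when $H$ has some vertex $x$ with $w(u,x)=w(x,v)=1$; consequently $d_{G^\ast}(u,v)=w(u,v)$ for every pair, so $cost(G^\ast)=\alpha\,w\big(E(G^\ast)\big)+\sum_{u\neq v}w(u,v)$.

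First I would show that $G$ realizes the minimum conceivable distance cost, i.e.\ $d_G(u,v)=w(u,v)$ for all $u,v$. For a $1$-edge $(u,v)$, Lemma~\ref{lem:1_edge_property} (usable since $\alpha<1$) forces $(u,v)\in E(G)$, so $d_G(u,v)=1$. For a $2$-edge $(u,v)\notin E(G)$, note that $u$ could buy $(u,v)$ at edge cost $2\alpha<1$ and thereby reduce her distance to $v$ to $2$; since this is not improving, $d_G(u,v)\le 2$, and since no $1$-edge joins $u$ and $v$ we also have $d_G(u,v)\ge 2$, hence $d_G(u,v)=2=w(u,v)$ (this also shows $G$ is connected). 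Thus $\sum_{u\neq v}d_G(u,v)=\sum_{u\neq v}w(u,v)$, which is a lower bound on the distance cost of any subgraph of $H$.

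Next I would prove $E(G)\subseteq E(G^\ast)$. All $1$-edges lie in $E(G)\cap E(G^\ast)$ by the first step, so suppose towards a contradiction that some $2$-edge $(u,v)\in E(G)$ is missing from $G^\ast$; then there is $x$ with $w(u,x)=w(x,v)=1$, and by the first step $(u,x),(x,v)\in E(G)$, so $u\,x\,v$ is a $u$--$v$ path of weight $2$ in $G-(u,v)$. Replacing the edge $(u,v)$ by this path in any shortest path of $G$ shows that deleting $(u,v)$ changes no pairwise distance, while its owner saves $2\alpha$ of edge cost; for $\alpha>0$ this is an improving move, contradicting that $G$ is a NE. Hence every $2$-edge of $G$ lies in $G^\ast$, proving the inclusion.

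Finally, combining the two steps, $cost(G)=\alpha\,w\big(E(G)\big)+\sum_{u\neq v}w(u,v)\le\alpha\,w\big(E(G^\ast)\big)+\sum_{u\neq v}w(u,v)=cost(G^\ast)$, and since $G^\ast$ is optimal also $cost(G^\ast)\le cost(G)$, so $cost(G)=cost(G^\ast)$ and $\PoA=1$. The case $\alpha=0$ needs no second step, as the edge cost vanishes and the first step alone gives $cost(G)=\sum_{u\neq v}w(u,v)=cost(G^\ast)$. Existence of a NE follows by checking that $G^\ast$ with each edge assigned to one of its endpoints admits no improving move: all $1$-edges are already present, buying a $2$-edge shortens no distance since $G^\ast$ has diameter $2$ and $2$-edges have weight $2$, and deleting or swapping away an edge raises some pairwise distance by at least $1>2\alpha$. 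I do not expect a real obstacle here; the only thing to be careful about is importing the exact description of $G^\ast$ from Theorem~\ref{th:alg_produces_opt_for_1_2_graphs} and treating the degenerate case $\alpha=0$ separately.
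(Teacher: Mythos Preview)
Your proposal is correct and follows essentially the same approach as the paper. The paper argues that every NE coincides with the output $G^\ast$ of Algorithm~\ref{alg:opt_1_2_graphs} by verifying the same three facts you use: all $1$-edges are present, every $2$-edge outside a $1$-$1$-$2$ triangle must be present (your step~1, phrased as $d_G(u,v)\le 2$), and no $2$-edge inside a $1$-$1$-$2$ triangle can be present (your step~2); you then conclude via a cost comparison rather than equality of edge sets, which is immaterial. Your separate treatment of $\alpha=0$ and your explicit existence check for a NE are additions the paper omits; note that your existence sketch literally checks only single-edge moves (GE), but it extends to arbitrary strategy changes since $G^\ast$ already realizes the minimum possible distance cost $\sum_v w(u,v)$ for each agent, and each edge an agent drops raises her distance to its other endpoint by at least $1>2\alpha$.
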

\begin{proof}
We claim that, if $0\leq\alpha < \frac{1}{2}$, any NE network is equal to the optimal network produced by Algorithm~\ref{alg:opt_1_2_graphs}. To show this we need to prove that a NE network does not contain 1-1-2 triangles but containes all the other edges. 

Consider a graph which is in NE. It is easy to see that all 1-edges are contained in $G$ because otherwise, the addition of any such edge improves distance cost by at least 1 and increases the edge cost by $\alpha<1$. Moreover, if there is a 2-edge $(u,v)$, which is not in $G$ and does not form a 1-1-2 triage in $G+(u,v)$, then the addition of $(u,v)$ improves the distance cost by $d_G(u,v)-w(u,v)\geq 3 - 2 = 1$ and increases the edge cost by $2\alpha<1$, i.e., it implies an improvement for an owner of the edge. Finally, it is clear that $G$ does not contain 1-1-2 triangles because removing a 2-edge from such triangle does not change the distance cost. Therefore, $G$ is equal to the social optimum obtained by Algorithm~\ref{alg:opt_1_2_graphs}. 
\end{proof}

\subsubsection{1-2-Graphs for $\alpha > 1$}
In this section we show that the \OTNCG for $\alpha > 1$ behaves very similar to the original NCG. 

\begin{theorem}\label{thm:1_2_graph_star_is_NE}
For $\alpha\geq 3$ any star graph is in NE.
\end{theorem}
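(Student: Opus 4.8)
The plan is to fix a star $G$ spanning the host graph $H$, with an arbitrary vertex $c$ as its center, and to use the edge ownership in which $c$ owns all $n-1$ edges $(c,v)$; I then verify that no agent has an improving move. In the star one has $d_G(c,v)=w(c,v)\in\{1,2\}$ and $d_G(v,v')=w(c,v)+w(c,v')$ for any two leaves, so the diameter is at most $4$. The center is easy to handle: every leaf owns no edge, so a strategy of $c$ yields a connected graph (equivalently, a finite distance cost for $c$) only if $c$ buys an edge towards every other vertex; since $H$ is complete this is exactly the current strategy $S_c=V\setminus\{c\}$, which is therefore the unique strategy of $c$ with finite cost and hence a best response.

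It remains to show that each leaf $v$ plays a best response. Since $v$ owns no edge, any alternative strategy of $v$ simply buys a set of incident edges, and buying $(v,c)$ is pointless because that edge already exists; so the only candidate moves add edges $(v,v_1),\dots,(v,v_k)$ towards other leaves, producing a graph $G'$. The key observation is that in $G'$ the only shortest paths emanating from $v$ that can shorten are those to $v_1,\dots,v_k$: any detour $v\to v_i\to c$ has length $w(v,v_i)+w(c,v_i)\ge 2\ge w(c,v)$, so $d_{G'}(v,c)=w(c,v)$ is unchanged, and consequently any detour $v\to v_i\to c\to u$ to a leaf $u\notin\{v_1,\dots,v_k\}$ has length $\ge 2+w(c,u)\ge w(c,v)+w(c,u)$, so $d_{G'}(v,u)$ is unchanged as well. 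For a bought leaf $v_i$ we get $d_{G'}(v,v_i)=\min\{w(v,v_i),\,w(c,v)+w(c,v_i)\}$, so the distance from $v$ to $v_i$ decreases by at most $w(c,v)+w(c,v_i)-w(v,v_i)$.

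Finally I would compare costs edge by edge: buying $(v,v_i)$ costs $\alpha\,w(v,v_i)$ while saving at most $w(c,v)+w(c,v_i)-w(v,v_i)$, and since all weights lie in $\{1,2\}$ and $\alpha\ge 3$ we have $(\alpha+1)\,w(v,v_i)\ge\alpha+1\ge 4\ge w(c,v)+w(c,v_i)$, i.e. $\alpha\,w(v,v_i)\ge w(c,v)+w(c,v_i)-w(v,v_i)$. Summing over $i=1,\dots,k$, the total edge cost of the move is at least the total distance it saves, so $v$ has no strictly improving move and $G$ is in NE. The step that needs the most care is the distance analysis in the second paragraph — one must rule out that adding several edges at $v$ creates a useful shortcut to the center or to a non-bought leaf, which is exactly what the ``every detour has length at least $2$'' bound does; after that the argument is the routine weight estimate above. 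Note that $\alpha\ge 3$ is tight for this construction: if $w(c,v)=w(c,v_i)=2$ and $w(v,v_i)=1$, then buying $(v,v_i)$ is cost-neutral precisely when $\alpha=3$.
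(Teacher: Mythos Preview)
Your proof is correct and follows essentially the same approach as the paper: let the center own all edges, note that the center has no alternative connected strategy, and for each leaf argue that buying an edge to another leaf costs at least as much as the distance it saves, with the extremal case $w(c,v)=w(c,v_i)=2$, $w(v,v_i)=1$ giving equality at $\alpha=3$. Your write-up is in fact more careful than the paper's: you explicitly justify why a new edge $(v,v_i)$ shortens only the distance from $v$ to $v_i$ (the paper asserts this without argument), and you handle the general deviation of buying several edges at once by summing the per-edge inequalities, whereas the paper tacitly restricts to a single edge addition.
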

\begin{proof}
Consider a star graph $S_n$ that has $n-2$ edges. Assume that the central node $u$ is an owner of all edges in the star. Then $u$ cannot improve her strategy.  Let $v, z$ be two leaf nodes. The only possible strategy improvement for a leaf node is an edge addition. %Note that adding any edge between any two leafs improves distance only between the endpoints. 
In the worth case $w(v,u)=w(z,u)=2$ and $w(z,v)=1$, thus, adding an edge $(z,v)$ improves the distance only between the edge endpoints by $3$ and costs $\alpha\geq 3$. Therefore, there is no strategy improvement for any agent. It implies that $S_n$ is in NE. 
\end{proof}

\paragraph{Price of Anarchy}
We use the proof technique from \cite{Fab03} to show that the PoA may be bounded by the same value as in the original proof. 
We start with the bounding the social cost of the~NE.

\begin{lemma}\label{lm:PoA_is_diameter}
Consider any NE $G$ in the \OTNCG. If $G$ has diameter $D$, then its social cost is at most $O(D)$ times the social cost of the optimal network $OPT$.
\end{lemma}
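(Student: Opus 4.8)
The plan is to adapt the proof technique of Fabrikant et al.~\cite{Fab03}: split the social cost of $G$ into its \emph{distance cost} $\sum_{u\neq v} d_G(u,v)$ and its \emph{edge cost} $\alpha\,|E(G)|$, and bound each separately by $O(D)\cdot cost(OPT)$. Since $G$ is a NE it is connected, so $D$ is finite and every ordered pair of vertices is at distance at most $D$; hence the distance cost of $G$ is at most $D\,n(n-1)$. On the other hand $OPT$ is connected too, and since all weights are $1$ or $2$ every ordered pair contributes at least $1$ to its distance cost, so $cost(OPT)\geq n(n-1)$ and therefore $\sum_{u\neq v} d_G(u,v)\leq D\cdot cost(OPT)$.

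For the edge cost I would partition $E(G)$ into bridges and non-bridges. The bridges of a connected graph form a forest, so there are at most $n-1$ of them, and since $OPT$ is connected it must buy at least $n-1$ edges, each of weight at least $1$, whence $\alpha\cdot(\#\text{bridges})\leq\alpha(n-1)\leq cost(OPT)$. For the non-bridges I would run the standard charging argument. Let $e=(u,v)$ be a non-bridge bought by $u$, and let $S_e$ be the set of vertices $w$ for which every shortest $u$--$w$ path in $G$ uses $e$. Removing $e$ strictly increases $d_G(u,w)$ exactly for $w\in S_e$, and — using that for $w\in S_e$ a shortest $u$--$w$ path enters $v$ immediately after $e$, so that its $v$--$w$ part avoids $e$ — one checks that each such increase is at most $\delta_e:=d_{G-e}(u,v)-w(u,v)$. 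As $G$ is a NE, deleting $e$ is not improving for $u$, so $\alpha\leq\alpha\,w(u,v)\leq\sum_{w\in S_e}(d_{G-e}(u,w)-d_G(u,w))\leq|S_e|\,\delta_e$, i.e.\ $|S_e|\geq\alpha/\delta_e$. Because a simple path starting at $u$ uses exactly one edge incident to $u$, the sets $S_e$ over the distinct edges $e$ bought by $u$ are pairwise disjoint, so $u$ buys at most $(\max_e\delta_e)\,n/\alpha$ non-bridges.

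The remaining — and, I expect, most delicate — step is to show $\delta_e=O(D)$ for every non-bridge $e=(u,v)$. I would classify each vertex $w$ as \emph{$u$-reachable} if some shortest $u$--$w$ path in $G$ avoids $e$, and as \emph{$v$-reachable} if some shortest $w$--$v$ path in $G$ avoids $e$; using that $e$ is incident to both $u$ and $v$ and that all weights are positive, one checks that a vertex failing both would force $d_G(u,w)=d_G(u,w)+2w(u,v)$, so every vertex is $u$-reachable or $v$-reachable. Now take any $u$--$v$ path in $G-e$ (it exists since $e$ is a non-bridge) and walk along it from $u$ (which is $u$-reachable) to $v$; at the first edge $(p,q)$ along which $u$-reachability is lost, $q$ is necessarily $v$-reachable, and then $d_{G-e}(u,v)\leq d_{G-e}(u,p)+w(p,q)+d_{G-e}(q,v)\leq d_G(u,p)+2+d_G(q,v)\leq 2D+2$, using that $p$ is $u$-reachable, $q$ is $v$-reachable, and $p,q$ are adjacent in $G-e$ with weight at most $2$. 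Hence $\delta_e\leq 2D+2$. Plugging this back gives $\alpha\cdot(\#\text{non-bridges})\leq(2D+2)n^2\leq(4D+4)\,cost(OPT)$ (since $n^2\leq 2n(n-1)\leq 2\,cost(OPT)$ for $n\geq 2$); adding the bridge bound and the distance-cost bound yields $cost(G)=O(D)\cdot cost(OPT)$.
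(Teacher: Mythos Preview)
Your argument is correct and follows essentially the same route as the paper: lower-bound $cost(OPT)$ by $\Omega(\alpha n + n^2)$, bound the distance cost of $G$ by $O(Dn^2)$, split the edge cost into bridges (at most $n-1$) and non-bridges, and for each non-bridge $e=(u,v)$ use the NE deletion inequality together with $d_{G-e}(u,v)\leq 2D+2$ (via a transition vertex on a $u$--$v$ walk in $G-e$) to get $|S_e|\geq \alpha/(2D+1)$ and hence at most $O(Dn/\alpha)$ non-bridges per owner. The only slip is writing the edge cost as $\alpha\,|E(G)|$ rather than $\alpha\sum_e w(e)$, but since all weights lie in $\{1,2\}$ this changes nothing up to a constant factor.
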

\begin{proof}
First, we evaluate the social cost of the optimal network. Since the network should be connected and each edge has length at least 1, the total cost is in $\Omega(\alpha\cdot n + n^2)$.

Now we analyze the social cost of $G=(V,E)$ which is in NE. The distance cost is trivially in $O(Dn^2)$ since each pair of vertices is in distance $D$ in $G$. To evaluate the edge cost we consider cut edges, whose removal disconnects $G$. There are at most $n-1$ cut edges in the graph, thus, the edge cost is at most $O(\alpha(n-1))$ plus the edge cost of noncut edges. 
Now consider a node $v$ in $G$ which has at least one noncut edge. We claim that the number of noncut edges of $v$ is at most $n(2D+1)/\alpha$, thus, the total edge cost of all noncut edges in $G$ is at most $2n^2(2D+1)$, which implies that $cost(G)\in O(\alpha(n-1)+2n^2(2D+1)+Dn^2)=O(\alpha n + Dn^2)$. And therefore we conclude that the ratio between $cost(G)$ and the cost of the optimal solution is in $O(D)$.

Consider an edge $e=(u,v)\in E(G)$, owned by the player $u$. Let $V_{e}$ be the set of nodes $w$, such that the edge $e$ is in the shortest path from $u$ to $w$. Let $G'$ be the graph $G$ without the edge $e$. Since $G$ is stable, we have $0\leq cost_{G'}(u)-cost_G(u)\leq -\alpha\cdot w(u,v) + (d_{G'}(u,v)-w(u,v))\cdot|V_e|$. We claim that $d_{G'}\leq 2d$. Indeed, consider a cycle which consists of the shortest path from $u$ to $v$ in $G'$ and the edge $(u,v)$. Let $v'\in V_{e}$ be the node which is furthest away from node $v$ in the cycle and let $(u',v')$ be its incident edge in the cycle but not in $V_{e}$. Since $u'\notin V_{uv}$, we have $d_{G'}(u,u')=d_G(u,u')\leq d$, and since $v'\in V_e$ we get $d_{G'}(v,v')=d_G(v,v')\leq d$. Thus, $d_G(u,v) \leq d_{G'}(u,u')+ w(u',v') + d_{G'}(v',v)\leq 2D+2$. Therefore, we get  $0\leq -\alpha\cdot w(u,v) + (d_{G'}(u,v)-w(u,v))\cdot|V_e|\leq -\alpha + (2D+2-1)|V_e|$. It follows that $|V_e|\geq \alpha/(2D+1)$. Thus, the total number of noncut edges of $v$ in $G$ is at most $n(2D+1)/\alpha$. This completes the proof.
\end{proof}

\noindent With the above lemma we can easily get the following.
\begin{theorem}\label{thm:1_2_graph_UB_PoA_sqrt_alpha}
The \OTNCG with $\alpha >1$ has a $PoA\in O(\sqrt{\alpha})$.
\end{theorem}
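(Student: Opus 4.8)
The plan is to invoke Lemma~\ref{lm:PoA_is_diameter}, which reduces bounding the \PoA to bounding the diameter $D$ of an arbitrary NE network $G$: once we show $D \in O(\sqrt{\alpha})$, the lemma immediately gives $cost(G) \in O(\sqrt{\alpha})\cdot cost(OPT)$ and hence $\PoA \in O(\sqrt{\alpha})$. Note first that any NE is connected — an agent stuck in a finite component could buy all of its remaining incident host-graph edges at finite cost and thereby escape an infinite distance cost — so $D$ is well defined and finite, and the lemma applies.

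To bound $D$, I would reuse the edge-buying argument of Fabrikant et al.~\cite{Fab03}, adapted to weighted paths. Fix vertices $u,v$ with $d_G(u,v)=D$ and a shortest $u$-$v$ path $P = x_0,x_1,\dots,x_h$ with $x_0=u$ and $x_h=v$; since every edge weight lies in $\{1,2\}$ we have $D/2 \le h \le D$. Consider the deviation in which $u$ buys the single edge towards the ``midpoint'' $w := x_{\lceil h/2\rceil}$. Because every subpath of a shortest path is itself shortest, for each index $j>\lceil h/2\rceil$ we have $d_G(u,x_j) = d_G(u,w)+d_G(w,x_j)$, and after adding $(u,w)$ the distance from $u$ to $x_j$ drops to at most $w(u,w)+d_G(w,x_j)$, a saving of at least $d_G(u,w)-w(u,w)\ge d_G(u,w)-2$ for each such vertex. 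Summing over the $\lfloor h/2\rfloor$ vertices $x_j$ with $j>\lceil h/2\rceil$, and using $d_G(u,w)\ge\lceil h/2\rceil\ge D/4$, the total decrease in $u$'s distance cost is at least $(\lfloor h/2\rfloor)(d_G(u,w)-2)=\Omega(D^2)$, whereas $u$'s edge cost grows by only $\alpha\, w(u,w)\le 2\alpha$. Since $G$ is a NE this is not an improving move, so $2\alpha=\Omega(D^2)$, i.e.\ $D\in O(\sqrt{\alpha})$, and Lemma~\ref{lm:PoA_is_diameter} finishes the proof.

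The remaining points need care rather than being genuinely hard, so there is no real obstacle beyond correctly running the midpoint estimate. One must check that $(u,w)\notin E(G)$, so that the deviation actually pays $\alpha\, w(u,w)$; but if $(u,w)\in E(G)$ then $d_G(u,w)\le w(u,w)\le 2$, forcing $\lceil h/2\rceil\le 2$ and hence $D\le 2h\le 8\in O(1)\subseteq O(\sqrt{\alpha})$ (recall $\alpha>1$), so we may assume $D$ is large and this case is vacuous. One should also keep weighted distances throughout — it is only the comparison $D/2\le h\le D$ that converts between weighted distance and hop count — and observe that adding an edge never increases any other pairwise distance, so the computed saving is a genuine lower bound on the decrease of $cost(u,G)$. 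Modulo these routine checks the argument is short: the only content beyond Lemma~\ref{lm:PoA_is_diameter} is the midpoint-buying estimate, exactly as in the original NCG analysis.
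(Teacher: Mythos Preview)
Your proposal is correct and follows essentially the same approach as the paper: both invoke Lemma~\ref{lm:PoA_is_diameter} and then bound the diameter by arguing that a single edge purchase along a diametral shortest path would save $\Omega(D^2)$ in distance cost against at most $2\alpha$ in edge cost. The only cosmetic difference is that the paper buys the edge $(u,v)$ to the far endpoint and counts savings at the first $k=D/5$ vertices near $v$, whereas you buy the edge to the midpoint $w=x_{\lceil h/2\rceil}$ and count savings at the far half; both variants recover $D\in O(\sqrt{\alpha})$ by the same Fabrikant-et-al.\ estimate.
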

\begin{proof}
Using Lemma~\ref{lm:PoA_is_diameter}, we only need to prove that the diameter of the NE is at most $\sqrt{\alpha}$. We consider a pair of nodes $u,v$ in the graph $G$, which is in NE and has diameter $D$. Assume that $d_G(u,v)=D$. Since $G$ is in NE, the addition of the edge $(u,v)$ does not yield an improvement for agent $u$. Thus, $0\leq cost_{G+(u,v)}(u)-cost_G(u)$. Let $P:= v=v_1, v_2,\ldots,v_{m-1},v_m=u$ be the shortest $u-v$ path in $G$ and let $k=D/5$. We observe that the distances from $u$ to $v_1,\dots,v_k$ will all change after the addition of the edge $(u,v)$. Thus, taking into account that each edge has length at most $2$, we have:
$0\leq cost_{G+(u,v)}(u)-cost_G(v) \leq \alpha \cdot w(u,v) + \sum\limits_{i=1}^{k}\big(w(u,v)+d_{G+(u,v)}(v,v_i)-d_G(u,v_i)\big)\leq 2\alpha + \sum\limits_{i=1}^{k}\big(2i - (D - 2k)\big)
\leq 2\alpha + \sum\limits_{i=1}^{k}(4k - D) \leq 2\alpha  - \frac{D^2}{25}$. It follows that $D\in O(\sqrt{\alpha})$.
\end{proof}

We are convinced that also other proof techniques from the NCG can be carried over to the \OTNCG. Thus, the PoA should be constant for almost all $\alpha$ and in $o(n^\varepsilon)$ for the remaining range.

\subsection{Tree Metrics}\label{subsec_tree}
This section is devoted to the study of tree metrics. We assume that the host graph $H=(V,E)$ is defined as the {\em metric closure} of an edge-weighted tree $T$.\footnote{More precisely, $w(u,v) = d_T(u,v)$ for every two vertices $u$ and $v$.}

\paragraph{Existence}
The first result is about the structure of any NE. Differently for general metrics and 1-2-graphs, we prove that any NE in \TNCG is as much sparse as possible.
\begin{theorem}\label{thm_tree_NE_is_tree}
In the \TNCG any NE is a tree.
\end{theorem}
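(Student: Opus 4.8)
The plan is to argue by contradiction: suppose $G$ is a NE that is not a tree, so $G$ contains a cycle $C$. The goal is to exhibit an agent who can profitably drop one of her edges on the cycle. The key structural fact I would exploit is that in the \TNCG the host graph is the metric closure of a weighted tree $T$, so every host edge $(u,v)$ has weight $w(u,v)=d_T(u,v)$, and crucially each such weight is realized by a path in $T$ whose intermediate vertices are \emph{also} vertices of the host graph. Combined with Lemma~\ref{lem:NE_spanner}, which says any NE is an $(\alpha+1)$-spanner of $H$, this lets me relate distances in $G$ to distances in $T$.

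The main step is to look at the edge $e=(u,v)$ on the cycle $C$ that is "heaviest" in an appropriate sense — I expect the right choice is the edge on $C$ maximizing $w(u,v)=d_T(u,v)$, or perhaps the edge whose removal increases the fewest shortest paths. Let $e$ be owned by $u$. Removing $e$ from $G$ leaves $u$ and $v$ connected via the rest of the cycle, so $G-e$ is still connected; I need to show that for \emph{every} vertex $z$, the detour $d_{G-e}(u,z)-d_G(u,z)$ summed over all $z$ is strictly less than $\alpha\cdot w(u,v)$, contradicting stability. To do this I would bound $d_{G-e}(u,v)$: going around the rest of $C$, and using that each other cycle edge $(x,y)$ has $w(x,y)=d_T(x,y)$ with $d_T(x,y)\le d_T(x,u)+d_T(u,v)+\dots$, one should get $d_{G-e}(u,v)\le$ (something comparable to $w(u,v)$). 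The set $V_e$ of vertices whose shortest $u$-path used $e$ can then be controlled: for $z\in V_e$ the increase is at most $d_{G-e}(u,v)-w(u,v)$, and — this is where the tree structure bites — one argues that the "branch hanging off $v$" that $u$ reaches through $e$ must itself be cheaply reconnectable, because $T$ is a tree and the cycle already provides an alternate route.

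The hard part, and the step I would spend the most care on, is ruling out the case where the cycle is "long and thin" so that removing any single edge forces a large detour for many vertices — i.e., showing that the specific choice of which cycle edge to delete (heaviest weight, or the one minimizing $|V_e|\cdot(\text{detour})$) always yields a net gain. I suspect the clean argument does not delete a cycle edge at all but instead shows the following: if $G$ has a cycle, pick any non-cut edge $(u,v)\in E(G)$ with, say, $w(u,v)$ maximal among non-cut edges; the alternate $u$–$v$ path in $G-(u,v)$ has length at most $(\alpha+1)w(u,v)$ is too weak, so instead use that this alternate path lies "in the tree direction" and has length at most $2\,w(u,v) - \epsilon$, or use a charging argument over pairs analogous to Theorem~\ref{thm:metric_PoA}. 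I would first try the heaviest-non-cut-edge choice and a direct computation of $\sum_z(d_{G-e}(u,z)-d_G(u,z))$, falling back on an averaging argument over all non-cut edges of the cycle if the worst single edge does not suffice. Once a profitable deletion is found for some owner, this contradicts $G$ being a NE, so $G$ must be acyclic; connectivity (finite cost) then gives that $G$ is a tree.
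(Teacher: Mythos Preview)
Your proposal has a genuine gap: you never identify the structural lever that makes the tree metric special, and your own hedging (``I would first try\ldots\ falling back on\ldots'') signals that the single-agent deletion argument you sketch does not close. Concretely, bounding $d_{G-e}(u,v)$ via the rest of the cycle gives nothing better than $(\alpha+1)w(u,v)$ from Lemma~\ref{lem:NE_spanner}, which you correctly flag as too weak; and there is no reason the ``heaviest non-cut edge'' choice should make $\sum_z\bigl(d_{G-e}(u,z)-d_G(u,z)\bigr)<\alpha\,w(u,v)$ in general. The averaging fallback is not developed at all.

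The paper's argument is quite different and hinges on two ideas you are missing. First, it does not pick the heaviest cycle edge; it picks any cycle edge $(u,v)$ that is \emph{not} an edge of the defining tree $T$ (such an edge exists because $T$ is acyclic), and then takes the vertex $x$ adjacent to $v$ on the $u$--$v$ path in $T$, so that the tree metric gives the exact additivity $w(u,x)+w(x,v)=w(u,v)$. Second, it does not look for a profitable move by a single agent: it writes down the inequality expressing that $u$'s swap of $(u,v)$ for $(u,x)$ (or deletion of $(u,v)$, depending on whether $d_G(u,x)>w(u,x)$) is non-improving, and separately the inequality expressing that $x$ does not want to buy $(x,v)$. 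Summing these two inequalities and cancelling via $w(u,x)+w(x,v)=w(u,v)$ yields a strict self-contradiction $(\alpha+|Z|)w(u,v)>(\alpha+|Z|)w(u,v)$. So the contradiction comes from combining the equilibrium conditions of \emph{two} agents together with tree-metric additivity, not from exhibiting one profitable deletion.
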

\begin{proof}
Consider a graph $G=(V,E)$ which is in NE. For the sake of contradiction, we assume that $G$ contains a cycle. Clearly, this cycle has at least one edge, say $(u,v)$, which is not contained in the tree $T$. Without loss of generality, assume $u$ be the owner of the edge $(u,v)$. Consider a vertex $x\in V$ such that the edge $(x,v)$ is in the shortest $u$-$v$--path in $T$. Note that $(x,v)\notin E(G)$, otherwise swapping the edge $(u,v)$ to $(u,x)$ is an improving move that contradicts with $G$ being in NE. Consider two possible situations: $d_G(u,x)>w(u,x)$ and $d_G(u,x) = w(u,x)$. 

In case $d_G(u,x)>w(u,x)$, consider a graph $G'=(V,E')$ obtained from $G$ by swapping the edge $(u,v)$ to $(u,x)$ by agent $u$. Denote by $Z=\{z\in V: d_G(u,z)<d_{G'}(u,z)\}$ the set of vertices to which the distance from $u$ has increased. Note that $Z$ is not an empty set because $v\in Z$. Since $G$ is in NE, $\alpha \cdot w(u,x) + w(u,x) + d_{G'}(u,Z)\geq \alpha \cdot w(u,v) + d_G(u,x) + d_G(u,Z)$, whereas the left part of the inequality is at most $\alpha \cdot w(u,x)+ w(u,x) + |Z|\cdot w(u,x) + d_G(x,Z)$ and the right part is equal to $\alpha \cdot w(u,v) + d_G(u,x) +  |Z|\cdot w(u,v) + d_G(v,Z)$. Since $d_G(u,x)>w(u,x)$, we get $\alpha \cdot w(u,x)+ |Z|\cdot w(u,x) + d_G(x,Z) > \alpha \cdot w(u,v) + |Z|\cdot w(u,v) + d_G(v,Z)$.

In case $d_G(u,x)=w(u,x)$, consider deletion of the edge $(u,v)$. Since $(u,v)$ is in the cycle, the deletion does not increase any distance in $G'$ to infinity. We use the same notation: $G'=(V,E')$ is a graph after modification, $Z$ is a set of verticies to which the distance from $u$ has increased. As before, $Z\neq\emptyset$ because otherwise deletion of the edge $(u,v)$ is an improving move for agent $u$. Since $G$ is stable, $|Z|\cdot w(u,x) + d_G(x,Z)\geq d_{G'}(u,Z)\geq \alpha \cdot w(u,v) + d_G(u,Z) = \alpha w(u,v) + |Z|\cdot w(u,v) + d_G(v,Z)$. Adding positive term $\alpha \cdot w(u,x)$ to the left part of the inequality, we get the same inequality as in the previous case: $\alpha \cdot w(u,x)+ |Z|\cdot w(u,x) + d_G(x,Z) > \alpha \cdot w(u,v) + |Z|\cdot w(u,v) + d_G(v,Z)$. 

From the other hand, the agent $x$ does not buy the edge $(x,v)$.  Therefore, $\alpha \cdot  w(x,v) + |Z|\cdot w(x,v) + d_G(v,Z) \geq d_G(x,Z)$. We sum up this inequality with the inequality we obtained by analyzing swapping and deletion and we get $\alpha (w(x,v) + w(u,x)) + |Z|(w(x,v) + w(u,x)) + d_G(x,Z) + d_G(v,Z) > \alpha \cdot w(u,v) + |Z|\cdot w(u,v) + d_G(v,Z) + d_G(x,Z)$. Simplifying and taking into account that $w(u,x) + w(x,v) = w(u,v)$, we get that $(\alpha + |Z|)\cdot w(u,v) > (\alpha + |Z|)\cdot w(u,v)$, that is a contradiction. Therefore, $G$ has no cycles, i.e., $G$ is a tree.  
\end{proof}

The next result follows by observing that the tree $T$ that defines the metric is the network with cheapest total edge cost that preserves all the distances of the host graph at the same time.

\begin{corollary}\label{thm:T_is_OPT}
In the \TNCG the tree $T$ which defines the metric is both the social optimum and in NE.
\end{corollary}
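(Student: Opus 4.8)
The plan is to establish the two claims separately, exploiting the single fact that, since $H$ is the metric closure of $T$, every edge of $H$ has $w(u,v)=d_H(u,v)=d_T(u,v)$, and consequently $d_G(u,v)\ge d_T(u,v)$ for every spanning subgraph $G$ of $H$ (removing edges cannot shorten paths), with equality when $G=T$ because $T$ realizes all these distances.

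\emph{Social optimum.} First I would show that $T$ minimizes the two summands of the social cost simultaneously. For the distance cost, any network has $\sum_u d_G(u,V)\ge\sum_u d_T(u,V)$, and $T$ attains this bound; in particular an optimal network is connected, since a disconnected one has infinite cost. For the edge cost, I would observe that $T$ is a minimum spanning tree of $H$: a tree edge $(x,y)$ splits $T$ into components $X\ni x$ and $Y\ni y$, and every $H$-edge $(a,b)$ with $a\in X$, $b\in Y$ satisfies $w(a,b)=d_T(a,b)=d_T(a,x)+d_T(x,y)+d_T(y,b)\ge d_T(x,y)=w(x,y)$, since the $a$--$b$ path in $T$ must use $(x,y)$; hence $(x,y)$ is a lightest $H$-edge across its cut. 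Therefore every connected spanning subgraph has total edge weight at least $\sum_{e\in E(T)}w(e)$, which $T$ attains. Since $\alpha>0$, $T$ minimizes $\alpha\sum_{e\in E(G)}w(e)+\sum_u d_G(u,V)$, i.e. $T$ is a social optimum.

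\emph{Equilibrium.} Next I would fix a root $r$ of $T$ and let every non-root vertex own the single edge to its parent, which gives a strategy profile realizing $G=T$. To see that no agent $u$ has an improving move, first note that adding any edge $(u,v)$ leaves $d_{T+(u,v)}(u,z)=\min\{d_T(u,z),\,w(u,v)+d_T(v,z)\}=d_T(u,z)$ for all $z$ by the triangle inequality, while costing $\alpha\,w(u,v)\ge 0$; so a pure addition never helps, which already settles the case $u=r$. For a general deviation $S'_u$, let $G'$ be the resulting network and let $p$ be $u$'s parent. If $p\in S'_u$ then $T\subseteq G'$, so $d_{G'}(u,V)=d_T(u,V)$ and $u$'s edge cost is $\alpha\sum_{v\in S'_u}w(u,v)\ge\alpha\,w(u,p)$, giving no gain. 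If $p\notin S'_u$, the edge $(u,p)$ vanishes and $T-(u,p)$ splits into the subtree $A\ni u$ and the part $B\ni p$; to remain connected to $B$, the set $S'_u$ must contain some $v^\ast\in B$, and the $u$--$v^\ast$ path in $T$ runs through $p$, so $w(u,v^\ast)=d_T(u,v^\ast)\ge d_T(u,p)=w(u,p)$. Hence $u$'s edge cost is again at least $\alpha\,w(u,p)$, while $d_{G'}(u,V)\ge d_T(u,V)$, so $u$ cannot improve. Thus $T$ is in NE (and, consistently with Theorem~\ref{thm_tree_NE_is_tree}, it is a tree).

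I expect the equilibrium direction to be the main obstacle, and within it the case $p\notin S'_u$, where one must control arbitrary simultaneous additions and deletions; the two clean observations that make it work are that any reconnecting deviation must re-purchase an edge into $B$ whose price is at least that of the parent edge, and that distances in a subgraph of $H$ never drop below the $T$-distances that $u$ already enjoys. The social-optimum direction is routine once the minimum-spanning-tree property of $T$ is recorded; the only minor care needed is to note that optima are connected and that zero-weight edges, if present, change nothing.
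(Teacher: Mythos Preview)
Your proof is correct and follows the same idea as the paper's one-line justification, namely that $T$ simultaneously has minimum total edge weight among connected spanning subgraphs and realizes all host-graph distances, so it minimizes both summands of the social cost. The paper gives no explicit argument for the NE claim beyond this observation; your rooted-ownership assignment and case split on whether the parent edge is retained is a clean and correct way to fill in that detail.
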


\noindent Corollary~\ref{thm:T_is_OPT} yields that the cheapest NE is also a social optimum.\footnote{In Algorithmic Game Theory, this is equivalent to say that the {\em Price of Stability} -- the ratio between the cost of the cheapest NE and the cost of a social optimum -- is 1.}

\paragraph{Hardness}

We prove that the problem of computing the best response of a player is NP-hard for tree metrics (\TNCG).

\begin{theorem}\label{thm_tree_hardness}
It is NP-hard to compute a best response of an agent in the \TNCG. 
\end{theorem}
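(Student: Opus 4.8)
The plan is to reduce from a suitable NP-hard problem — I would try \textsc{Vertex Cover} (or its dominating-set-flavored cousin) on a graph, mirroring the hardness construction already used for the \OTNCG in Theorem~\ref{thm:1_2_graph_des_probl_hard}, but now realizing the gadget inside a tree metric. Concretely, given a \textsc{Vertex Cover} instance $\langle H_{\mathrm{VC}}, k\rangle$, I would build a weighted tree $T$ whose metric closure is the host graph $H$, with one ``vertex-leaf'' $a_i$ for each vertex $v_i$ of $H_{\mathrm{VC}}$, one ``edge-leaf'' $p_j$ (and possibly a twin $p_j'$) for each edge $e_j$, all hung off a central structure, together with one distinguished agent $u$. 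The tree weights would be chosen so that, in the metric closure, the direct distance from $u$ to a vertex-leaf $a_i$ is cheap while the detour distance from an edge-leaf $p_j$ to $u$ through the tree is expensive unless $u$ has bought an edge to one of the (at most two) vertex-leaves adjacent to $e_j$. Then I would fix a strategy profile in which every agent other than $u$ is already playing a best response (this is the part that must be checked carefully, as in Theorem~\ref{thm:1_2_graph_des_probl_hard}: all agents have bounded eccentricity and no profitable swap/add/delete), and in which $u$ owns edges to a set of vertex-leaves forming some known (not-necessarily-optimal) vertex cover. The point is that $u$'s best response is precisely a minimum vertex cover: buying an edge to $a_i$ ``covers'' all edge-leaves $p_j$ with $v_i\in e_j$, each bought edge costs $\alpha\cdot w(u,a_i)$, and leaving an edge $e_j$ uncovered costs a fixed distance penalty, so minimizing $u$'s cost balances $|S_u|$ against coverage exactly as in \textsc{Vertex Cover}.

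The key steps, in order: (1) describe the tree $T$ and compute all relevant pairwise distances in its metric closure $H$, isolating the three relevant magnitudes (distance $u$–vertex-leaf if bought, distance $u$–edge-leaf if its edge is covered, distance $u$–edge-leaf if not); (2) pin down the weight parameters and $\alpha$ so that (a) every non-$u$ agent is in best response in the chosen profile — this needs the eccentricity-boundedness argument and a check that no non-$u$ agent gains from a swap, which is where tree metrics help because distances are rigid — and (b) for agent $u$, buying an edge to an edge-leaf or to a non-covering vertex-leaf is never part of an optimal response, so the strategy space of $u$ collapses to ``subsets of vertex-leaves that form a vertex cover''; (3) write $u$'s cost as an affine function of $|S_u|$ plus constants, analogous to the $3N+6m+k'$ computation in Theorem~\ref{thm:1_2_graph_des_probl_hard}, and conclude that the best response of $u$ encodes a minimum vertex cover, hence computing it is NP-hard.

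I expect the main obstacle to be step (2a): certifying that \emph{every} agent other than $u$ is genuinely in best response. In the \OTNCG proof this was easy because all distances are in $\{1,2,3\}$ and one only argues about eccentricity~$3$; with a general tree metric the agents can have many different pairwise distances, and one must rule out profitable edge swaps and additions for the vertex-leaves and edge-leaves as well. I would handle this by making the ``backbone'' of $T$ a star (or a path of controlled length) with carefully scaled weights so that every agent other than $u$ already sits at (essentially) minimum possible total distance, and buying any extra edge costs at least $\alpha$ times a weight that dominates any achievable distance saving — using the $(\alpha+1)$-spanner property from Lemma~\ref{lem:NE_spanner} as a sanity check on what distances are even reachable. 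A secondary subtlety is ensuring $u$ never profits from buying an edge to an edge-leaf directly; this is forced by giving each edge-leaf a twin $p_j'$ so that one such purchase only fixes one of the twins, exactly the trick used in Theorem~\ref{thm:1_2_graph_des_probl_hard}, which I would replicate in the tree setting.
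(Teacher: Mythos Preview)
Your high-level plan --- reduce from an NP-hard covering problem and build a tree-metric instance with one distinguished agent $u$ whose optimal strategy encodes a minimum cover --- is the same route the paper takes (the paper reduces from \textsc{Set Cover} rather than \textsc{Vertex Cover}, but that is immaterial).

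However, what you flag as ``the main obstacle'', step~(2a), is not an obstacle at all: it is simply not part of the problem. The best-response problem takes as input a \emph{fixed} strategy profile $\sprofile$ and an agent $u$, and asks for a strategy minimizing $cost(u,\cdot)$ against $\sprofile_{-u}$; nothing requires the other agents to be playing best responses. You are importing the setup of Theorem~\ref{thm:1_2_graph_des_probl_hard}, which concerns the NE \emph{decision} problem (where every agent must be certified stable), into a pure best-response hardness statement. The paper's proof accordingly never checks stability of the $a_i$, $b_i$, $p_j$ agents --- it just writes down a graph $G$ with an ownership assignment and argues only about $u$. Dropping (2a) removes most of the difficulty you anticipate; the twin $p_j'$ trick you borrow from Theorem~\ref{thm:1_2_graph_des_probl_hard} is likewise unnecessary here.

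A second, more technical point: you suggest choosing tree weights so that $p_j$ is close to exactly the $a_i$ with $v_i\in e_j$. A tree metric cannot encode this in general, since each leaf $p_j$ hangs off a single point of $T$ and is therefore either close to one $a_i$ or (if the $a_i$'s are clustered) roughly equidistant from all of them. The paper sidesteps this by clustering all $a_i$ within $\varepsilon$ of a common node $c$ (so the metric is essentially oblivious to the covering structure) and instead encoding ``$a_i$ covers $p_j$'' via the edges of the \emph{current network} $G$: $G$ contains the edge $(a_i,p_j)$ iff element $j$ lies in set $X_i$. Then buying $(u,a_i)$ shortens $d_G(u,p_j)$ precisely for those $p_j$ reachable from $a_i$ in $G$, which is exactly the covering relation. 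Your proposal should make this mechanism explicit rather than trying to push it into the tree weights.
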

\begin{proof}
We perform the proof by a reduction from the Minimum Set Cover problem, which is well-known to be NP-hard. The problem is defined as follows: for a given universe $U=\{1,2,\ldots,k\}$ and a collection of non-empty subsets $\mathcal{X}=\{X_1,\ldots,X_m\}$ such that for any $1\leq i \leq m$ we have $X_i\subseteq U$ and $\bigcup\limits_{i=1}^m {X_i} = U$. It is required to find minimum number of subsets covering~$U$.

We define the corresponding instance of the best response problem in the \TNCG with $\alpha=1$ as follows: Consider a tree $T = (V, E_T)$ which defines metric such that $$V = \{u, c\}\cup \{a_1,\ldots,a_m\}\cup\{b_1,\ldots,b_m\}\cup \{p_1,\ldots,p_k\}$$ and $$E=\{(c,u)\}\cup\bigcup\limits_{i=1}^m\left(\{(b_i,u), (c,a_i)\}\cup\{(a_i,p_j)|\ p_j\in X_i\}\right),$$ where each $p_j$ represents one element of the universe $U$ and each $a_i$ corresponds to one subset $X_i$. All nodes $c, b_1,\ldots, b_m$ are connected with $u$ and each edge $(u,b_i)$ has length $\frac{1}{2}(L-\beta)$, whereas the edge $(u,c)$ is of length $L-\varepsilon$. Each of set nodes $a_1, \ldots, a_m$ is connected with $c$ by an edge of length $\varepsilon$. Furthermore, all edges between the element nodes $p_1, \ldots, p_m$ and the set nodes are of length $L$ and each set element node connected with only one set node.  We assume throughout the proof that $L >>\varepsilon$ holds. Moreover, we assume that each edge $(b_i,u)$ is owned by the respective node $b_i$. Finally, note that agent $u$ does not own any edges in $G$. 
See the right side Figure~\ref{fig:NP_tree_metric} for the illustration of the constructed graph.

Consider a graph $G=(V,E)$ such that each node $c, b_1, \ldots, b_m$ is an owner of the edge connecting it with the node $u$. For all $i=1,\ldots,m$ there is an edge $(b_i,a_i)$ of length $\frac{1}{2}(L-\beta)+(L-\varepsilon)+\varepsilon =\frac{1}{2}(L-\beta)+L$. Also each element node $p_j$ is connected with some set node $a_i$ iff the element is in the set corresponding to the set node. Note that $L\leq w(a_i,p_j)\leq L+2\varepsilon$.  See figure~\ref{fig:NP_tree_metric}(left) for the illustration of the constructed graph $G$ and the metric tree $T$.

\begin{figure}[h!]
\center{\includegraphics[width=0.9\textwidth]{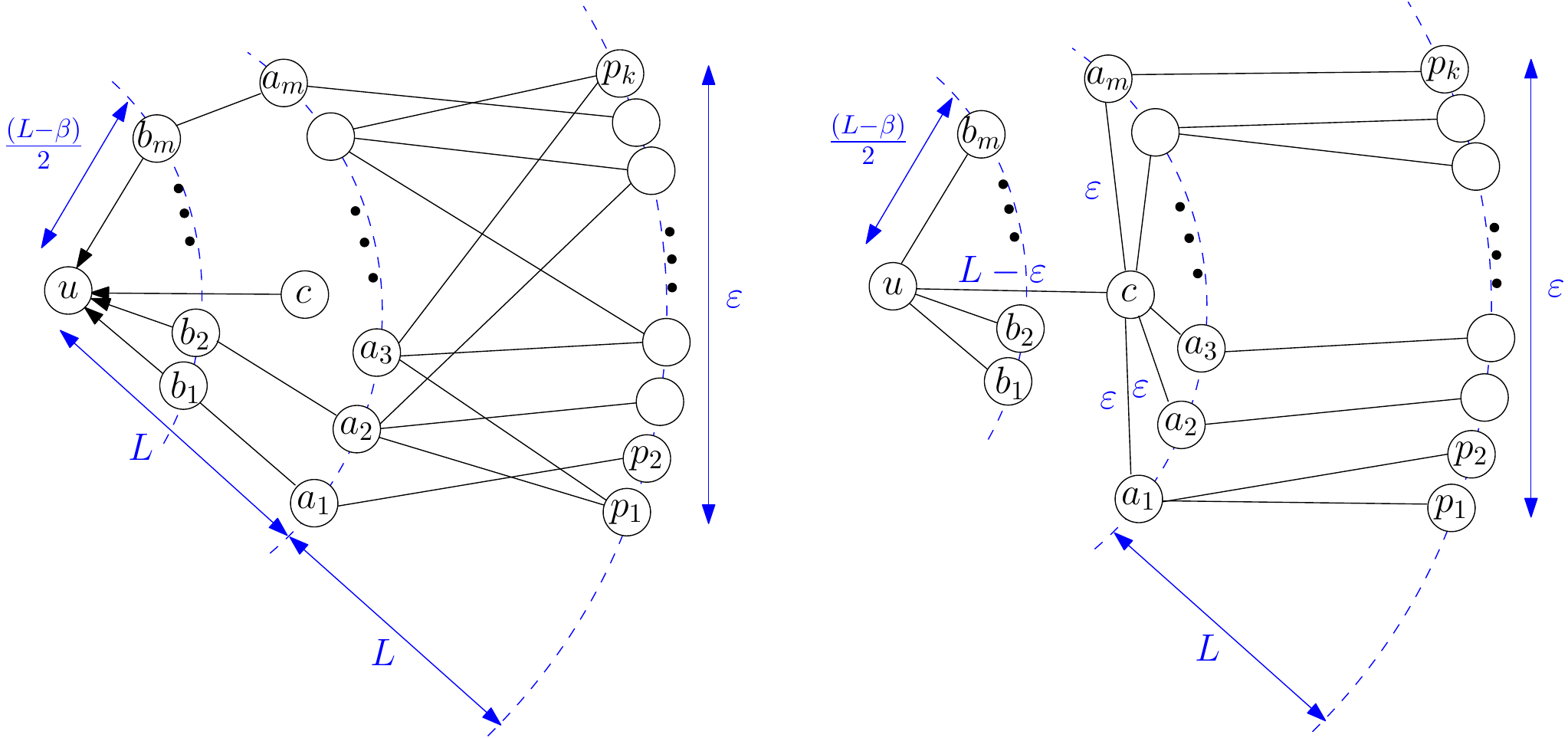}}
\caption{Illustration of the construction used in the reduction. The left figure shows the constructed graph $G$. On the right figure the given metric tree $T$ is shown .}\label{fig:NP_tree_metric}
\end{figure} 

We claim that the best response $S_u^*$ of the player $u$ contains only the set nodes corresponding to a solution of the Minimum Set Cover problem. 

First, we prove that $S_u^*$ includes only set nodes. Assume that there is a node $p_j\in S_u^*$. Consider a node $a_i$ corresponding to the set $X_i$ containing element $p_j$. If $a_i\in S_u^*$, then the player $u$ can delete the edge $(u,p_j)$ because this move decreases her edge cost by $w(u,p_j) = 2L$ and increases her distance cost by at most $2\varepsilon$, thus, her total cost decreases by at least $L-2\varepsilon$. If $a_i\notin S_u^*$, then swap of the edge $(u,p_j)$ to $(u,a_i)$ improves agent's cost by at least $L-2\epsilon-\beta$ because it decreases her edge cost by $L$, decreases distance cost to at least one node, which is $a_i$, by $L-\beta$ and increases distance only to the node $p_j$ by $2\varepsilon$.  

Next we show that all the nodes in $S_u^*$ correspond to a set cover. Indeed, if there is a node $p_j$ such that none of its incident set nodes is in $S_u^*$, then buying an edge to one of the corresponding set node $a_i$ decreases distance between the node $u$ and both nodes $a_i, p_j$ by at least $2(2L-\beta -L)=2(L-\beta)$. Hence, the social cost of the agent $u$ is improves by at least $L-2\beta$.  

Finally, we show that $S_u^*$ corresponds to a minimum set cover. Consider two strategies $S_u^1, S_u^2$ corresponding to two different set covers of all elements. Assume $\Delta := |S_u^2|-|S_u^1|>0$. Thus, the difference in the agent's $u$ cost with strategy $S_u^1$ compared with the strategy $S_u^2$  is $-\Delta\cdot L + (L-\beta)\Delta + 2k\varepsilon = -\Delta\beta +2k\varepsilon<0$. 
\end{proof}

\paragraph{Dynamic Properties} The following theorem shows that the network dynamics consisting of best responses only may never converge to a NE for \TNCG and thus also for \MNCG.
\begin{theorem}\label{thm_tree_no_FIP}
The \TNCG is not a potential game.
\end{theorem}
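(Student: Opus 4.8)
The plan is to exhibit an explicit \emph{best response cycle}: a weighted tree $T$ defining the host metric, a starting strategy profile, and a finite sequence of best response moves by individual agents that returns to the starting profile. By the discussion in Section~\ref{sec_prelims}, the existence of such a cycle shows that the \TNCG violates the finite improvement property and hence is not a potential game; since the \TNCG is a special case of the \MNCG, the same conclusion follows for the \MNCG.

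The construction I would use keeps all but a constant number of agents passive and lets two (or three) ``active'' agents cyclically change a single incident edge, so that the optimal target of each active agent alternates depending on the current choice of the other, in a rock-paper-scissors-type dependency. Concretely, I would take $T$ to consist of a few central branch nodes joined by edges of carefully chosen lengths (governed by a scale parameter $L$ and a small perturbation parameter $\varepsilon \ll L$), and attach to each branch node a large cluster of $\Theta(N)$ pendant leaves. The clusters serve only to amplify distance costs: when an active agent's edge brings her ``closer'' to such a cluster, the distance-cost saving is $\Theta(N)$ and dominates the $\Theta(\alpha)$ edge-cost term, so the direction of every improving move is dictated purely by which cluster the agent can currently reach cheaply. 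Recall from Theorem~\ref{thm_tree_NE_is_tree} that every NE in the \TNCG is a tree, so the intermediate configurations of the cycle will all contain a cycle (a non-tree edge together with the tree path it short-circuits), which is exactly what makes a non-terminating best-response sequence possible.

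The cycle itself would be described step by step: in profile $\sprofile_0$ agent $a$ owns an edge to location $X$ and agent $b$ owns an edge to location $Y$; given $b$'s choice, $a$'s unique best response is to swap her edge to $X'$, yielding $\sprofile_1$; given $a$'s new choice, $b$'s best response is to swap to $Y'$, yielding $\sprofile_2$; and so on, until after a constant number of steps we are back at $\sprofile_0$. For each of these moves I would verify (i) that the move strictly decreases the mover's cost, by comparing the edge-cost change, at most a constant multiple of $\alpha$, with the distance-cost change, which is negative and of magnitude $\Theta(N)$ because of a reachable pendant cluster; and (ii) that the chosen target is actually optimal among \emph{all} strategies available to the mover, which is where the perturbation terms $\varepsilon$ matter: they break ties so that exactly one alternative is strictly best.

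The main obstacle will be the bookkeeping in step (ii): one must check that no agent has some \emph{other} profitable deviation (buying additional edges, deleting edges, or swapping to a third location) that would take the dynamics off the intended cycle, and that this holds simultaneously for all the profiles $\sprofile_i$, including confirming that the passive agents have no improving move in any of them. Calibrating the edge lengths, the parameters $L$ and $\varepsilon$, and the cluster size $N$ so that all of these inequalities hold at once is the delicate part; once a consistent choice is fixed, the verification reduces to a finite collection of elementary shortest-path computations in $T$.
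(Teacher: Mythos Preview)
Your approach---exhibiting an explicit best response cycle---is exactly what the paper does. The paper's proof is in fact only a sketch: it specifies a single small weighted tree (a handful of nodes, with fixed numeric edge lengths) and a length-$4$ best response cycle shown in a figure, with no amplifying clusters and no asymptotic parameters.

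Your proposed construction is more elaborate than necessary. Attaching $\Theta(N)$ pendant leaves to force distance costs to dominate and tuning a scale/perturbation pair $(L,\varepsilon)$ is a standard device, but the paper shows it is not needed here: a fixed finite instance with carefully chosen edge lengths already suffices. The advantage of the paper's route is that the verification in your step~(ii)---checking that each move is a \emph{best} response among all strategies, not merely an improving one---becomes a short finite case analysis rather than an asymptotic argument in $N$. Your route would also work, but it carries more bookkeeping for no additional payoff.

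One minor logical slip: the fact that every NE is a tree (Theorem~\ref{thm_tree_NE_is_tree}) does \emph{not} imply that the intermediate profiles of a best response cycle must contain a graph-theoretic cycle; non-equilibrium profiles can perfectly well be trees. That remark is not needed for the argument and should simply be dropped.
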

\begin{proof}[Proof(sketch)]

Consider the weighted tree depicted in Fig.~\ref{fig:BRC_tree_metric_combined}~(left). With this tree defining the metric distances, we can construct a best response cycle of length 4. See Fig.~\ref{fig:BRC_tree_metric_combined}~(right)).
\begin{figure}[h]
\center{\includegraphics[width=0.8\textwidth]{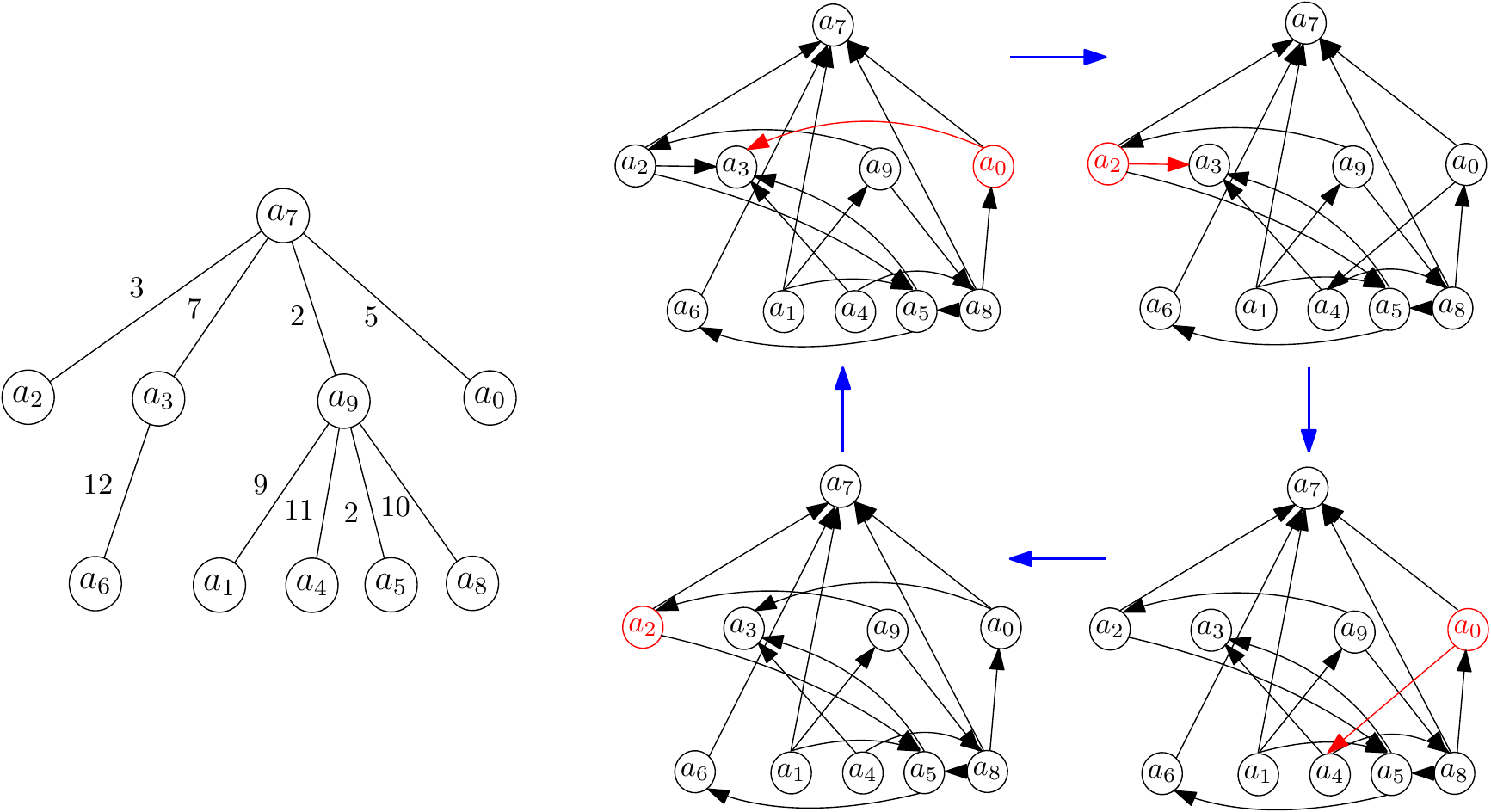}}
\caption{Left: Tree which defines the metric. Right: Best response cycle. The active agent is colored red. The direction of the edges denotes the owner.}\label{fig:BRC_tree_metric_combined}
\end{figure} 
\end{proof}

\paragraph{Price of Anarchy} We prove that the upper bound given in Theorem~\ref{thm:metric_PoA} for more general metric instances is tight for tree metrics and thus also for graph metrics.
\begin{theorem} \label{thm:LB_PoA_tree_metric}
The PoA in the \TNCG is at least $\frac{\alpha+2}{2}-\varepsilon$, for any $\varepsilon > 0$.
\end{theorem}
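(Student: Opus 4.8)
The plan is to exhibit, for every $\alpha>0$ and every $\varepsilon>0$, an explicit tree metric on $n=N+1$ vertices together with a NE network $G$ whose social cost is within a $(1-O(1/N))$ factor of $\frac{\alpha+2}{2}$ times the cost of the tree $T$ defining the metric; since $T$ is the social optimum by Corollary~\ref{thm:T_is_OPT}, taking $N$ large then gives $PoA\geq\frac{\alpha+2}{2}-\varepsilon$. Concretely, I would let $T$ be the star with center $c$ and leaves $r,v_2,\dots,v_N$, where the edge $(c,r)$ has weight $\alpha/2$ and every edge $(c,v_i)$ has weight $1$; in the induced metric closure this means $w(r,v_i)=\alpha/2+1$ and $w(v_i,v_j)=2$. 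The candidate NE $G$ is the ``mirror'' star centered at the leaf $r$: agent $r$ buys all the edges $(r,c)$ and $(r,v_i)$ and nobody else buys anything, so that $d_G(v_i,v_j)=\alpha+2=\frac{\alpha+2}{2}\,d_T(v_i,v_j)$ for all $i\neq j$, while $d_G(r,c)=\alpha/2$, $d_G(r,v_i)=\alpha/2+1$ and $d_G(c,v_i)=\alpha+1$ stay small. Intuitively this is precisely the configuration that makes the per-pair bound $\sigma\leq\frac{\alpha+2}{2}$ from the proof of Theorem~\ref{thm:metric_PoA} tight on the $\Theta(N^2)$ leaf--leaf pairs, all other pairs contributing a negligible $\Theta(N)$.

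The first and main step is to verify that $G$ is a NE. Agent $r$ owns every edge and is adjacent to all vertices, so it can neither profitably delete an edge (it would disconnect a leaf, making the cost infinite) nor swap or add, hence $r$ plays a best response. The remaining agents own nothing, so I only need to bound their best responses among edge additions. The key computation is that for a leaf $v_i$ buying $(v_i,c)$ is exactly cost-neutral --- it costs $\alpha\,w(v_i,c)=\alpha$ and shortens only the distance to $c$, by $d_G(v_i,c)-1=\alpha$ --- whereas buying any $(v_i,v_j)$ costs $2\alpha$ but shortens only the $v_i$--$v_j$ distance, by $d_G(v_i,v_j)-2=\alpha$, and is therefore strictly harmful; the same accounting applies to agent $c$. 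I then have to argue that no \emph{bundle} of additions helps either: the cost-neutral additions stay cost-neutral, the harmful ones stay harmful, and the savings do not compound because any detour through a freshly bought edge to a third vertex is no shorter than the route already present in the star $G$ (one checks this for the few relevant triples). This bookkeeping over all multi-edge deviations of the leaf agents is the only place that needs care; the star geometry keeps it short.

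The second step is a routine cost computation. Both $T$ and $G$ are trees, so the all-pairs distance sum of each equals $2\sum_e w(e)$ times the number of vertex pairs separated by $e$, and in a star every edge separates one vertex from the other $N$; hence $\sum_u d_T(u,V)=2N\,w(T)$ and $\sum_u d_G(u,V)=2N\,w(G)$ with $w(T)=\alpha/2+N-1$ and $w(G)=N\alpha/2+N-1$. This yields $cost(T)=(\alpha/2+N-1)(\alpha+2N)$ and $cost(G)=(N\alpha/2+N-1)(\alpha+2N)$, so by Corollary~\ref{thm:T_is_OPT} the Price of Anarchy is at least $cost(G)/cost(T)=(N\alpha/2+N-1)/(\alpha/2+N-1)$, which tends to $\frac{\alpha+2}{2}$ as $N\to\infty$; choosing $N$ large enough yields the claimed bound, and since tree metrics are metric, this lower bound carries over to the \MNCG as well. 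The expected main obstacle is entirely in the first step: excluding all (including multi-edge) deviations of the leaf agents, rather than only single add/delete/swap moves.
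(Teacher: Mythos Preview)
Your proposal is correct and takes essentially the same approach as the paper: the paper's tree is a star with one edge of weight $1$ and $n-2$ edges of weight $2/\alpha$, and its NE is the star re-centered at the special leaf, which is exactly your construction after rescaling all weights by $\alpha/2$. The NE verification and the cost computation (both trees being stars, so $cost=(\alpha+2n-2)\cdot w(\text{tree})$) are identical; you are simply a bit more explicit than the paper about why multi-edge additions by the leaf agents cannot help.
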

\begin{proof}
Let $S^*_n$ be the weighted tree which defines the metric distances. The tree $S^*_n$ is a star and contains $n-2$ edges of weight $2/\alpha$ and one edge $(u,v)$ of weight 1, where $u$ is the center of the star. See Fig.~\ref{fig:LB_PoA_tree_metric}~(left).
\begin{figure}[h]
\center{\includegraphics[width=0.6\textwidth]{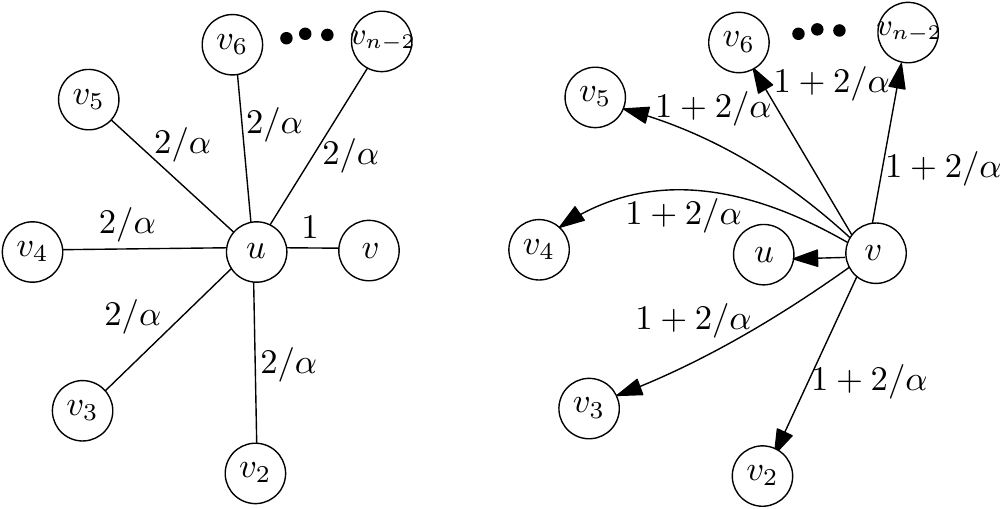}}
\caption{Left hand side figure shows a tree $T$, on the right figure is shown the stable graph $S_n$.}\label{fig:LB_PoA_tree_metric}
\end{figure}  

\noindent The star $S_n^*$ minimizes the social cost, which is 
\begin{align*}
 cost(S^*_n)=&\alpha \cdot edge(S^*_n)+ edge(S^*_n) + (n-2)(edge(S^*_n)+(n-2)\cdot 2/\alpha)+(edge(S^*_n)+n-2)\\=&(2n+\alpha-2)\cdot edge(S^*_n)= (2n+\alpha-2)\cdot \left((n-2)2/\alpha + 1\right).
\end{align*}

%$cost(S^*_n)=\alpha edge(S^*_n)+ edge(S^*_n) + (n-2)(edge(S^*_n)+(n-2)\cdot 2/\alpha)+(edge(S^*_n)+n-2)=(2n+\alpha-2)\cdot edge(S^*_n)= (2n+\alpha-2)\cdot \left((n-2)2/\alpha + 1\right)$.

\noindent Let $S_n$ be a spanning star of the host graph such that a center of $S_n$ is the vertex $v$. The star $S_n$ contains one edge of weight 1 and $(n-2)$ edges of weight $(1+2/\alpha)$. Moreover, we assume that the central vertex $v$ is an owner of all edges in $S_n$. See Fig.~\ref{fig:LB_PoA_tree_metric}~(right).

We claim that $S_n$ is in NE. Indeed, the central agent $v$ cannot improve her strategy because all other vertices are leaves. No leaf owns an edge and, therefore, a leaf agent can possibly improve her strategy only by adding edges to other leaves. For any leaf agent $x\neq u$ buying the edge $(x,u)$ costs $\alpha\cdot 2/\alpha=2$ and improves her distances only towards $u$ by $2+2/\alpha - 2/\alpha=2$. Hence, buying $(x,u)$ is not an improvement. Buying any other edge costs $\alpha\cdot 4/\alpha=4$ for agent $x$ and improves her distance only to the endpoint of the new edge by $2+4/\alpha-4/\alpha=2$. At the same time, the agent $u$ cannot improve her strategy by buying edges to the leafs because it improves distance to each $v_i$ by 2 and increases edge cost by the same value for each new edge. 
Hence, no agent has an improving strategy change and it follows that $S_n$ is in NE. The social cost of $S_n$ is 
\begin{align*}
 cost(S_n) &=(2n+\alpha-2)\cdot edge(S_n)= (2n+\alpha-2)\cdot \left((n-2)(1+2/\alpha) + 1\right).
\end{align*}

%$cost(S_n) =(2n+\alpha-2)\cdot edge(S_n)= (2n+\alpha-2)\cdot \left((n-2)(1+2/\alpha) + 1\right)$.
\noindent Then, for sufficiently large $n$, the ratio between he social costs of the NE network $S_n$ and the optimum $S_n^*$ is $\frac{\alpha+2}{2}-\varepsilon$. 
\end{proof}

\subsection{Points in $\mathbf{R}^d$}\label{subsec_points}
In this section we consider the \MNCG with the assumption that all nodes are points in $\mathbf{R}^d$ and that distances are measured via the $p$-norm, i.e., for any two points $u=(u_1,\ldots,u_d), v=(v_1,\ldots,v_d)$ the weight of the corresponding edge between them is defined as $$w_p(u,v):= \left(\sum\limits_{i=1}^d{|u_i-v_i|^p}\right)^{1/p}.$$ Further, we omit the subscript $p$ if its value does not play any role. 

\paragraph{Hardness}
We start with investigating the hardness of computing the best response of an agent in the \RDNCG.
 
\begin{theorem}\label{th:points_in_he_plane_BR_hardness}
It is NP-hard to compute a best response of an agent in the \RDNCG under any $p$-norm.
\end{theorem}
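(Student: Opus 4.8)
My plan is to reduce from a suitable NP-hard covering problem, mirroring the structure of the \TNCG best-response hardness proof (Theorem~\ref{thm_tree_hardness}) but realizing the required distances geometrically in $\mathbf{R}^d$. The natural candidate is again Minimum Set Cover (or the equivalent Dominating Set / Vertex Cover), and the goal is to build a point configuration together with a partial strategy profile so that agent $u$'s best response must (a) avoid buying edges to ``element'' points directly, (b) cover every element point by buying an edge to an incident ``set'' point, and (c) do so with the fewest set points, i.e. compute a minimum set cover. The cost bookkeeping in Theorem~\ref{thm_tree_hardness} already tells us exactly which inequalities among pairwise distances we need; the only new work is to witness these inequalities by actual points under the $p$-norm.

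Concretely, I would first fix a large length scale $L$ and small perturbation $\varepsilon$, and then place: one point for agent $u$; a ``hub'' point $c$ near $u$; one point $a_i$ per set $X_i$, all clustered near $c$ at mutual distances $O(\varepsilon)$; one point $p_j$ per universe element, placed so that $w(a_i,p_j)\in[L, L+O(\varepsilon)]$ exactly when $p_j\in X_i$ and strictly larger otherwise (this ``incidence via distance'' is the geometric heart of the construction). Finally I would add the ``pendant'' gadget points $b_i$ (pre-connected, via edges owned by the $b_i$, to both $u$ and $a_i$) that force the baseline distances from $u$ to the $a_i$'s to be roughly $2L$, so that $u$ strictly prefers buying a short edge $(u,a_i)$ over keeping long detours, and strictly prefers $(u,a_i)$ over $(u,p_j)$. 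The arithmetic of improving moves is identical to the tree case, so once the points are placed the correctness argument — $S_u^*$ uses only set nodes, forms a cover, and is minimum because each extra set node costs $\approx \beta - o(1)>0$ net — carries over verbatim.

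The step I expect to be the main obstacle is the geometric realization that works simultaneously for \emph{every} $p$-norm (the theorem claims NP-hardness under any fixed $p$). For $p=2$ this is a standard distance-realizability exercise, but for general $p$ the unit balls are different shapes, so I cannot reuse a Euclidean embedding as a black box. I would handle this by taking $d$ large enough (e.g. $d = \Theta(m)$, one coordinate block per set, plus blocks for the gadget points) and using axis-aligned ``coordinate-gadget'' placements: put each $a_i$ at a distinct coordinate direction and each $p_j$ at the sum of the directions of the sets containing it (scaled appropriately), so that $w_p(a_i,p_j)$ is controlled by how many coordinates differ, which is an incidence-dependent integer independent of $p$ up to the $(\cdot)^{1/p}$ normalization. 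Choosing the scale so the relevant distance gaps exceed the $p$-dependent distortion factor ($d^{1/p}$-type terms are bounded once $d$ is fixed) makes all the strict inequalities robust across norms. The remaining details — verifying that every point-to-point distance needed in the improving-move inequalities lies in the stated interval, and that no spurious improving move for $u$ exists outside the set-cover-shaped moves — are routine once the embedding is pinned down, so I would present the embedding in a lemma and then invoke the cost calculation of Theorem~\ref{thm_tree_hardness} almost word for word.
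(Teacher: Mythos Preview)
Your reduction from \textsc{Set Cover} and the overall gadget architecture (agent $u$, set nodes $a_i$, element nodes $p_j$, pendant nodes $b_i$ that pin the baseline distances) match the paper exactly. The divergence is in what you think the hard part is.

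You write that the ``geometric heart of the construction'' is to place the points so that $w(a_i,p_j)\in[L,L+O(\varepsilon)]$ precisely when $p_j\in X_i$ and strictly larger otherwise, and you then spend your effort on a high-dimensional coordinate embedding (with $d=\Theta(m)$) to realize this incidence-via-distance for every $p$-norm. But this is a misreading of the tree-metric proof you are emulating: there, too, the host-graph weights satisfy $w(a_i,p_j)\in[L,L+2\varepsilon]$ for \emph{all} pairs $(i,j)$, not just the incident ones. The set/element incidence is encoded entirely in the \emph{strategy profile} --- the edges $(a_i,p_j)$ that the other agents have already bought --- and not in the metric at all. Once you see this, the geometry becomes trivial: the paper works in the plane ($d=2$), puts all the $a_i$ on a tiny arc of a circle of radius $L$ around $u$, all the $p_j$ on a tiny arc of a circle of radius $2L$, and the $b_i$ on the rays $ua_i$ at radius $\tfrac12(L-\beta)$. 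All relevant pairwise distances are then within $O(\varepsilon)$ of the values in the tree proof, uniformly for every $p$-norm, because only radial separations matter up to first order. The cost bookkeeping is then literally the same as in Theorem~\ref{thm_tree_hardness}.

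So your proposal is not wrong in spirit, but the step you flag as the main obstacle is a non-issue, and the fix you propose for it (dimension growing with the instance) would yield only the weaker statement that best response is hard when $d$ is part of the input, whereas the paper gets hardness already for $d=2$.
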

\begin{proof}
We perform the proof by a reduction from the Minimum Set Cover problem analogously to the proof of the Theorem~\ref{thm_tree_hardness}. 
We define the corresponding instance of the best response problem in the \RDNCG with $\alpha=1$ as follows: Consider a graph $G=(V,E)$ such that $$V = \{u\}\cup \{a_1,\ldots,a_m\}\cup\{b_1,\ldots,b_m\}\cup \{p_1,\ldots,p_k\}$$ and $$E=\bigcup\limits_{i=1}^m\{(b_i,u), (b_i,a_i)\}\cup \bigcup\limits_{i=1}^m\bigcup\limits_{p_j\in X_i}\{(a_i,p_j)\},$$ where each $p_j$ represents one element of the universe $U$ and each $a_i$ corresponds to one subset $X_i$. We locate nodes on the plane such that all points $a_1,\ldots,a_m$ are at the same distance $L$ from $u$ and equally spaced on the circle segment of length equal to some arbitrary small value $\varepsilon >0$. All points $p_1,\ldots,p_k$ are equispaced on the circle segment of the same length $\varepsilon$ and are at distance $2L$ from $u$.  We assume throughout the proof that $L >>\varepsilon$ holds. In addition, we have a set of nodes $\{b_1,\ldots,b_m\}$ at distance $\frac{1}{2}(L-\beta)$, where $\frac{1}{3}L > \beta > k\varepsilon$, such that each $b_i$, for $1\leq i \leq m$ lies on the line through the nodes $u$ and $a_i$ and is connected to nodes $u$ and $a_i$. By construction we get that $d_G(b_i,a_i) = \frac{1}{2}(L-\beta)+L$, $d_G(u,a_i) = 2L-\beta$ and $d_G(u,p_j) = 3L- \beta$. Moreover, we assume that each edge $(b_i,u)$ is owned by the respective node $b_i$. Finally, note that agent $u$ does not own any edges in $G$. 
See Figure~\ref{fig:NPhard} for the illustration of the constructed graph.
\begin{figure}[h!]
\center{\includegraphics[width=0.6\textwidth]{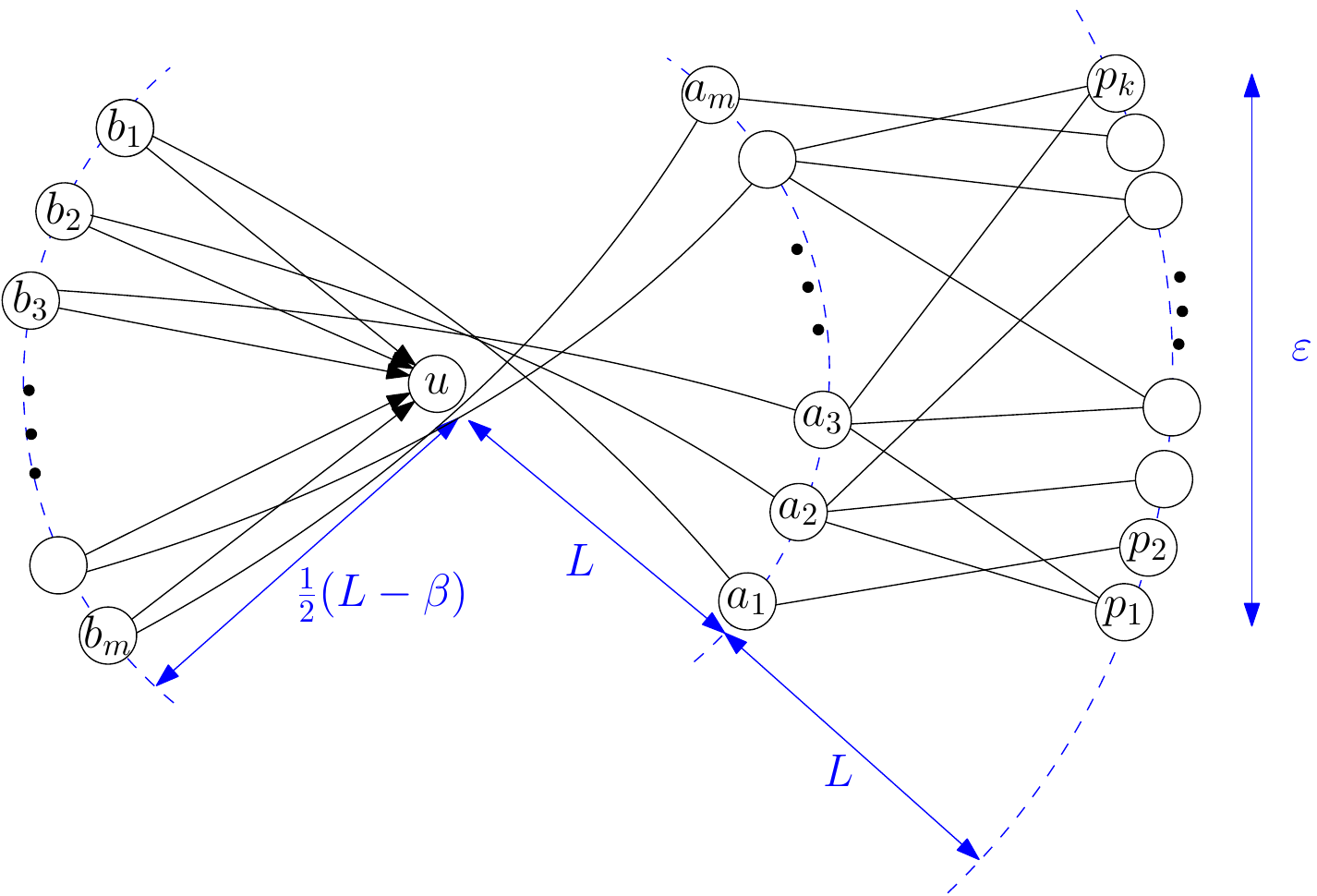}}
\caption{Illustration of the construction used in the reduction.}\label{fig:NPhard}
\end{figure} 

\noindent We claim that the best response of the agent $u$ corresponds to the solution of the Minimum Set Cover problem. 

 Consider a best response strategy $S_u^*$ of agent $u$ and the corresponding network $G^*$ which is the network $G$ augmented by the edges which agent $u$ buys according to strategy $S_u^*$. Thus, $G^* = (V,E^*)$, where $E^* = E(G) \cup \bigcup_{v\in S_u^*}\{(u,v)\}$. 
 
 First of all, we show that $S_u^*\neq\emptyset$ must hold, since adding at least one edge $(u,a_i)$ to network $G$ is already an improvement for agent $u$. This edge costs $L$ and decreases agent $u$'s distance to $a_i$ by $L-\beta$. Moreover, since $X_i$ is non-empty there must be a neighboring $p$-node of node $a_i$. Let $p_j$ be this node. Buying edge $(u,a_i)$ changes agent $u$'s distance to $p_j$ from at least $3L-\beta$ to at most $2L+\varepsilon$, which is an improvement of at least $L-\beta-\varepsilon$. Hence, the total improvement for agent $u$ is at least $L-2\beta-\varepsilon > 0$.
 
 Now we prove that agent $u$ always prefers buying edges to $a_i$ nodes over edges to $p_j$ nodes. Assume that $p_j\in S_u^*$ and let $a_r$ be a set node which is adjacent to $p_j$. Since $\bigcup X_i = U$, such a node $a_r$ must exist by construction.
If there is a set node $a_i\in S_u^*$ which is adjacent to $p_j$ in $G^*$, then agent $u$ can simply delete the edge $(u,p_j)$ because it decreases her edge cost by at least $2L$ and increases her distance cost by at most $\varepsilon$, since only the distance to the nodes $p_j$ can increase by at most $\varepsilon$, since for any $1\leq r\leq m$ holds $w(a_i,a_r) \leq \varepsilon$.

Since $L>>\varepsilon$, deleting $(u,p_j)$ would be an improving move for agent $u$. 
If there is no node $a_i\in S_u^*$ such that $(a_i,p_j)\in E(G^*)$, then the swap of the edge $(u,p_j)$ to $(u,a_r)$ improves agent $u$'s cost by at least $2L-\varepsilon-\beta$ since this move improves $u$'s distance to at least node $a_r$ by $L-\beta$, increases $u$'s distance only to the nodes $p_j$ by $\varepsilon$, respectively, and improves $u$'s total edge cost by~$L$.
Hence, $S_u^*$ cannot contain $p_j$ nodes.

Next, we show that every $p_i$ node is adjacent to some node $a_i \in S_u^*$, i.e., that the corresponding set of subsets $\{X_i \mid a_i \in S_u^*\}$ is a set cover of $U$. For the sake of contradiction, assume that there is a node $p_j$ for which there is no node $a_i \in S_u^*$ such that $(a_i,p_j) \in E(G^*)$. Clearly, there must be a path from $u$ to $p_j$ in $G^*$ since otherwise agent $u$ would have infinite cost. Let $a_r \notin S_u^*$ be any set node, for which $(a_r,p_j) \in E(G^*)$. Such a node $a_r$ must exist, since $\bigcup X_i = U$. Thus, we have that $d_{G^*}(u,p_j) \geq 3L-\beta$, since there is a path from $u$ to $p_j$ via $b_r$ and $a_r$. 
We claim that agent $u$ could buy the edge $(u,a_r)$ and thereby strictly decrease her cost. The edge $(u,a_r)$ costs $L$ and decreases agent $u$'s distances to $a_r$ by $L-\beta$ and to each of the nodes $p_j$ by at least $L-\beta$. Thus, this yields a cost decrease for agent $u$. Note, that this implies that agent $u$ can improve on any strategy $S_u$, where the corresponding set of subsets $\{X_i \mid a_i \in S_u\}$ does not cover all elements of $U$.     
Thus, the set of subsets $\{X_i \mid a_i \in S_u^*\}$ must be a set cover of $U$. 

We finish the proof by showing that the best response strategy of agent $u$ corresponds to a minimum set cover of the given set cover instance. For this, consider two arbitrary strategies $S_u^1$ and $S_u^2$ of agent $u$, such that the corresponding sets $\{X_i \mid a_i \in S_u^1\}$ and $\{X_i \mid a_i \in S_u^2\}$ both cover all elements of $U$. Now we show that if $|S_u^1| < |S_u^2|$ then $u$'s cost with strategy $S_u^1$ is strictly less than $u$'s cost with strategy $S_u^2$. This implies that agent $u$'s best response strategy $S_u^*$ corresponds to a minimum set cover.

Let $\Delta = |S_u^2| - |S_u^1|$. Hence, the difference between agent $u$'s edge cost with strategy $S_u^1$ and $u$'s edge cost with strategy $S_u^2$ is exactly $-\Delta\cdot L$. Since both strategies correspond to set covers and since $w(a_1,a_m) = \varepsilon$, the distances of $u$ to any $p_j$  node under the strategies $S_u^1$ and $S_u^2$ can differ by at most $\varepsilon$. Moreover, with strategy $S_u^1$ agent $u$ has distance $L$ to exactly $|S_u^1|$ many $a_i$ nodes and distance $2L-\beta$ to all the other $a_i$ nodes. Analogously, with strategy $S_u^2$ agent $u$ has distance $L$ to $|S_u^2|$ many $a_i$ nodes and distance $2L-\beta$ to the other $a_i$ nodes. 
Thus, the total difference in agent $u$'s cost with strategy $S_u^1$ compared with strategy $S_u^2$ for agent $u$ is $ -\Delta\cdot L + k\varepsilon + \Delta(L-\beta) = -\Delta\beta + k\varepsilon < 0$, where the inequality holds since $\Delta \geq 1$ and by construction we have $\beta > k\varepsilon$.   
\end{proof}

\paragraph{Dynamic Properties} Also for the \RDNCG we investigate whether best response dynamics are guaranteed to converge, i.e. if the game has the finite improvement property.
\begin{theorem}\label{th:points_in_the_plane_best_resp_cycle}
The \RDNCG with the 1-norm does not have the finite improvement property. 
\end{theorem}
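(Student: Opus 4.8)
The plan is to reduce the claim to the already-established non-convergence for tree metrics (Theorem~\ref{thm_tree_no_FIP}) by invoking the classical fact that every weighted tree metric embeds isometrically into $\mathbb{R}^d$ with the $1$-norm. Concretely, let $T$ be the weighted tree from the proof of Theorem~\ref{thm_tree_no_FIP} that induces a best response cycle of length $4$, let $e_1,\dots,e_d$ be its edges with positive weights $w_1,\dots,w_d$, root $T$ at an arbitrary vertex, and for each vertex $v$ define $\phi(v)\in\mathbb{R}^d$ by setting its $j$-th coordinate to $w_j$ if $e_j$ lies on the root-to-$v$ path in $T$ and to $0$ otherwise. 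A short telescoping computation gives, for any two vertices $u,v$, that $\|\phi(u)-\phi(v)\|_1=\sum_{e_j\text{ on the }u\text{--}v\text{ path}}w_j=d_T(u,v)$, so $\phi$ is an isometric embedding of the tree metric into $(\mathbb{R}^d,\|\cdot\|_1)$, and it is injective because all edge weights of $T$ are positive.

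First I would observe that a \GNCG instance is completely determined by its weighted host graph together with the parameter $\alpha$: the strategy spaces, the networks $G(\mathbf{s})$, and all agents' cost functions depend only on these data. Since the host graph in the \TNCG is the metric closure of $T$ and the host graph in the \RDNCG is the complete graph on the point set with edge weights given by $1$-norm distances, the isometry $\phi$ identifies the \TNCG instance on $T$ with the \RDNCG instance on $\{\phi(v)\}$ under the $1$-norm as literally the \emph{same} weighted host graph (up to relabeling) with the same $\alpha$. Hence the two games are isomorphic, so every strategy profile, every improving move, and in particular the length-$4$ best response cycle constructed in the proof of Theorem~\ref{thm_tree_no_FIP} carries over verbatim to this \RDNCG instance. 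A best response cycle certifies that the finite improvement property fails, which proves the theorem. If the statement is desired for every dimension above some threshold rather than merely for one, one pads all points with additional zero coordinates; this changes no $1$-norm distance and preserves distinctness.

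I do not expect a genuine obstacle here. The only points needing verification are routine: that the coordinate-wise construction yields an $\ell_1$-isometry (the telescoping identity above) and that the embedded points are pairwise distinct so the \RDNCG instance is well defined (immediate from positivity of the weights of $T$). The substantive work — exhibiting the four best responses that close the cycle — has already been done for tree metrics, and the $1$-norm embedding is precisely the bridge that transports it to points in $\mathbb{R}^d$.
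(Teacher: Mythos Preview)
Your reduction is correct: the coordinate embedding $\phi$ you describe is the standard isometric $\ell_1$-embedding of a tree metric (each tree edge induces a cut, and the cut semimetrics sum to the tree metric), and once the host graphs coincide as weighted complete graphs the two games are literally identical, so the best response cycle from Theorem~\ref{thm_tree_no_FIP} transfers verbatim.

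The paper, however, does not argue by reduction. It exhibits an explicit set of ten points in $\mathbb{R}^2$ under the $1$-norm and directly verifies a length-$4$ best response cycle on this instance. Your approach is more conceptual and reuses Theorem~\ref{thm_tree_no_FIP} without any new case-checking, but it lands in $\mathbb{R}^d$ with $d$ equal to the number of edges of the tree from that proof, and padding only increases the dimension further. The paper's explicit construction is sharper in that it already works in the plane, which is the most natural setting for the \RDNCG and gives a strictly stronger statement; your argument, on the other hand, makes transparent why the $1$-norm case inherits the negative result from tree metrics and would immediately generalize any future tree-metric cycle to an $\ell_1$ instance.
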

\begin{proof}[Proof(sketch)]
We prove the statement by providing best response cycle, shown in Fig.~\ref{fig:best_resp_points_in_the_plane}.
\begin{figure}[h!]
\center{\includegraphics[width=0.8\textwidth]{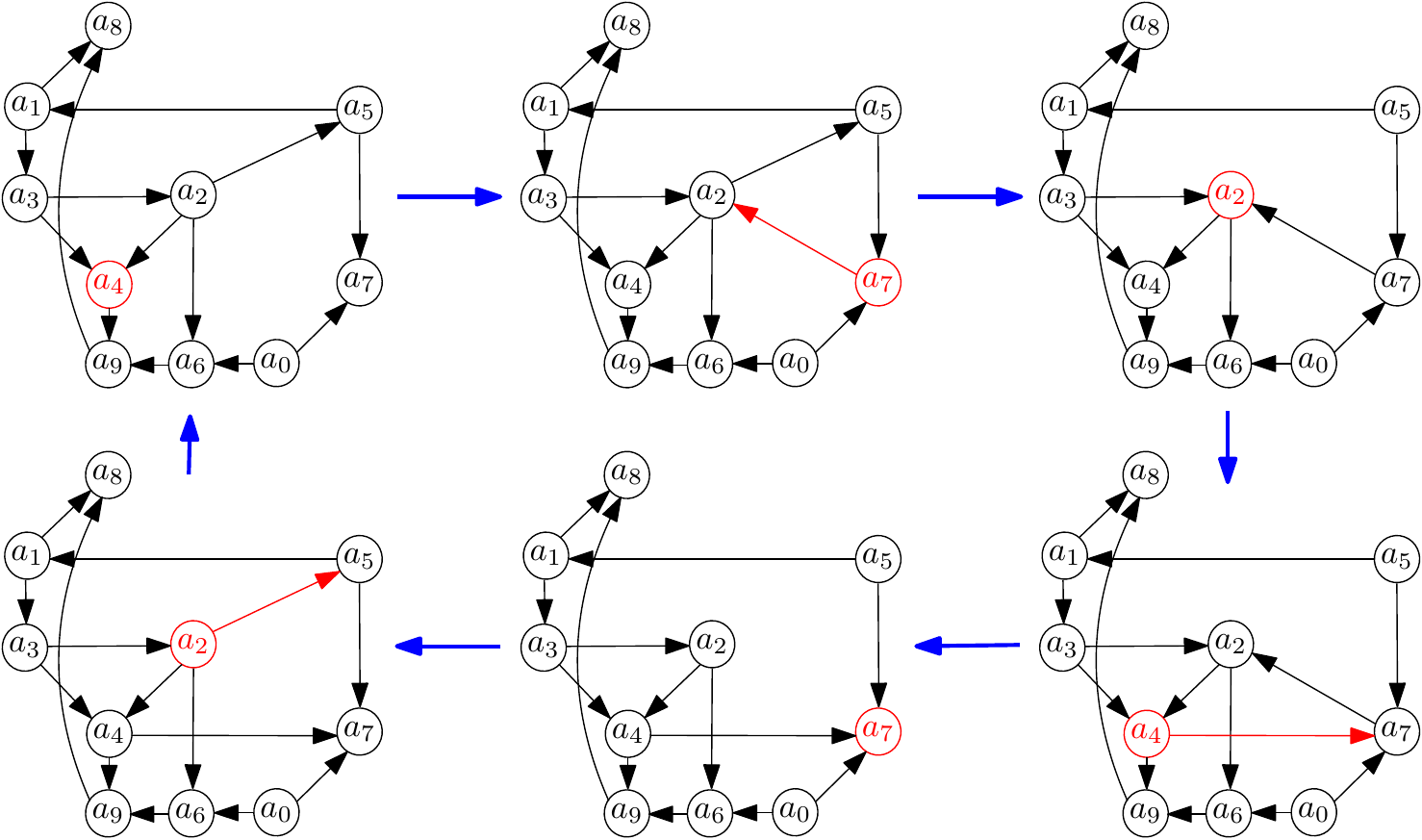}}
\caption{Best response cycle for the \RDNCG with 1-norm.}\label{fig:best_resp_points_in_the_plane}
\end{figure} 
We use the following node positions: $a_0 = (3, 0), a_1= (0,3), a_2= (2,2), a_3= (0,2), a_4= (1,1), a_5= (4,3), a_6= (2,0), a_7= (4,1), a_8= (1,4), a_9= (1,0)$. Distances are measured by the 1-norm.
\end{proof}
\noindent We are convinced that the above best response cycle can be adapted to arbitrary $p$-norms.
\begin{conjecture}
The \RDNCG with any p-norm does not have the finite improvemet property. 
\end{conjecture}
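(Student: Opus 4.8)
\medskip
\noindent\textbf{Proof idea.} The plan is to lift the explicit best response cycle of Theorem~\ref{th:points_in_the_plane_best_resp_cycle} for the \RDNCG from the $1$-norm to an arbitrary $p$-norm. The starting point is to observe that a best response cycle is a \emph{finite combinatorial certificate}: for each of the four steps one must guarantee (i) that the prescribed strategy change strictly decreases the active agent's cost and (ii) that the resulting strategy is an exact best response, i.e.\ that no other subset of incident edges does at least as well. In every intermediate network the shortest-path distances that occur are sums of a bounded number of edge weights $w_p(a_i,a_j)$, so both (i) and (ii) can be written as one fixed system of \emph{strict} inequalities among linear combinations, with bounded integer coefficients, of the numbers $\{w_p(a_i,a_j)\}_{i,j}$. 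Hence it suffices to exhibit, for every $p$, a configuration of points in some $\mathbf{R}^d$ -- with $d$ allowed to depend on $p$ -- whose pairwise $p$-norm distances satisfy this one fixed finite system.

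Near the two endpoints this is easy. For $p$ close to $1$: the configuration $a_0,\dots,a_9$ of Fig.~\ref{fig:best_resp_points_in_the_plane} satisfies every certifying inequality with strict slack, and on a fixed finite point set $p\mapsto w_p(\cdot,\cdot)$ is continuous at $p=1$; hence all strict inequalities persist for $p\in[1,1+\delta)$ and the very same cycle works. For $p$ close to $\infty$: in $\mathbf{R}^2$ the linear map $(x,y)\mapsto(x+y,\,x-y)$ turns the $1$-norm into the $\infty$-norm isometrically, so the images of the same ten points certify a best response cycle under $\ell_\infty$, and continuity of $p\mapsto w_p$ at $p=\infty$ extends this to all sufficiently large $p$.

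The genuinely hard part -- and the reason the statement is still only a conjecture -- is the intermediate range $p\in[1+\delta,P]$. There is no single limiting configuration to perturb, since as $p$ sweeps $(1,\infty)$ the pairwise distances move continuously and individual certifying inequalities can flip; moreover $\ell_1$ metrics need not embed isometrically into $\ell_p$ for $p>1$, so the $\ell_1$ certificate cannot simply be transported. I see two plausible routes. The first is to build a \emph{$p$-indexed family} of configurations $P(p)\subset\mathbf{R}^{d(p)}$ depending continuously on $p$: write the target metric as a nonnegative combination of cut metrics, realised by $0/1$ coordinates, and re-weight the coordinate groups so that their $\ell_p$ contribution approximates the desired $\ell_1$ values to within the available slack -- the freedom to enlarge $d$ is what makes this conceivable. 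The second is a compactness argument on the compact parameter set $[1,\infty]$: cover it by finitely many open pieces, on each of which one fixed configuration certifies the cycle, again using continuity of $p\mapsto w_p$ on a finite point set together with the open-ness of the certifying region. In both routes the crux is condition (ii): because the active agent may buy \emph{any} subset of incident edges, the certifying system, though finite, is delicate, and controlling all alternative moves simultaneously for a generic $p$ -- rather than only at $p=1$ and $p=\infty$ -- is where the difficulty concentrates.
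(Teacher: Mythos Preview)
The statement is a \emph{conjecture} in the paper; the authors give no proof and only record their belief that the best-response cycle of Theorem~\ref{th:points_in_the_plane_best_resp_cycle} can be adapted to other $p$-norms. There is thus no paper proof to compare your attempt against.

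Your write-up is explicitly a proof \emph{idea}, and you are candid that it does not close the gap: the intermediate range $p\in[1+\delta,P]$ is left open, and neither of the two routes you outline for it is carried out. The continuity arguments at $p=1$ and $p=\infty$ (the latter via the planar $\ell_1\!\leftrightarrow\!\ell_\infty$ isometry) are correct as far as they go, but they cover only two small neighborhoods. For a fixed $p$ such as $p=2$ you still have to produce an actual configuration and cycle, and that is precisely the open content of the conjecture; your compactness suggestion presupposes this pointwise existence before it can glue anything together.

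One genuine simplification you are missing: to violate the FIP it suffices to exhibit a cycle of \emph{improving} moves; best-response-ness is not required. Your condition~(ii)---that the chosen move beats every alternative subset of incident edges---is therefore unnecessary, and the certificate reduces to the much smaller finite system given by condition~(i) alone. This removes what you call the ``delicate'' part and substantially shrinks the set of strict inequalities that must be preserved under variation of $p$, though it still does not, by itself, settle the intermediate range.
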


\paragraph{Price of Anarchy}
From Theorem~\ref{thm:metric_PoA} it follows that in the \RDNCG the \PoA is at most $\frac{\alpha+2}{2}$. It turns out that settling the PoA for the \RDNCG is a challenging problem. We prove some first steps in this direction and show that the PoA approaches the upper bound for the 1-norm if the number of dimensions grows. 

We start with a lower bound which is strictly larger than 1 for the \PoA in case of an arbitrary $p$-norm and independent of number of nodes $n$ and dimension $d$.

\begin{lemma} \label{lem:LB_PoA_weighted_path}
The PoA in the \RDNCG is strictly larger than 1.
\end{lemma}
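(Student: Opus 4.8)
The plan is to build, for every $\alpha>0$, a tiny ``weighted path'' instance on three vertices together with a non-optimal Nash equilibrium, chosen so that the equilibrium is worse than the optimum by a factor bounded away from $1$. I would pick parameters $t>0$ and $s=\alpha t/2$ and place three points $v,u,v_1$ on a single coordinate axis of $\mathbf{R}^d$ at positions $0$, $s$, and $s+t$ (all other coordinates zero). Under any $p$-norm this yields $w(v,u)=s$, $w(u,v_1)=t$, and $w(v,v_1)=s+t$, so the instance — and hence the bound we obtain — is the same for every dimension $d\ge 1$, every norm $p\ge 1$, and does not grow with $n$.

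Let $O$ be the ``straight'' path with edges $\{(u,v),(u,v_1)\}$; its social cost is $cost(O)=\alpha(s+t)+4(s+t)=(\alpha+4)(s+t)$, and by Corollary~\ref{thm:T_is_OPT} it is in fact a social optimum (for the lower bound one only needs $cost(OPT)\le cost(O)$, which is immediate since $O$ is a feasible network). As the candidate equilibrium I would take the \emph{other} spanning path $G$, with edges $\{(v,u),(v,v_1)\}$ and with vertex $v$ owning both of them. Its cost is $cost(G)=\alpha(2s+t)+8s+4t$, and fixing $t=1$ one checks that $\frac{cost(G)}{cost(O)}=\frac{2s+1}{s+1}$, which is $>1$ for every $s>0$; taking $s$ as large as the equilibrium condition permits, namely $s=\alpha/2$, gives $\frac{cost(G)}{cost(O)}=\frac{2(\alpha+1)}{\alpha+2}=2-\frac{2}{\alpha+2}>1$.

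What remains — and the one step that needs care, even though the strategy space has only three players — is to verify that $G$ is a NE. Agent $v$ cannot deviate profitably: dropping either of its edges disconnects a vertex (infinite cost), and there is nothing else to buy. Agents $u$ and $v_1$ own nothing, and re-buying an already present edge or buying two edges is strictly wasteful, so their only relevant deviation is to purchase the direct edge $(u,v_1)$ of weight $t$; for each of them this costs $\alpha t$ and shortens exactly one distance, $d_G(u,v_1)=2s+t$, down to $t$, a gain of $2s$, and the choice $s=\alpha t/2$ makes cost and gain equal, so the move is cost-neutral and hence not improving. The main obstacle is therefore precisely this calibration — tuning the length ratio to $s=\alpha t/2$ and, crucially, assigning ownership of \emph{both} equilibrium edges to $v$ (if $u$ owned $(u,v)$, then for $\alpha>2$ the swap $(u,v)\mapsto(u,v_1)$ would become improving) — after which $cost(G)>cost(O)$ is read off the cost formulas. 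If a strictly deviation-free equilibrium is preferred, one can instead take $s$ slightly below $\alpha/2$, which still gives a ratio exceeding $1$.
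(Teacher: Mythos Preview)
Your proof is correct, and it is essentially the minimal instance of the paper's own construction: the paper places $n+1$ collinear points with weights $w(v_0,v_1)=1$ and $w(v_{i-1},v_i)=\tfrac{2}{\alpha}(1+\tfrac{2}{\alpha})^{i-2}$, takes the path $P_{n+1}$ as the optimum and the star $S_{n+1}$ centered at the endpoint $v_0$ (owning all edges) as the NE, and bounds the ratio via geometric-series estimates. Your three-point instance is exactly the case $n=2$ of that family (up to a global scaling by $\tfrac{2}{\alpha}$), and your ratio $\tfrac{2(\alpha+1)}{\alpha+2}$ is precisely $cost(S_3)/cost(P_3)$ computed without any slack.

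The difference is only in presentation: the paper runs the argument for general $n$ and then lower-bounds the resulting expression, which incidentally gives only $\geq 1$ at $n=2$ because of the estimates used, whereas you compute the $n=2$ ratio exactly and read off strict inequality directly. Your careful treatment of ownership (putting both edges on $v$ so that $u$ cannot swap) and the option of taking $s$ slightly below $\alpha/2$ to make the equilibrium strict are nice touches that the paper leaves implicit. For the lemma as stated, your shortcut is cleaner; the paper's general-$n$ construction is there mainly as scaffolding for the subsequent Theorem~\ref{thm:LB_PoA_Rd_for_arb_p_norm}, which reuses it on four nodes.
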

\begin{proof}
To prove the claim we show a NE graph and the corresponding optimal graph in the 1-dimensional space, i.e., path metric. It is clear that the obtained result holds for arbitrary $d\geq 1$ and independent on the $p$-norm.   

Consider a path graph $P_{n+1}$ on $n+1$ nodes $\{v_0,v_1,\ldots,v_n\}$. We arrange lengths of the edges as follows: $w(v_0,v_1)=1$, and $\forall i\in\{2,\ldots,n\},\  w(v_{i-1},v_i)=\frac{2}{\alpha}\cdot\left(1+\frac{2}{\alpha}\right)^{i-2}$. Then the host graph $H$ is a complete graph containing $P_{n+1}$ such that for any pair of nodes $u, v$  $w(u,v)=d_{P_{n+1}}(u,v)$. Clearly, $P$ corresponds to a social optimum.

\begin{figure}[h!]
\center{\includegraphics[width=0.7\textwidth]{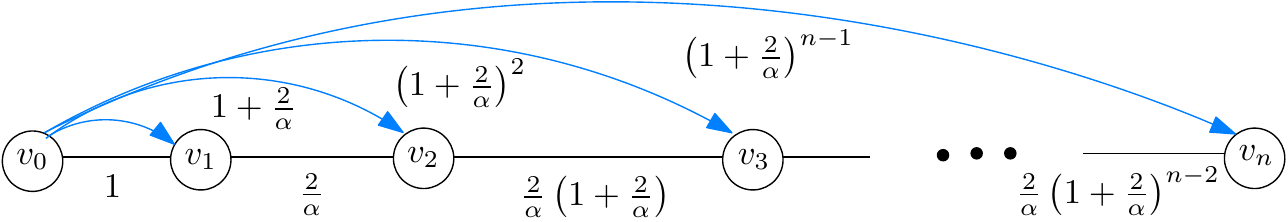}}
\caption{Lower bound graph. The black graph corresponds to the social optimum $P_{n+1}$, the blue graph $S_{n+1}$ is in NE.}\label{fig:LB_PoA_general_alpha_path}
\end{figure}

Consider a star graph $S:= S_{n+1}$ on the same set of nodes with the following edge weights $w(v_{0},v_{i})=d_{P}(v_{0},v_i)=\sum\limits_{j=1}^{i}{w(v_{j-1},v_j)} =1+\frac{2}{\alpha}\cdot \sum\limits_{j=2}^{i}{\left(1+\frac{2}{\alpha}\right)^{j-2}}=1+\frac{2}{\alpha}\cdot \left(\left(1+\frac{2}{\alpha}\right)^{i-1}-1\right)\big/\frac{2}{\alpha}= \left(1+\frac{2}{\alpha}\right)^{i-1}$. See figure~\ref{fig:LB_PoA_general_alpha_path} for the construction.

First, we show that the star graph $S_n$ is in NE. Since no deletions or swaps are possible, we need to prove that no addition of any new edge is profitable for its owner. Indeed, consider a player $v_i$, which tends to buy an edge $(v_i,v_j)$ such that $j\leq i-1$. This move decreases distance cost of the player by $w(v_i,v_0)+w(v_0,v_j)-w(v_i,v_j)=2\,w(v_0,v_j)=2\left(1+\frac{2}{\alpha}\right)^{j-1}$. At the same time, the edge cost increases by $\alpha\cdot w(v_i,v_j) = \alpha\cdot \sum\limits_{k=j+1}^i{w(v_{k-1},v_k)}=\alpha\cdot \frac{2}{\alpha}\left(1+\frac{2}{\alpha}\right)^{j-1}\cdot\left(\left(\frac{2}{\alpha}+1\right)^{i-j}-1\right)\big/\frac{2}{\alpha}=\alpha\cdot \left(1+\frac{2}{\alpha}\right)^{j-1}\left(\left(\frac{2}{\alpha}+1\right)^{i-j}-1\right)\geq \alpha\cdot \left(1+\frac{2}{\alpha}\right)^{j-1}\cdot\left(\frac{2}{\alpha}+1-1\right)\geq  \alpha\cdot \left(1+\frac{2}{\alpha}\right)^{j-1}\cdot \frac{2}{\alpha}= 2\cdot(1+\frac{2}{\alpha})^{j-1}$. Thus, this move is not an an improvement for the player $v_i$. If $j>i$, distance cost is $2\left(1+\frac{2}{\alpha}\right)^{i-1}$, whereas the edge cost is $\alpha\cdot w(v_i,v_j) = \alpha\cdot \sum\limits_{k=i+1}^{j}{w(v_{k-1},v_k)}=\alpha\cdot \left(1+\frac{2}{\alpha}\right)^{i-1}\left(\left(\frac{2}{\alpha}+1\right)^{j-i}-1\right)\geq \alpha\cdot \left(1+\frac{2}{\alpha}\right)^{i-1}\left(\frac{2}{\alpha}+1-1\right)\geq  \alpha\cdot \left(1+\frac{2}{\alpha}\right)^{i-1}\cdot \frac{2}{\alpha}= 2\cdot(1+\frac{2}{\alpha})^{i-1}$. Thus, the star graph is in NE. The social cost of $S_n$ is
$cost(S_{n+1}) = (2n+\alpha)\cdot \sum\limits_{(u,v)\in E(S_n)}{w(u,v)}=(2n+\alpha)\sum\limits_{i=1}^n{\left(1+\frac{2}{\alpha}\right)^{i-1}}=(2n+\alpha)\cdot\frac{\alpha}{2}\left(\left(1+\frac{2}{\alpha}\right)^{n}-1\right).$

The social optimum is a path graph $P_{n+1}$. An edge cost is $\alpha\cdot w(v_0,v_n)=\alpha\left(1+\frac{2}{\alpha}\right)^{n-1}$. To calculate the distance cost we count for each edge how many shortest paths it participates, i.e., its betweenness centrality.
\begin{align*}
d(P_n)&=2\sum\limits_{i=1}^n{w(v_{i-1},v_i)i(n-i+1)} =2\sum\limits_{k=0}^{\lfloor n/2\rfloor}{(n-2k)\sum\limits_{i=k+1}^{n-k}{w(v_{i-1},v_i)}}\\
&=2n\sum\limits_{i=1}^{n}{w(v_{i-1},v_i)} + 2\sum\limits_{k=1}^{\lfloor n/2\rfloor}{(n-2k)\sum\limits_{i=k+1}^{n-k}{w(v_{i-1},v_i)}}\\
&=2n\cdot \left(1+\frac{2}{\alpha}\right)^{n-1} + \frac{4}{\alpha}\sum\limits_{k=1}^{\lfloor n/2\rfloor}{(n-2k)\sum\limits_{i=k+1}^{n-k}{\left(1+\frac{2}{\alpha}\right)^{i-2}}}\\
&=2n\cdot \left(1+\frac{2}{\alpha}\right)^{n-1} +2\sum\limits_{k=1}^{\lfloor n/2\rfloor}{(n-2k)\left(1+\frac{2}{\alpha}\right)^{k-1}\left(\left(1+\frac{2}{\alpha}\right)^{n-2k}-1\right)}\\
&\leq 2n\cdot \left(1+\frac{2}{\alpha}\right)^{n-1} + 2n\sum\limits_{k=1}^{\lfloor n/2\rfloor}{\left(1+\frac{2}{\alpha}\right)^{n-k-1}}\\
&=2n\cdot \left(1+\frac{2}{\alpha}\right)^{n-1} + (\alpha + 2)\cdot n\cdot{\left(1+\frac{2}{\alpha}\right)^{n-2}\left(1-\left(1+\frac{2}{\alpha}\right)^{-\lfloor n/2\rfloor}\right)}\\
& < 2n\cdot \left(1+\frac{2}{\alpha}\right)^{n-1} + (\alpha + 2)\cdot n\cdot \left(1+\frac{2}{\alpha}\right)^{n-2}\cdot \left(1-\left(1+\frac{2}{\alpha}\right)^{-1} \right)\\
& \leq 2n\cdot \left(1+\frac{2}{\alpha}\right)^{n-1} + 2n\cdot \left(1+\frac{2}{\alpha}\right)^{n-2}.
\end{align*}

Thus, an upper bound of the social cost of the social optimum is $$cost(P_n) < (2n+\alpha)\cdot \left(1+\frac{2}{\alpha}\right)^{n-1}\cdot\left(1 + \left(1+ \frac{2}{\alpha}\right)^{-1} \right).$$ Therefore, for sufficiently large $n$ the \PoA is at least:
\begin{align*}
\frac{cost(S_n)}{cost(P_n)} &>\frac{\frac{\alpha}{2}(2n+\alpha)\left(\left(1+\frac{2}{\alpha}\right)^n-1\right)}{(2n+\alpha)\cdot \left(1+\frac{2}{\alpha}\right)^{n-1}\cdot\left(1 + \left(1+ \frac{2}{\alpha}\right)^{-1} \right)}\\
&= \frac{\frac{\alpha}{2}\left(1+\frac{2}{\alpha}-1\right)\sum\limits_{i=0}^{n-1}\left(1+\frac{2}{\alpha}\right)^i}{\left(1+\frac{2}{\alpha}\right)^{n-1}\cdot\left(1 + \left(1+ \frac{2}{\alpha}\right)^{-1} \right)}\\
&= \frac{\sum\limits_{i=0}^{n-1}\left(1+\frac{2}{\alpha}\right)^i}{\left(1+\frac{2}{\alpha}\right)^{n-1} + \left(1+\frac{2}{\alpha}\right)^{n-2}}\geq 1 \text{ for } n\geq 2.
\end{align*}

Hence, $PoA > 1$.
\end{proof}
\noindent For a small number of players the lower bound for the \PoA can be improved, which is shown in the next lemma.

\begin{theorem}\label{thm:LB_PoA_Rd_for_arb_p_norm}
In the \RDNCG under any $p$-norm with $p\geq 1$ the PoA is at least $$\cfrac{3\alpha^3 + 24\alpha^2 + 40\alpha + 24}{\alpha^3 + 10\alpha^2 + 32\alpha + 24}\,.$$
\end{theorem}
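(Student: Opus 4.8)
The plan is to exhibit, for every $\alpha>0$, a tiny instance consisting of $4$ collinear points in $\mathbf{R}^d$, together with one explicit Nash equilibrium whose cost divided by the cost of the social optimum equals the claimed expression. The decisive idea is collinearity: the $p$-norm distance between two points on a line equals the difference of their coordinates for every $p\geq 1$, so the whole analysis reduces to a path-metric computation and automatically covers all $p$-norms and all dimensions $d\geq 1$. Concretely I would place $v_0$ at coordinate $0$ and $v_i$ at coordinate $\left(1+\tfrac2\alpha\right)^{i-1}$ for $i\in\{1,2,3\}$, so that $w(v_{k-1},v_k)=\tfrac2\alpha\left(1+\tfrac2\alpha\right)^{k-2}$ for $k\in\{2,3\}$ and $w(v_0,v_1)=1$; this is exactly the $n=3$ case of the family used in Lemma~\ref{lem:LB_PoA_weighted_path}. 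The candidate NE is the spanning star $S_4$ centered at $v_0$ (with $v_0$ owning all three edges), and the reference network is the path $P_4=v_0v_1v_2v_3$, which is feasible, so $cost(OPT)\le cost(P_4)$.

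The first step is to verify that $S_4$ is a NE. The center $v_0$ has no improving move: deleting any of its edges disconnects a leaf, and since $v_0$ already owns every incident edge it can neither add nor swap. A leaf $v_i$ owns nothing, so its only option is to add an edge; an edge towards $v_0$ merely duplicates an existing edge and costs extra, and for an edge $(v_i,v_j)$ towards another leaf the argument of the proof of Lemma~\ref{lem:LB_PoA_weighted_path}, specialized to $n=3$, applies verbatim: the added edge cost $\alpha\cdot w(v_i,v_j)$ already dominates the distance-cost saving $2\,w(v_0,v_{\min(i,j)})$, because the geometric growth of the edge weights gives $\alpha\cdot w(v_i,v_j)\ge \alpha\cdot \tfrac2\alpha\,w(v_0,v_{\min(i,j)})=2\,w(v_0,v_{\min(i,j)})$. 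Hence $S_4$ is a NE in the \RDNCG instance for every $\alpha>0$ and every $p\geq 1$.

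The second step is the cost bookkeeping, which has been arranged so that numerator and denominator come out exactly as stated. Using the star identity $cost(S_{n+1})=(2n+\alpha)\sum_{e\in E(S)}w(e)$ from Lemma~\ref{lem:LB_PoA_weighted_path} with $n=3$ and $\sum_e w(e)=1+\left(1+\tfrac2\alpha\right)+\left(1+\tfrac2\alpha\right)^2=\tfrac{3\alpha^2+6\alpha+4}{\alpha^2}$, one gets
\[
cost(S_4)=(\alpha+6)\cdot\frac{3\alpha^2+6\alpha+4}{\alpha^2}=\frac{3\alpha^3+24\alpha^2+40\alpha+24}{\alpha^2}.
\]
For $P_4$, the edge cost is $\alpha(w(v_0,v_1)+w(v_1,v_2)+w(v_2,v_3))=\alpha+4+\tfrac4\alpha$, and the distance cost, obtained by weighting each edge $(v_{i-1},v_i)$ by its betweenness $i(4-i)$, equals $2\big(3w(v_0,v_1)+4w(v_1,v_2)+3w(v_2,v_3)\big)=6+\tfrac{28}\alpha+\tfrac{24}{\alpha^2}$, so
\[
cost(P_4)=\frac{\alpha^3+10\alpha^2+32\alpha+24}{\alpha^2}.
\]
Therefore $PoA\ge cost(S_4)/cost(OPT)\ge cost(S_4)/cost(P_4)=\frac{3\alpha^3+24\alpha^2+40\alpha+24}{\alpha^3+10\alpha^2+32\alpha+24}$, as claimed.

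The only genuinely delicate point is the NE verification, i.e.\ being sure that for \emph{every} $\alpha>0$ no single add/delete/swap helps any of the four agents; but since the points are collinear this is precisely the check already performed inside the proof of Lemma~\ref{lem:LB_PoA_weighted_path}, so no new difficulty arises. Everything else is the routine evaluation of two small cost expressions, together with the trivial observation that $P_4$ is a feasible (in fact optimal) network, which suffices to bound $cost(OPT)$ from above.
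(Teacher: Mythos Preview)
Your proposal is correct and follows essentially the same approach as the paper: both take the $4$-node specialization of the collinear construction from Lemma~\ref{lem:LB_PoA_weighted_path}, invoke that lemma's NE verification for the star $S_4$, and compute the exact ratio $cost(S_4)/cost(P_4)$. Your write-up is in fact more explicit than the paper's on two points that the paper leaves implicit---namely, why collinearity makes the bound independent of the $p$-norm, and why $cost(P_4)$ is a valid upper bound on $cost(OPT)$---so nothing is missing.
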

\begin{proof}
We prove the claim for the 2-dimensional case, then the result immediately follows for higher dimensions.

Consider the same construction as in the proof of the lemma~\ref{lem:LB_PoA_weighted_path} restricted on 4 nodes $v_0,v_1,v_2$, $v_3$. The ratio between the cost of the star graph, which is in NE, and the social optimum is $$\frac{cost(S_4)}{cost(P_4)}=\frac{(6+\alpha)\cdot \frac{\alpha}{2}\cdot\left(\left(1+\frac{2}{\alpha}\right)^3-1\right)}{(6+\alpha)\left(1+\frac{2}{\alpha}\right)^{2} + \frac{4}{\alpha}} = \frac{3\alpha^3 + 24\alpha^2 + 40\alpha + 24}{\alpha^3 + 10\alpha^2 + 32\alpha + 24}$$
\end{proof}
\noindent In contrast to other $p$-norms, where $p\geq 2$, the 1-norm allows us to embed a reduced version of our lower bound construction from the \TNCG. With increasing number of dimensions we can embed more and more of our construction. The following lemma shows that for arbitrary large $d$ the lower bound of the \PoA approaches the upper bound of $\frac{\alpha+2}{2}$.
    
\begin{theorem}\label{thm:LB_PoA_Rd_for_1_norm}
In a 1-norm $d$-dimensional space the PoA is at least $1+\cfrac{\alpha}{2+\alpha/(2d-1)}$. 
\end{theorem}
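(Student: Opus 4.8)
The plan is to embed a truncated copy of the lower-bound instance from the proof of Theorem~\ref{thm:LB_PoA_tree_metric} into $\mathbf{R}^d$ with the $1$-norm. Recall that there the metric was the metric closure of a weighted star $S_n^*$ with centre $u$, one \emph{special} leaf $v$ attached by an edge of weight $1$, and $n-2$ further leaves attached by edges of weight $2/\alpha$, and that the spanning star $S_n$ centred at $v$ (with $v$ owning every edge) was shown to be in NE while $S_n^*$ is the social optimum. I would realise exactly this metric on $n:=2d+1$ points: put $u$ at the origin, put $v$ at $e_1$, and put the remaining $m:=2d-1$ leaves at the points $-\tfrac{2}{\alpha}e_1$ and $\pm\tfrac{2}{\alpha}e_j$ for $j=2,\dots,d$, where $e_1,\dots,e_d$ denotes the standard basis. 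Each coordinate direction thus hosts two ``ordinary'' leaves (except direction $1$, which also carries $v$), which is why the construction saturates at $2d+1$ points.

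The key step is to verify that the $1$-norm distances among these points coincide exactly with the distances in the metric closure of $S_n^*$. This rests on the elementary fact that two points lying on distinct coordinate axes, or on the same axis but on opposite sides of the origin, have $1$-norm distance equal to the sum of their $1$-norms. A short case check then gives $w_1(u,v)=1$, $w_1(u,v_i)=2/\alpha$ for every leaf $v_i$, $w_1(v,v_i)=1+2/\alpha$, and $w_1(v_i,v_k)=4/\alpha$ for any two distinct leaves $v_i,v_k$ — precisely the tree-metric values (one also notes that $-\tfrac{2}{\alpha}e_1\ne e_1$, so no two nodes collapse). Consequently, on this point set the \RDNCG host graph is \emph{literally identical} to a \TNCG host graph, so by the proof of Theorem~\ref{thm:LB_PoA_tree_metric} the spanning star $S_n$ centred at $v$ is in NE, and the social optimum costs at most $cost(S_n^*)$ (in fact it equals it, by Corollary~\ref{thm:T_is_OPT}, but only the inequality is needed to lower-bound the \PoA).

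It then remains to substitute $n-2=m=2d-1$ into the closed forms computed in the proof of Theorem~\ref{thm:LB_PoA_tree_metric}, namely $cost(S_n)=(2n+\alpha-2)\bigl((n-2)(1+2/\alpha)+1\bigr)$ and $cost(S_n^*)=(2n+\alpha-2)\bigl((n-2)\cdot 2/\alpha+1\bigr)$. The common factor $(2n+\alpha-2)$ cancels, and the ratio simplifies to $1+\dfrac{m}{1+2m/\alpha}=1+\dfrac{\alpha}{2+\alpha/m}=1+\dfrac{\alpha}{2+\alpha/(2d-1)}$, which is the claimed lower bound on the \PoA.

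The only genuinely delicate point is the distance verification of the second paragraph: one must make sure the special leaf $v$ (which sits at distance $1$ rather than $2/\alpha$ from the origin) does not spoil any pairwise distance and that no two leaves coincide. Because all cross-axis and opposite-sign distances in $\ell_1$ are additive, this reduces to a routine enumeration. Everything else is a direct reuse of the \TNCG lower bound, and the bound is exact (no $\varepsilon$ is lost), approaching $(\alpha+2)/2$ precisely as $d\to\infty$.
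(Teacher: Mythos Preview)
Your proposal is correct and follows essentially the same approach as the paper: the same $2d+1$ point configuration (origin, $e_1$, $-\tfrac{2}{\alpha}e_1$, and $\pm\tfrac{2}{\alpha}e_j$ for $j\ge 2$), the same identification with the tree-metric instance of Theorem~\ref{thm:LB_PoA_tree_metric}, and the same ratio computation. If anything, your write-up is more careful than the paper's in explicitly verifying the pairwise $\ell_1$ distances and noting why no leaf may be placed at $+\tfrac{2}{\alpha}e_1$.
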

\begin{proof}
Consider a set of $n=2d+1$ points $v_0 = (0,\ldots,0), v_1 = (1,0,\ldots,0)$, $v_2 = (-\frac{2}{\alpha},0\ldots,0)$, $v_i = (\underbrace{0,\ldots,0}_{i-1},\frac{2}{\alpha},0,\ldots,0), v_{i+d-1} = (\underbrace{0,\ldots,0}_{i-1},-\frac{2}{\alpha},0,\ldots,0)$ for $i\in \{2,\ldots,d\}$. 

\begin{figure}[h]
\center{\includegraphics[width=0.7\textwidth]{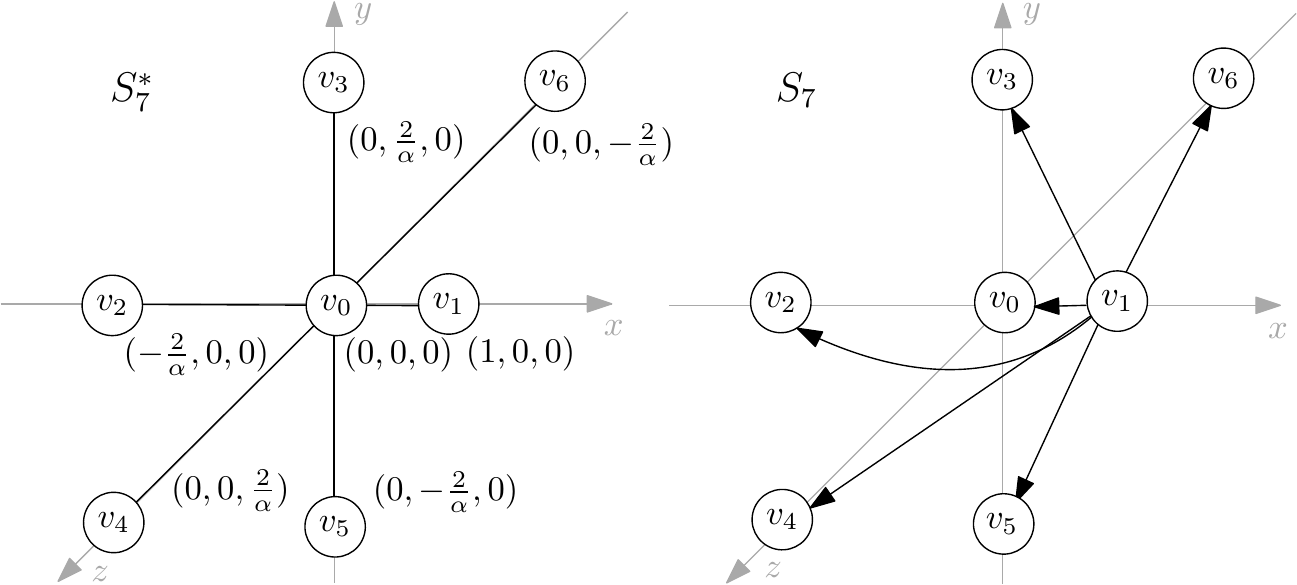}}
\caption{Construction of the optimal graph $S^*_7$ and the stable graph $S_7$ in 3-dimensional 1-norm space.}\label{fig:LB_PoA_3D}
\end{figure}

\noindent On the one hand it is easy to see that the star graph $S^*_n$ with a center in $v_0$ is an optimal network. On the other hand, the star graph $S_n$ with its center in $v_1$ such that $v_1$ is the owner of all edges in the star, is in NE. Indeed, since the distances are measured via the 1-norm, the construction is exactly the same as in the proof of Theorem~\ref{thm:LB_PoA_tree_metric}. See Fig.~\ref{fig:LB_PoA_3D} for the construction in $\mathbf{R}^3$. Thus, the \PoA is bounded by:
\begin{align*}
\frac{cost(S_n)}{cost(S_n^*)} &= \frac{\left(2\cdot(2d+1)+\alpha - 2\right)\left(\frac{\alpha+2}{\alpha}\cdot (2d-1) + 1\right)}{\left(2\cdot(2d+1)+\alpha - 2\right)\left(\frac{2}{\alpha}\cdot (2d-1) + 1\right)}= 1+\cfrac{\alpha}{2+\cfrac{\alpha}{2d-1}}.
\end{align*}
\end{proof}

\newpage
\section{General Weighted Host Graphs}

\begin{theorem}\label{thm_general_PoA}
The PoA for general instances is at most $\left(\frac{\alpha+2}{2}\right)^2$.
\end{theorem}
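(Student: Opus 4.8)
The plan is to reduce the general, non-metric case to the metric case of Theorem~\ref{thm:metric_PoA}, paying one factor of $\tfrac{\alpha+2}{2}$ for the reduction itself and a second one for comparing the two social optima. Fix a NE $G$ for a host graph $H$ and write $OPT=OPT(H)$. First I would record a sharpening of Lemma~\ref{lem:NE_spanner} for the edges $G$ actually contains: if $(u,v)\in E(G)$, then the single edge $(u,v)$ must be a shortest $u$--$v$ path in $G$. Indeed, if $d_G(u,v)<w(u,v)$, the realizing path avoids the edge $(u,v)$; replacing the edge $(u,v)$ by this cheaper sub-path inside any shortest path that uses it would yield a still shorter walk, so in fact no shortest path of $G$ uses $(u,v)$, and its owner could delete it without increasing any pairwise distance, strictly saving the positive price $\alpha\,w(u,v)$ — contradicting equilibrium. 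Hence $d_G(u,v)=w(u,v)$ on $E(G)$, and together with Lemma~\ref{lem:NE_spanner} this gives $w(u,v)\le(\alpha+1)\,d_H(u,v)$ for every edge of $G$.

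Next I would pass to the metric closure $\bar G$ of $G$: the complete host graph on $V$ with weights $\bar w(u,v):=d_G(u,v)$, which satisfies the triangle inequality. Reinterpreting the edge set of $G$ as a network inside the host graph $\bar G$ changes nothing, since on $E(G)$ the new weight $\bar w(u,v)=d_G(u,v)=w(u,v)$ equals the old one, so the realized distances and the social cost are unchanged: $cost_{\bar G}(G)=cost_H(G)$. Moreover $G$ is an Add-only Equilibrium in $\bar G$: buying a new edge $(u,v)$ there costs $\alpha\,d_G(u,v)>0$ while offering distance only $d_G(u,v)$ to $v$ — exactly what the buyer already has — and, by the triangle inequality in $\bar G$, no shortcut to any third vertex. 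Since the proof of Theorem~\ref{thm:metric_PoA} uses only edge-addition stability together with Lemma~\ref{lem:NE_spanner}, it applies verbatim to Add-only Equilibria on metric host graphs; applying it in $\bar G$ yields $cost_H(G)=cost_{\bar G}(G)\le\tfrac{\alpha+2}{2}\,cost_{\bar G}\!\big(OPT(\bar G)\big)$.

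It then remains to show $cost_{\bar G}\!\big(OPT(\bar G)\big)\le\tfrac{\alpha+2}{2}\,cost_H(OPT)$. The route I would take is to reinterpret the network $OPT$ inside $\bar G$ (each of its edges now costing $\alpha\,d_G(u,v)$ instead of $\alpha\,w(u,v)$) and bound its $\bar G$-cost from above: any path of $OPT$ has $\bar G$-length at most $(\alpha+1)$ times its $H$-length by Lemma~\ref{lem:NE_spanner}, so its $\bar G$-distance cost is controlled; and its $\bar G$-edge cost is controlled by combining $d_G(u,v)\le(\alpha+1)d_H(u,v)$, the identity $w(u,v)=d_{OPT}(u,v)$ on edges of $OPT$, the spanner bound $d_{OPT}(u,v)\le(\tfrac{\alpha}{2}+1)d_H(u,v)$ of Lemma~\ref{lemma_OPT_spanner}, and the elementary fact that the total $d_H$-weight of the edges of any network is at most the distance cost of $OPT$. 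Chaining the two factors $\tfrac{\alpha+2}{2}$ then gives $cost_H(G)\le\big(\tfrac{\alpha+2}{2}\big)^2 cost_H(OPT)$.

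The hard part will be exactly this last estimate. A crude reinterpretation of $OPT$ inside $\bar G$ only pays the equilibrium spanner ratio $\alpha+1$ per edge, which would give the weaker bound $(\alpha+1)\tfrac{\alpha+2}{2}$. To reach $\tfrac{\alpha+2}{2}$ one has to charge the price that $OPT$ pays for an edge in $\bar G$ not against $d_G$ directly but against $d_{OPT}$ — which by Lemma~\ref{lemma_OPT_spanner} stays within $\tfrac{\alpha+2}{2}$ of the corresponding host distance — so that the \emph{optimum} spanner factor $\tfrac{\alpha+2}{2}$ replaces the \emph{equilibrium} spanner factor $\alpha+1$ everywhere in the accounting; setting this bookkeeping up so it is quantitatively tight is the delicate step, while the metric-closure reduction above is the structural heart of the argument.
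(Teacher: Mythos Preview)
Your metric-closure reduction is elegant and the first half is solid: you correctly argue that every edge of a NE $G$ satisfies $d_G(u,v)=w(u,v)$, that $G$ is an Add-only Equilibrium in its own metric closure $\bar G$, and that the proof of Theorem~\ref{thm:metric_PoA} uses only add-stability, so $cost_H(G)=cost_{\bar G}(G)\le\tfrac{\alpha+2}{2}\,cost_{\bar G}(OPT(\bar G))$.

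The gap is in the second step. Your bookkeeping for bounding $cost_{\bar G}(OPT(\bar G))$ by $\tfrac{\alpha+2}{2}\,cost_H(OPT)$ does not close. The only control you have on the weights of $\bar G$ is $d_G(u,v)\le(\alpha+1)d_H(u,v)$ from Lemma~\ref{lem:NE_spanner}, and this factor is tight for Add-only Equilibria. Hence reinterpreting $OPT(H)$ inside $\bar G$ blows up every edge weight, and therefore both the edge cost and every path length, by at most $\alpha+1$ --- and this is the best you can extract. The facts you list (the identity $w(e)=d_{OPT}(e)$ on edges of $OPT$, the $\tfrac{\alpha+2}{2}$-spanner property of $OPT$ from Lemma~\ref{lemma_OPT_spanner}, and the ``elementary fact'' about total $d_H$-weight) all concern $w$ and $d_{OPT}$, whereas what you must bound is $d_G$; combining them never replaces $(\alpha+1)$ by $\tfrac{\alpha+2}{2}$. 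So your scheme yields $(\alpha+1)\cdot\tfrac{\alpha+2}{2}$, which is strictly larger than $\big(\tfrac{\alpha+2}{2}\big)^2$ for every $\alpha>0$.

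The paper does not pass to a metric closure at all. It keeps the pairwise ratio $\sigma$ from Theorem~\ref{thm:metric_PoA} and observes that the only genuinely new case in the non-metric setting is $x=1$, $x^*=0$ with $w(u,v)>d_H(u,v)$. The crucial lemma there is that every edge $(u,v)\in E(G)$ satisfies $w(u,v)\le\tfrac{\alpha+2}{2}\,d_H(u,v)$; together with $d_G(u,v)=w(u,v)$ and $d_{OPT}(u,v)\ge d_H(u,v)$ this gives $\sigma\le\big(\tfrac{\alpha+2}{2}\big)^2$ in that case. The proof of this lemma walks along a shortest $u$--$v$ path in $H$ and, for the first missing edge, uses that the owner of $(u,v)$ has no incentive to \emph{swap} $(u,v)$ for the edge $(u,y_1)$. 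This swap argument is a genuine Nash-equilibrium property that add-stability alone cannot provide --- and in your metric closure $\bar G$ swaps become vacuous anyway, since every edge already has weight equal to the current distance between its endpoints. That is precisely the missing idea in your plan.
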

\begin{proof}
Let $G$ be a NE and let $u$ and $v$ be two distinct vertices. Let $x$ and $x^*$ be two Boolean variables such that $x=1$ if and only if $(u,v)$ is an edge of $G$ and $x^*=1$ if and only if $(u,v)$ is an edge of the social optimum $\OPT$. We prove the claim by showing that 
\[
\sigma:=\frac{\alpha \cdot w(u,v)x+2d_G(u,v)}{\alpha \cdot w(u,v)x^*+2d_{\OPT}(u,v)}\leq \left(\frac{\alpha+2}{2}\right)^2.
\]
The only case we have to prove is when $w(u,v) > d_H(u,v)$, $x=1$, and $x^*=0$  as the proof for all the other cases is identical to the proof of Theorem~\ref{thm:metric_PoA} for \MNCG, where  $\sigma \leq \frac{\alpha+2}{2}$ . 

We prove the claim by showing that $w(u,v) \leq \frac{\alpha+2}{2}\cdot d_H(u,v) \leq \frac{\alpha+2}{2}\cdot d_{\OPT}(u,v)$. The proof is by contradiction. W.l.o.g., we assume that $u$ is the owner of the edge $(u,v)$. Let $P$ be a shortest path between $u$ and $v$ in $H$. Since $w(u,v) > d_H(u,v)$, $P$ contains two or more edges. Furthermore, some edges of $P$ are missing from $G$, as otherwise player $u$ would remove the edge $(u,v)$ without increasing the distance towards $v$ and thus improving on her cost function. Let $k$ be the number of edges of $P$ that are not in $G$. We consider the edges of $P$ in order from $u$ to $v$ and we denote by $(x_i,y_i)$ the $i$-th edge of $P$ that is not in $G$. We observe that $x_1$ might be equal to $u$.

First, we prove that $d_G(u,y_i) \leq d_G(u,x_i)+\frac{\alpha+2}{2}w(x_i,y_i)$, for every $i=1,\dots,k$. We divide the proof into two cases, according to whether $x_i=u$ or not.

We consider the case in which $x_i \neq u$. We observe that this case definitely occurs if $i>1$. The proof is by contradiction. For the sake of contradiction, we assume that $d_G(u,y_i) > d_G(u,x_i)+\frac{\alpha+2}{2}w(x_i,y_i)$. By the triangle inequality, $d_G(u,y_i) \leq d_G(u,x_i)+d_G(x_i,y_i)$. As a consequence, $d_G(x_i,y_i) > \frac{\alpha+2}{2}w(x_i,y_i)$. Since $G$ is a NE, player $y_i$ does not have incentive in buying the edge towards $x_i$. Therefore, $\alpha w(x_i,y_i)+d_{G+(x_i,y_i)}(x_i,y_i)+d_{G+(x_i,y_i)}(y_i,u) \geq d_G(x_i,y_i)+d_G(y_i,u)$, i.e., 
$\alpha w(x_i,y_i)+2w(x_i,y_i)+d_{G}(x_i,u) \geq d_G(x_i,y_i)+d_G(y_i,u)>\frac{\alpha+2}{2}w(x_i,y_i)+d_G(u,x_i)+\frac{\alpha+2}{2}w(x_i,y_i)$, i.e., $0 > 0$, a contradiction.

We consider the case in which $x_i=u$. We observe that this case might occur only if $i=1$. Since $G$ is a NE, player $u$ has no incentive in swapping the edge $(u,v)$ with the edge $(u,y_1)$. If we denote by $G'$ the graph obtained from $G$ and in which player $u$ swaps the edge $(u,v)$ with the edge $(u,y_1)$, we get $\alpha \cdot w(u,y_1)+d_{G'}(u,y_1)+d_{G'}(u,v) \geq \alpha w(u,v)+d_G(u,y_1)+d_G(u,v)$. By substituting $d_{G'}(u,y_1)=w(u,y_1)$, $d_{G'}(u,v) \leq w(u,y_1)+d_{G-(u,v)}(y_1,v)$, and $d_G(u,v)=w(u,v)$ and simplifying we have $(\alpha+1)w(u,v) \leq (\alpha+1)d_H(y_1,u)+d_{G-(u,v)}(y_1,v)$. If $d_{G-(u,v)}(y_1,v) = d_G(y_1,v)$, then, since $d_G(y_1,v)\leq d_G(y_1,u)+w(u,v)$, $\alpha\cdot w(u,v) \leq \alpha\cdot d_H(y_1,u) \leq \alpha\cdot d_H(u,v)$, i.e., $w(u,v)=d_H(u,v) \leq \frac{\alpha+2}{2}\cdot d_H(u,v)$, thus contradicting the fact that $w(u,v) > \frac{\alpha+2}{2}\cdot d_H(u,v)$. Therefore, we can assume $d_{G-(u,v)}(y_1,v) > d_G(y_1,v)$. As a consequence, $d_G(y_1,v)=d_G(y_1,u)+w(u,v)$. Since $G$ is a NE, player $y_1$ has no incentive in buying the edge towards $u$. If we denote by $G''$ the graph obtained from $G$ and in which player $y_1$ buys the edge $(u,y_1)$, then $\alpha \cdot w(u,y_1)+d_{G''}(y_1,u)+d_{G''}(y_1,v) \geq d_G(y_1,u)+d_G(y_1,v)$, i.e., $\alpha \cdot  w(u,y_1)+ w(u,y_1)+w(u,y_1)+w(u,v) \geq d_G(y_1,u)+d_G(y_1,u)+w(u,v)$, i.e., $d_G(y_1,u) \leq \frac{\alpha+2}{2}w(y_1,u)$.

Now, using induction on $i$, we prove that $d_G(u,y_i) \leq \frac{\alpha+2}{2}\cdot d_P(u,y_i)$. For the base case, clearly $d_G(u,x_1) = d_P(u,x_1)$ since no edge of $P$ between $u$ and $x_i$ is missing from $G$. Therefore, using the triangle inequality, we obtain $d_G(u,y_1) \leq d_G(u,x_1)+d_G(x_1,y_1) \leq d_P(u,x_1)+\frac{\alpha+2}{2}w(x_1,y_1) \leq \frac{\alpha+2}{2}d_P(u,y_1)$. For the inductive case, by induction, we have that $d_G(u,y_{i-1}) \leq \frac{\alpha+2}{2}d_P(u,y_{i-1})$. Since no edge of $P$ between $y_{i-1}$ and $x_i$ is missing from $G$, we have that $d_G(u,x_i) \leq d_G(u,y_{i-1})+d_G(y_{i-1},x_i) \leq \frac{\alpha+2}{2}d_P(u,y_{i-1}) + d_P(y_{i-1},x_i) \leq \frac{\alpha+2}{2} d_P(u,x_i)$. Therefore, using the triangle inequality, we have that $d_G(u,y_i) \leq d_G(u,x_i)+d_G(x_i,y_i) \leq d_P(u,x_i)+\frac{\alpha+2}{2}w(x_i,y_i) \leq \frac{\alpha+2}{2}d_P(u,y_i)$.

To show that $d_G(u,v) \leq \frac{\alpha+2}{2}d_H(u,v)$, we observe that $w(u,v) \leq d_G(u,v)$, as otherwise player $u$ would remove the edge towards $v$ thus improving on her cost function. Therefore, by the triangle inequality and since no edge of $P$ between $y_k$ and $v$ is missing from $G$, we obtain $w(u,v) \leq d_G(u,v) \leq d_G(u,y_k)+d_P(y_k,v) \leq \frac{\alpha+2}{2}d_P(u,y_k) + d_P(y_k,v) \leq \frac{\alpha+2}{2}d_H(u,v)$. 
\end{proof}

The proof technique used in Theorem~\ref{thm_general_PoA} cannot lead to a better bound on the PoA. Consider, as a host graph, a cycle of three edges of weight 0, 1, and $(\alpha+2)/2$. The social optimum is given by the path containing the two edges of weight 0 and 1, respectively; a NE is given by the path containing the edges of weight 0 and $(\alpha+2)/2$, respectively. Let $u$ and $v$ be the two endvertices of the edge of weight $(\alpha+2)/2$. The value $\sigma$, so as defined in the proof of Theorem~\ref{thm_general_PoA}, is exactly equal to $\big(\frac{\alpha+2}{2}\big)^2$. However, if we compute the ratio between the cost of the defined NE and the cost of the social optimum, we obtain the value $(\alpha+2)/2$ that coincides with the PoA for \MNCG (see Theorem~\ref{thm:metric_PoA}). Therefore, we close this section with the following very interesting conjecture, stating that the PoA for \GNCG should be the same as the PoA for \MNCG.
\begin{conjecture}\label{conj_PoA}
The PoA for the \GNCG is $\frac{\alpha+2}{2}$.
\end{conjecture}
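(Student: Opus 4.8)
The lower bound $\mathrm{PoA}\ge \frac{\alpha+2}{2}-\varepsilon$ is already available, since tree metrics form a special case of general weighted host graphs and Theorem~\ref{thm:LB_PoA_tree_metric} applies verbatim; so the plan concerns only the matching upper bound, i.e.\ improving Theorem~\ref{thm_general_PoA} from $\big(\tfrac{\alpha+2}{2}\big)^2$ down to $\tfrac{\alpha+2}{2}$.

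I would keep the per-pair quantity used in the proof of Theorem~\ref{thm_general_PoA}: for an unordered pair $(u,v)$ set $\phi(u,v):=\alpha w(u,v)x+2d_G(u,v)-\tfrac{\alpha+2}{2}\big(\alpha w(u,v)x^*+2d_{\OPT}(u,v)\big)$, so that $\sum_{(u,v)}\phi(u,v)=cost(G)-\tfrac{\alpha+2}{2}cost(\OPT)$ and it suffices to show this sum is $\le 0$. As shown there, $\phi(u,v)\le 0$ for every pair except a \emph{slack} NE edge $(u,v)\in E(G)\setminus E(\OPT)$ with $w(u,v)>d_H(u,v)$, where, using $d_G(u,v)=w(u,v)$ (which holds for every edge of a NE), one only gets $\phi(u,v)=(\alpha+2)\big(w(u,v)-d_{\OPT}(u,v)\big)\le(\alpha+2)\big(w(u,v)-d_H(u,v)\big)$, which may be positive. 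The cycle example with edge weights $0,1,\tfrac{\alpha+2}{2}$ after Theorem~\ref{thm_general_PoA} shows that these positive terms genuinely must be cancelled against negative slack coming from other pairs, so no purely per-pair bound can beat $\big(\tfrac{\alpha+2}{2}\big)^2$ and the argument has to be global.

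For each slack edge $e=(u,v)$, owned w.l.o.g.\ by $u$, fix a shortest $u$-$v$ path $P=(u=z_0,z_1,\dots,z_\ell=v)$ in $H$, with $\ell\ge 2$. I would combine (i) the NE condition that $u$ does not delete $e$, which --- exactly as in the lemma inside the proof of Theorem~\ref{thm_general_PoA} --- yields $\alpha w(u,v)\le |V_e|\cdot\big(d_{G-e}(u,v)-w(u,v)\big)$, where $V_e$ is the set of nodes whose shortest $u$-path in $G$ starts with $e$; (ii) the NE conditions that $u$ does not swap $e$ for $(u,z_i)$ and does not buy $(u,z_i)$; and (iii) the NE conditions that each intermediate agent $z_i$ does not buy the missing host edges of $P$. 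Every node $z\in V_e$ contributes $2d_G(u,z)=2w(u,v)+2d_G(v,z)$ to the numerator of the pair $(u,z)$ while its denominator $2d_{\OPT}(u,z)$ is comparatively small, so $\phi(u,z)$ is strictly negative; the plan is to redistribute the positive $\phi(u,v)$ as charges spread over the pairs $\{(u,z):z\in V_e\}$ and over the missing host edges of $P$, telescoping the swap and buy inequalities along $P$, so that the charge absorbed by each node and by each edge of $\OPT$ never exceeds the negative slack it offers.

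The main obstacle is twofold. First, the deletion and swap inequalities are only strong when the detour length $d_{G-e}(u,v)$ is under control, but deleting the single edge $e$ destroys the $(\alpha+1)$-spanner property of Lemma~\ref{lem:NE_spanner}, so the alternative $u$-$v$ route in $G-e$ can be long and the NE inequality correspondingly weak; a new structural argument is needed to certify that either $|V_e|$ is large enough that $e$ is amply paid for by distance savings lying strictly below the $\tfrac{\alpha+2}{2}$ ratio, or the detour is short. Second, a node $z$ can lie in $V_e$ for several slack edges at once and a single edge of $\OPT$ can be reused by many such detours, so the charging must be made conflict-free --- plausibly by routing charges along the shortest-path trees of $G$ rooted at the owners and bounding the multiplicity with which each OPT edge or distance term is hit. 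Settling both points is precisely what closes the gap between $\big(\tfrac{\alpha+2}{2}\big)^2$ and $\tfrac{\alpha+2}{2}$, which is why the statement appears here only as a conjecture.
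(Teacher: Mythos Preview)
The statement is a \emph{conjecture} in the paper and is not proved there; the paper explicitly leaves it open after observing that the per-pair technique of Theorem~\ref{thm_general_PoA} cannot be pushed below $\big(\tfrac{\alpha+2}{2}\big)^2$. So there is no proof in the paper to compare against.

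Your write-up is not a proof either, and you say so yourself in the last sentence. What you have is a research plan: keep the additive reformulation $\sum_{(u,v)}\phi(u,v)\le 0$, isolate the only positive contributions (slack NE edges $e=(u,v)\in E(G)\setminus E(\OPT)$ with $w(u,v)>d_H(u,v)$), and try to discharge them against negative $\phi$-slack at the pairs $(u,z)$ with $z\in V_e$ and along the missing host-path edges, using the deletion/swap/buy inequalities at $u$ and at the intermediate vertices $z_i$. You then correctly name the two real obstacles: (i) controlling the detour length $d_{G-e}(u,v)$ once the spanner guarantee of Lemma~\ref{lem:NE_spanner} is lost, and (ii) making the charging conflict-free when a vertex lies in $V_e$ for many slack edges and an $\OPT$ edge is reused by many detours. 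These are exactly the points where the argument is currently stuck, and your diagnosis matches the paper's remark after Theorem~\ref{thm_general_PoA} that the cycle example forces any successful argument to be global rather than per-pair.

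Two small technical cautions on the outline itself. First, the inequality you attribute to the deletion test, $\alpha w(u,v)\le |V_e|\big(d_{G-e}(u,v)-w(u,v)\big)$, requires the per-node distance increase to be bounded by $d_{G-e}(u,v)-w(u,v)$; this uses $d_{G-e}(v,z)=d_G(v,z)$ for $z\in V_e$, which is not automatic if some shortest $v$--$z$ path in $G$ happens to traverse $e$, so you should fix a shortest-path tree from $u$ to make that step clean. Second, your claim that $\phi(u,z)$ is ``strictly negative'' for every $z\in V_e$ is not guaranteed: nothing rules out that $(u,z)$ is itself a slack NE edge or that $d_{\OPT}(u,z)$ is large, so the available negative slack per node may be zero. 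Any charging scheme will have to account for this, which feeds back into obstacle~(ii).

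In short: your proposal is an accurate map of where the difficulty lies, not a proof, and that is the correct status for Conjecture~\ref{conj_PoA}.
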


\section{Conclusion}
In this paper we have analyzed the well-known Network Creation Game on weighted complete host graphs. We think this is a significant step towards a more realistic game-theoretic model for the decentralized creation of networks, like fiber-optic or overlay networks. We showed that the weighted version of these games behaves similarly to the unit-weight NCG in terms of the hardness of computing a best response and in its dynamic properties. However, the Price of Anarchy is radically different. Whereas in the original NCG the PoA is conjectured to be constant and actually proven to be constant for almost all $\alpha$, we have shown that the PoA, even for the restricted metric case of the \TNCG, is linear in $\alpha$. Since $\alpha$ is a parameter for adjusting the trade-off between edge cost and distance cost, this implies that for settings where the edge cost dominates, i.e. $\alpha$ is high, coordination is needed to guarantee socially efficient outcomes. 

For understanding the impact of coordination, the next step should be to analyze the Price of Stability, i.e., the social cost ratio of the best equilibrium network and the social optimum. Another challenging task is to prove or refute that pure Nash equilibria always exist and to find a way to guide the agents to stable states with preferably low social cost. Besides this, naturally our conjectures, most prominently Conjecture~\ref{conj_PoA}, call for further investigation.   

\section{Acknowledgment}
We thank our anonymous reviewers for their valuable suggestions. This work has been partly supported by COST Action CA16228 European Network for Game Theory (GAMENET).

%\newpage

\bibliographystyle{abbrv}
\bibliography{geometric_NCG}

\end{document}